\documentclass{CSML}
\pdfoutput=1

% LMCS Layouting Macros
\usepackage{lastpage}

\lmcsheading{}{1--\pageref{LastPage}}{}{}%
{May~31, 2016}{Jan.~30, 2018}{}

%%%%%%%%%%
% \usepackage{enumerate}
\usepackage{hyperref}
\hypersetup{hidelinks}
\usepackage{amssymb}
\usepackage{bussproofs}
\usepackage{stmaryrd}
\usepackage{fancyvrb,courier}
\usepackage{tabularx}

\usepackage{mdframed}

\usepackage{xargs}

\usepackage{macros}
%%%%%%%%%%

%%%%%%%%%%%%%%%%%%%%%%%%%%%%%%%%%%%%%%%%%%%%%%%%%%%%%%%%%%%%%%%%%%%%%%%%%%%

\begin{document}

\title[Logical relations for coherence of effect subtyping]
      {Logical relations for coherence of effect subtyping}

\author[Biernacki]{Dariusz Biernacki}
\address{Institute of Computer Science, University of
  Wroc{\l}aw\\ Joliot-Curie 15, 50-383 Wroc{\l}aw, Poland}
\email{dabi@cs.uni.wroc.pl}

\author[Polesiuk]{Piotr Polesiuk}
\address{\vskip-6pt}
\email{ppolesiuk@cs.uni.wroc.pl}

\keywords{type system, coherence of subtyping, logical relation,
  control effect, continuation-passing style}
\subjclass{D.3.3 Language Constructs and Features, F.3.3 Studies of
  Program Constructs}

\titlecomment{An extended and revised version
  of~\cite{Biernacki-Polesiuk:TLCA15}. This work has been supported by
  National Science Centre, Poland, grant no. 2011/03/B/ST6/00348 and
  grant no. 2014/15/B/ST6/00619.}

%%%%%%%%%%%%%%%%%%%%%%%%%%%%%%%%%%%%%%%%%%%%%%%%%%%%%%%%%%%%%%%%%%%%%%%%%%%

\begin{abstract}
  A coercion semantics of a programming language with subtyping is
  typically defined on typing derivations rather than on typing
  judgments. To avoid semantic ambiguity, such a semantics is expected
  to be coherent, i.e., independent of the typing derivation for a
  given typing judgment. In this article we present heterogeneous,
  biorthogonal, step-indexed logical relations for establishing the
  coherence of coercion semantics of programming languages with
  subtyping. To illustrate the effectiveness of the proof method, we
  develop a proof of coherence of a type-directed, selective CPS
  translation from a typed call-by-value lambda calculus with
  delimited continuations and control-effect subtyping. The article is
  accompanied by a Coq formalization that relies on a novel shallow
  embedding of a logic for reasoning about step-indexing.
\end{abstract}

\maketitle

%%%%%%%%%%%%%%%%%%%%%%%%%%%%%%%%%%%%%%%%%%%%%%%%%
\section{Introduction}\label{sec:intro}

Programming languages that allow for subtyping, i.e., a mechanism
facilitating coercions of expressions of one type to another, are
usually given either a subset semantics, where one type is considered
a subset of another type, or a coercion semantics, where expressions
are explicitly converted from one type to another. In the presence of
subtyping, typing derivations depend on the occurrences of the
subtyping judgments and, therefore, typing judgments do not have
unique typing derivations. Consequently, a coercion semantics that
interprets subtyping judgments by introducing explicit type coercions
is defined on typing derivations rather than on typing judgments. But
then a natural question arises as to whether such a semantics is
coherent, i.e., whether it does not depend on the typing derivation.

The problem of coherence has been considered in a variety of typed
lambda calculi. Reynolds proved the coherence of the denotational
semantics for intersection types in the category-theoretic
setting~\cite{Reynolds:TACS91}. Breazu-Tannen et al. proved the
coherence of a coercion translation from the lambda calculus with
polymorphic, recursive and sum types to system F, by showing that any
two derivations of the same judgment are translated to provably equal
terms in the target calculus~\cite{Breazu-Tannen-al:IaC91}. Curien and
Ghelli introduced a translation from system F$_\leq$ to a calculus
with explicit coercions and showed that any two derivations of the
same judgment are translated to terms that are normalizable to a
unique normal form~\cite{Curien-Ghelli:MSCS92}. Finally, Schwinghammer
followed Curien and Ghelli's approach to prove the coherence of
coercion translation from Moggi's computational lambda calculus with
subtyping, except that he normalizes derivations in a
semantics-preserving way, rather than terms in a dedicated calculus of
coercions~\cite{Schwinghammer:JFP09}. Schwinghammer's presentation is
akin to Mitchell's for the simply-typed lambda
calculus~\cite[Chapter~10]{Mitchell:96}.

The results listed above fall into two categories: those that hinge on
the existence of a common subtype of two different types for the same
term~\cite{Reynolds:TACS91,Breazu-Tannen-al:IaC91}, and those that
rely on finding a normal form for a representation of the derivation
and hinge on showing that such normal forms are unique for a given
typing
judgment~\cite{Curien-Ghelli:MSCS92,Schwinghammer:JFP09,Mitchell:96}.
When the source calculus under consideration is presented in the
spirit of the lambda calculus {\`a} la Church, i.e., the lambda
abstractions are type annotated, as is the case in all the
aforementioned articles that follow the normalization-based approach,
the term and the typing context indeed determine the shape of the
normal derivation (modulo a top level coercion that depends on the
type of the term)~\cite[Chapter~10]{Mitchell:96}. However, in calculi
{\`a} la Curry this is no longer the case and the method cannot be
directly applied. Still, if the calculus is at least weakly
normalizing, one can hope to recover the uniqueness property for
normal typing derivations for source terms in normal form, assuming
that term normalization preserves the coercion semantics. For
instance, in the simply typed $\lambda$-calculus the typing context
uniquely determines the type of the term in the function position in
applications building a $\beta$-normal form, and, hence, derivations
in normal form for such terms are unique. This line of reasoning
cannot be used when the calculus includes recursion. Similarly, the
lambda calculus {\`a} la Curry (and other systems extending it) does
not in general satisfy the property of common subtype.

In this article, we consider the coherence problem in calculi where
none of the existing techniques can be directly applied. The coercion
semantics we study translate typing derivations in the source calculus
to a corresponding target calculus with explicit type coercions (that
in some cases can be further replaced with equivalent lambda-term
representations) and our criterion for coherence of the translation is
contextual equivalence~\cite{JHMorris:PhD} in the target calculus.

The main result of this work is a construction of logical relations
for establishing such a notion of coherence of coercion semantics,
applicable in a variety of calculi. In particular, we address the
problem of coherence of a type-directed CPS (continuation-passing
style) translation from the call-by-value $\lambda$-calculus with
delimited-control operators and control-effect subtyping introduced by
Materzok and the first author~\cite{Materzok-Biernacki:ICFP11},
extended with recursion. While the translation for the calculus with
explicit type annotations has been shown to be coherent in terms of an
equational theory in a target calculus~\cite{Materzok:CSL13}, no CPS
coercion translation for the original version, let alone extended with
recursion, has been proven coherent.

The reasons why coherence in this calculus is important are
twofold. First of all, it is very expressive and therefore interesting
from the theoretical point of view. In particular, the calculus has
been shown to generalize the canonical type-and-effect system for
Danvy and Filinski's $\mathsf{shift}$ and $\mathsf{reset}$ control
operators~\cite{Danvy-Filinski:DIKU89,Danvy-Filinski:LFP90}, and,
furthermore, that it is strictly more expressive than the CPS
hierarchy of Danvy and
Filinski~\cite{Materzok-Biernacki:APLAS12}. These results heavily rely
on the effect subtyping relation that, e.g., allows to coerce pure
expressions (i.e., control-effect free) to effectful ones. From a more
practical point of view, the selective CPS translation, that leaves
pure expressions in direct style and introduces explicit coercions to
interpret effect subtyping in the source calculus, is a good candidate
for embedding the control operators in an existing programming
language, such as Scala~\cite{Rompf-al:ICFP09}.

In order to deal with the complexity of the source calculus and of the
translation itself, we introduce binary logical relations on terms of
the target calculus that are: heterogeneous,
biorthogonal~\cite{Krivine:APAL94,Pitts-Stark:HOOTS98,Dreyer-al:JFP12},
and
step-indexed~\cite{Appel-McAllester:TOPLAS01,Ahmed:ESOP06,Ahmed-al:POPL09}.
Heterogeneity allows us to relate terms of different types, and in
particular those in continuation-passing style with those in direct
style. This is a crucial property, since the same term can have a pure
type, resulting in a direct-style term through the translation and
another, impure type, resulting in a term in continuation-passing
style. Relating such terms requires quantification over types and to
assure well-foundedness of the construction, we need to use
step-indexing, which also supports reasoning about recursion, even if
not in a critical way. We follow Dreyer et al.~\cite{Dreyer-al:LMCS11}
in using logical step-indexed logical relations in our presentation of
step-indexing. Biorthogonality, by imposing a particular order of
evaluation on expressions, simplifies the construction of the logical
relations. It also facilitates reasoning about continuations
represented as evaluation contexts.

Apart from the calculus with effect subtyping, we have used the ideas
presented in this article to show coherence of subtyping in several
other calculi, including the simply typed lambda calculus with
subtyping~\cite[Chapter~10]{Mitchell:96} extended with recursion, the
calculus of intersection types~\cite{Reynolds:TACS91}, and the lambda
calculus with subtyping and the control operator~$\mathsf{call/cc}$.

The article is accompanied by a Coq development containing a library
IxFree that provides a new shallow embedding of the logic for
reasoning about step-indexed logical relations, and a complete
formalization of the proofs presented in the rest of the article. The
code is available
at~\href{https://bitbucket.org/pl-uwr/coherence-logrel}{https://bitbucket.org/pl-uwr/coherence-logrel}.

The rest of this article is structured as follows. In
Section~\ref{sec:stepIndexedLogicalRelations}, we briefly present
Dreyer et al.'s logic for reasoning about step
indexing~\cite{Dreyer-al:LMCS11} on which we base our presentation. In
Section~\ref{sec:introducingLogicalRelations}, we introduce the
construction of the logical relations in a simple yet sufficiently
interesting scenario---the simply typed lambda calculus {\`a} la Curry
with natural numbers, type $\ttop$, general recursion and standard
subtyping. The goal of this section is to introduce the basic
ingredients of the proof method before embarking on a considerably
more challenging journey in the subsequent section. In
Section~\ref{sec:coherenceEffectSubtyping}, we present the main result
of the article---the logical relations for establishing the coherence
of the CPS translation from the calculus of delimited control with
effect subtyping. In Section~\ref{sec:formalization}, we describe the
main ideas behind our Coq formalization. In
Section~\ref{sec:conclusion}, we summarize the article.

%%%%%%%%%%%%%%%%%%%%%%%%%%%%%%%%%%%%%%%%%%%%%%%%%
\section{Reasoning about step-indexed logical relations}
\label{sec:stepIndexedLogicalRelations}

Step-indexed logical
relations~\cite{Appel-McAllester:TOPLAS01,Ahmed:ESOP06,Ahmed-al:POPL09}
are a powerful tool for reasoning about programming languages. Instead
of describing a general behavior of program execution, they focus on
the first $n$ computation steps, where the step index $n$ is an
additional parameter of the relation. This additional parameter makes
it possible to define logical relations inductively not only on the
structure of types, but also on the number of computation steps that
are allowed for a program to make and, therefore, they provide an
elegant way to reason about features that introduce non-termination to
the programming language, including recursive
types~\cite{Ahmed:ESOP06} and references~\cite{Ahmed-al:POPL09}.

However, reasoning directly about step-indexed logical relations is
tedious because proofs become obscured by step-index arithmetic.
Dreyer et al. \cite{Dreyer-al:LMCS11} proposed logical step-indexed
logical relations (LSLR) to avoid this problem.  The LSLR logic is an
intuitionistic logic for reasoning about one particular Kripke model:
where possible worlds are natural numbers (step-indices) and where
future worlds have smaller indices than the present one. All formulas
are interpreted as monotone (non-increasing) sequences of truth
values, whereas the connectives are interpreted as usual. In
particular, in the case of implication we quantify over all future
worlds to ensure monotonicity, so the formula $\varphi\Rightarrow\psi$
is valid at index $n$ (written $n\models\varphi\Rightarrow\psi$) iff
$k\models\varphi$ implies $k\models\psi$ for every $k\leq n$.  In
contrast to Dreyer et al. we do not assume that all formulas are valid
in world $0$, because it is not necessary.

The LSLR logic is also equipped with a modal operator $\later$
(later), to provide access to strictly future worlds. The formula
$\later\varphi$ means \emph{$\varphi$ holds in any future world}, or
formally $\later\varphi$ is always valid at world $0$, and
$n+1\models\later\varphi$ iff $\varphi$ is valid at $n$ (and other
future worlds by monotonicity).  The later operator comes with two
inference rules:

\vspace{2mm}
\begin{center}
\AxiomC{$\Gamma,\Sigma\vdash\varphi$}
\rulelabel{$\later$-intro}
\UnaryInfC{$\Gamma,\later\Sigma\vdash\later\varphi$}
\DisplayProof
\qquad
\AxiomC{$\Gamma,\later\varphi\vdash\varphi$}
\rulelabel{L{\"o}b}
\UnaryInfC{$\Gamma\vdash\varphi$}
\DisplayProof
\end{center}

\vspace{2mm}\noindent
The first rule allows one to shift reasoning to a future world, making
the assumptions about the future world available.  The L{\"o}b rule
expresses an induction principle for indices. Note that the premise of
the rule also captures the base case, because the assumption
$\later\varphi$ is trivial in the world $0$. The later operator comes
with no general elimination rule.

Predicates in LSLR logic as well as step-indexed logical relations can
be defined inductively on indices.  More generally, we can define a
recursive predicate $\mu{}r.\varphi(r)$, provided all occurrences of
$r$ in $\varphi$ are guarded by the later operator, to guarantee
well-foundedness of the definition. For the sake of readability, in
this paper we define recursive predicates and relations by giving a
set of clauses instead of using the $\mu$ operator.

Since the logic is developed for reasoning about one particular model,
we can freely add new inference rules for the logic if we prove they
are valid in the model.  We can also add new relations or predicates
to the logic if we provide their monotone interpretation.  In
particular, constant functions are monotone, so we can safely use
predicates defined outside of the logic, such as typing or reduction
relations.

\begin{figure}[!!t]
  \begin{mdframed}
  \begin{tabularx}{\textwidth}{rcXr}
    $\tau$ & $\bnfdef$ & $\tnat \bnfor \ttop \bnfor \tarrow{\tau}{\tau}$
    & (types) \\[1mm]
    $e$ & $\bnfdef$ & $\var{x} \bnfor \lam{x}{e} \bnfor \app{e}{e}
    \bnfor \fix{x}{x}{e} \bnfor \natconst{n}$
    & (expressions)
  \end{tabularx}
  \vspace{4mm}
  \hrule
  \vspace{4mm}
      {\begin{center}
          \AxiomC{\phantomPremise}
          \rulelabel{\ruleSourceSRefl}
          \UnaryInfC{$\jsubtype{\tau}{\tau}$}
          \DisplayProof
          \quad
          \AxiomC{$\jsubtype{\tau_2}{\tau_3}$}
          \AxiomC{$\jsubtype{\tau_1}{\tau_2}$}
          \rulelabel{\ruleSourceSTrans}
          \BinaryInfC{$\jsubtype{\tau_1}{\tau_3}$}
          \DisplayProof
          \quad
          \AxiomC{\phantomPremise}
          \rulelabel{\ruleSourceSTop}
          \UnaryInfC{$\jsubtype{\tau}{\ttop}$}
          \DisplayProof
          \\[3mm]
          \AxiomC{$\jsubtype{\tau_2'}{\tau_1'}$}
          \AxiomC{$\jsubtype{\tau_1}{\tau_2}$}
          \rulelabel{\ruleSourceSArrow}
          \BinaryInfC{$\jsubtype{(\tarrow{\tau_1'}{\tau_1})}{
              (\tarrow{\tau_2'}{\tau_2})}$}
          \DisplayProof
          \quad
          \AxiomC{$(x:\tau)\in\Gamma$}
          \rulelabel{\ruleSourceTVar}
          \UnaryInfC{$\jtyping{\Gamma}{\var{x}}{\tau}$}
          \DisplayProof
          \quad
          \AxiomC{$\jtyping{\eextend{\Gamma}{x}{\tau_1}}{e}{\tau_2}$}
          \rulelabel{\ruleSourceTAbs}
          \UnaryInfC{$\jtyping{\Gamma}{\lam{x}{e}}{\tarrow{\tau_1}{\tau_2}}$}
          \DisplayProof
          \\[3mm]
          \AxiomC{$\jtyping{\Gamma}{e_1}{\tarrow{\tau_2}{\tau_1}}$}
          \AxiomC{$\jtyping{\Gamma}{e_2}{\tau_2}$}
          \rulelabel{\ruleSourceTApp}
          \BinaryInfC{$\jtyping{\Gamma}{\app{e_1}{e_2}}{\tau_1}$}
          \DisplayProof
          \quad
          \AxiomC{$\jtyping{
              \eextend{\eextend{\Gamma}{f}{\tarrow{\tau_1}{\tau_2}}}{x}{\tau_1}
            }{e}{\tau_2}$}
          \rulelabel{\ruleSourceTFix}
          \UnaryInfC{$\jtyping{\Gamma}{\fix{f}{x}{e}}{\tarrow{\tau_1}{\tau_2}}$}
          \DisplayProof
          \\[5mm]
          \AxiomC{\phantomPremise}
          \rulelabel{\ruleSourceTConst}
          \UnaryInfC{$\jtyping{\Gamma}{\natconst{n}}{\tnat}$}
          \DisplayProof
          \quad
          \AxiomC{$\jtyping{\Gamma}{e}{\tau}$}
          \AxiomC{$\jsubtype{\tau}{\tau'}$}
          \rulelabel{\ruleSourceTSub}
          \BinaryInfC{$\jtyping{\Gamma}{e}{\tau'}$}
          \DisplayProof
      \end{center}}
  \end{mdframed}
      \caption{\label{fig:stlcSource}The source
        language---the $\lambda$-calculus with subtyping}
\end{figure}

%%%%%%%%%%%%%%%%%%%%%%%%%%%%%%%%%%%%%%%%%%%%%%%%%
\section{Introducing the logical relations}
\label{sec:introducingLogicalRelations}

In this section we prove the coherence of subtyping in the
simply-typed call-by-value lambda calculus extended with recursion,
where the coercion semantics is given by a standard translation to the
simply-typed lambda calculus with explicit
coercions~\cite{Curien-Ghelli:MSCS92}. Our goal here is to introduce
the proof method in a simple scenario, so that in
Section~\ref{sec:coherenceEffectSubtyping} we can focus on issues
specific to control effects. The logical relations we present in this
section are biorthogonal and step-indexed, which is not strictly
necessary but it makes the development more elegant. Furthermore,
biorthogonality and step-indexing become crucial in handling more
complicated calculi such as the one of
Section~\ref{sec:coherenceEffectSubtyping} and, therefore, are
essential for the method to scale.

%%%%%%%%%%%%%%%%%%%%%%%%
\subsection{The simply-typed lambda calculus with subtyping}
\label{subsec:STLC}

The syntax and typing rules for the source language are given in
Figure~\ref{fig:stlcSource}. The language is the simply-typed lambda
calculus extended with recursive functions ($\fix{f}{x}{e}$) and
natural numbers ($\natconst{n}$). For clarity of the presentation we
do not consider any primitive operations on natural numbers, but they
could be seamlessly added to the language. Extending the language with
additional basic types is a little bit more subtle, as discussed in
Section~\ref{subsubsec:multiBaseTypes}. We include the type $\ttop$,
to make the subtyping relation interesting. The typing and subtyping
rules are standard~\cite[Chapter~10]{Mitchell:96}, where the typing
environment $\Gamma$ associates variables with their types and is
represented as a finite set of such pairs, noted $(x:\tau)$. In the
rest of the article we assume the standard notions and conventions
concerning variable binding and $\alpha$-conversion of
terms~\cite{Barendregt:84}.

%%%%%%%%%%%%%%%%%%%%%%%%
\subsection{Coercion semantics}
\label{subsec:coercionSemanticsSTLC}

The semantics of the source language is given by a translation of the
typing derivations to a target language that extends the source
language with explicit type coercions (and replaces $\ttop$ with
$\tunit$).

%%%%%%%%%%%%
\subsubsection{Target calculus}
\label{subsubsec:targetSTLC}

The coercions express conversion of a term from one type to another,
according to the subtyping relation. Figure~\ref{fig:stlcTarget}
contains syntax, typing rules and reduction rules of the target
language. The type coercions $c$ and their typing rules correspond
exactly to the subtyping rules of the source language.  The grammar of
terms contains explicit coercion application of the form
$\capp{c}{e}$.

The operational semantics of the target language takes the form of the
reduction semantics, where terms are decomposed into an evaluation
context and a redex. We use the standard notation $\inctx{E}{e}$ for
plugging the term $e$ into the context $E$, and
similarly---$\ctxcomp{E}{E'}$ for plugging the context $E'$ in $E$,
i.e., for context composition. The grammar of evaluation contexts
extends the standard call-by-value $\lambda$-calculus contexts with
contexts of the form $\crcectx{c}{E}$ that enforce evaluating an
argument of a coercion before the actual conversion takes place. It
can be shown that the reduction relation of
Figure~\ref{fig:stlcTarget} is deterministic.

\begin{figure}[!!t]
  \begin{mdframed}
    \begin{tabularx}{\textwidth}{rcXr}
      $\tau$ & $\bnfdef$ & $\tnat \bnfor \tunit \bnfor \tarrow{\tau}{\tau}$
      & (types) \\[1mm]
      $c$ & $\bnfdef$ & $\cid \bnfor \ccomp{c}{c} \bnfor \ctop \bnfor \carrow{c}{c}$
      & (coercions) \\[1mm]
      $e$ & $\bnfdef$ & $\var{x} \bnfor \lam{x}{e} \bnfor \app{e}{e}
      \bnfor \capp{c}{e} \bnfor \fix{x}{x}{e} 
      \bnfor \natconst{n}
      \bnfor \vunit$
      & (expressions) \\[1mm]
      $v$ & $\bnfdef$ & $\var{x} \bnfor \lam{x}{e} \bnfor \fix{x}{x}{e}
      \bnfor \capp{(\carrow{c}{c})}{v} \bnfor \natconst{n} \bnfor \vunit$
      & (values) \\[1mm]
      $E$ & $\bnfdef$ & $\mtectx \bnfor \argectx{E}{e} \bnfor \valectx{v}{E}
      \bnfor \crcectx{c}{E}
      $
      & (evaluation contexts)
    \end{tabularx}
    \vspace{4mm}
    \hrule
    \vspace{4mm}
           {
             \begin{center}
               \AxiomC{\phantomPremise}
               \rulelabel{\ruleTargetSRefl}
               \UnaryInfC{$\jctyping{\cid}{\tau}{\tau}$}
               \DisplayProof
               \quad\quad
               \AxiomC{$\jctyping{c_1}{\tau_2}{\tau_3}$}
               \AxiomC{$\jctyping{c_2}{\tau_1}{\tau_2}$}
               \rulelabel{\ruleTargetSTrans}
               \BinaryInfC{$\jctyping{\ccomp{c_1}{c_2}}{\tau_1}{\tau_3}$}
               \DisplayProof
               \\[5mm]
               \AxiomC{\phantomPremise}
               \rulelabel{\ruleTargetSTop}
               \UnaryInfC{$\jctyping{\ctop}{\tau}{\tunit}$}
               \DisplayProof
               \quad\quad
               \AxiomC{$\jctyping{c_1}{\tau_2'}{\tau_1'}$}
               \AxiomC{$\jctyping{c_2}{\tau_1}{\tau_2}$}
               \rulelabel{\ruleTargetSArrow}
               \BinaryInfC{$\jctyping{\carrow{c_1}{c_2}}{(\tarrow{\tau_1'}{\tau_1})}{
                   (\tarrow{\tau_2'}{\tau_2})}$}
               \DisplayProof
               \\[5mm]
               \AxiomC{$(x:\tau)\in\Gamma$}
               \rulelabel{\ruleTargetTVar}
               \UnaryInfC{$\jtyping{\Gamma}{\var{x}}{\tau}$}
               \DisplayProof
               \quad\quad
               \AxiomC{$\jtyping{\eextend{\Gamma}{x}{\tau_1}}{e}{\tau_2}$}
               \rulelabel{\ruleTargetTAbs}
               \UnaryInfC{$\jtyping{\Gamma}{\lam{x}{e}}{\tarrow{\tau_1}{\tau_2}}$}
               \DisplayProof
               \\[5mm]
               \AxiomC{$\jtyping{\Gamma}{e_1}{\tarrow{\tau_2}{\tau_1}}$}
               \AxiomC{$\jtyping{\Gamma}{e_2}{\tau_2}$}
               \rulelabel{\ruleTargetTApp}
               \BinaryInfC{$\jtyping{\Gamma}{\app{e_1}{e_2}}{\tau_1}$}
               \DisplayProof
               \quad\quad
               \AxiomC{$\jctyping{c}{\tau}{\tau'}$}
               \AxiomC{$\jtyping{\Gamma}{e}{\tau}$}
               \rulelabel{\ruleTargetTCApp}
               \BinaryInfC{$\jtyping{\Gamma}{\capp{c}{e}}{\tau'}$}
               \DisplayProof
               \\[5mm]
               \AxiomC{$\jtyping{
                   \eextend{\eextend{\Gamma}{f}{\tarrow{\tau_1}{\tau_2}}}{x}{\tau_1}
                 }{e}{\tau_2}$}
               \rulelabel{\ruleTargetTFix}
               \UnaryInfC{$\jtyping{\Gamma}{\fix{f}{x}{e}}{\tarrow{\tau_1}{\tau_2}}$}
               \DisplayProof
               \quad\quad
               \AxiomC{\phantomPremise}
               \rulelabel{\ruleTargetTConst}
               \UnaryInfC{$\jtyping{\Gamma}{\natconst{n}}{\tnat}$}
               \DisplayProof
               \\[5mm]
               \AxiomC{\phantomPremise}
               \rulelabel{\ruleTargetTTop}
               \UnaryInfC{$\jtyping{\Gamma}{\vunit}{\tunit}$}
               \DisplayProof
           \end{center}}
           \vspace{4mm}
           \hrule
           \vspace{4mm}
           \begin{center}
             \hspace{-3mm}
             \begin{tabular}{cc}
               \setlength{\tabcolsep}{1pt}
               \begin{tabular}[t]{rcl}
                 $\inctx{E}{\app{(\lam{x}{e})}{v}}$ & $\rawredi{\beta}$ & 
                 $\inctx{E}{\subst{e}{x}{v}}$ \\[2mm]
                 $\inctx{E}{\app{(\fix{f}{x}{e})}{v}}$ & $\rawredi{\beta}$ &
                 $\inctx{E}{\bisubst{e}{f}{\fix{f}{x}{e}}{x}{v}}$
               \end{tabular}
               &
               \setlength{\tabcolsep}{1pt}
               \begin{tabular}[t]{rcl}
                 $\inctx{E}{\capp{\cid}{v}}$ & $\rawredi{\iota}$ & $\inctx{E}{v}$ \\[2mm]
                 $\inctx{E}{\capp{(\ccomp{c_1}{c_2})}{v}}$ & $\rawredi{\iota}$ &
                 $\inctx{E}{\capp{c_1}{(\capp{c_2}{v})}}$ \\[2mm]
                 $\inctx{E}{\capp{\ctop}{v}}$ & $\rawredi{\iota}$ &
                 $\inctx{E}{\vunit}$ \\[2mm]
                 $\inctx{E}{\app{\capp{(\carrow{c_1}{c_2})}{v_1}}{v_2}}$ & $\rawredi{\iota}$ &
                 $\inctx{E}{\capp{c_2}{(\app{v_1}{(\capp{c_1}{v_2})})}}$
               \end{tabular}
             \end{tabular}
           \end{center}
  \end{mdframed}
  \caption{\label{fig:stlcTarget}The target language---the
    $\lambda$-calculus with explicit coercions}
\end{figure}

The semantics distinguishes between $\beta$-rules that perform actual
computations and $\iota$-rules that rearrange coercions. Both of them
are used during program evaluation. We say that program $e$ terminates
(written $\stops{e}$) when it can be reduced to a value using both
sorts of reduction rules, according to the evaluation strategy
determined by the evaluation contexts.

The principle behind the operational semantics of coercions, given by
the $\iota$-rules, is to structurally reduce complex coercions
$\ccomp{c_1}{c_2}$ and $\carrow{c_1}{c_2}$ to their subcoercions $c_1$
and $c_2$, until one of the two basic coercions $\cid$ or $\ctop$ is
reached and can be trivially applied to perform the actual conversion
of a value. We can see that the $\iota$-rules for $\ccomp{c_1}{c_2}$
and $\carrow{c_1}{c_2}$ have an administrative rather than
computational role in that erasing the coercions (defined in the
expected way~\cite{Curien-Ghelli:MSCS92}) in the redex and in the
contractum of these rules leads to the same expression. It is worth
noting that terms of the form $\capp{(\carrow{c}{c})}{v}$ are
considered values, since they represent a coercion expecting another
value as argument (witness the last $\iota$-rule).

General contexts are closed terms with one hole (possibly under some
binders), and are ranged over by the metavariable $C$. We write
$\jctxtyping{\Gamma}{C}{\tau_1}{\tau_2}$ if for any $e$ with
$\jtyping{\Gamma}{e}{\tau_1}$ we have
$\jtyping{}{\inctx{C}{e}}{\tau_2}$. Contextual approximation, written
$\jctxapprox{\Gamma}{e_1}{e_2}{\tau}$, means that for any context $C$
and type $\tau'$, such that $\jctxtyping{\Gamma}{C}{\tau}{\tau'}$ if
$\inctx{C}{e_1}$ terminates, then so does $\inctx{C}{e_2}$. If
$\jctxapprox{\Gamma}{e_1}{e_2}{\tau}$ and
$\jctxapprox{\Gamma}{e_2}{e_1}{\tau}$, then we say that $e_1$ and
$e_2$ are contextually equivalent. It is this notion of program
equivalence that we take to express coherence of the coercion
semantics and characterize with logical relations later on in this
section.

%%%%%%%%%%%%%%%%%%%%%%%%
\subsubsection{Translation}
\label{subsubsec:translationSTLC}

\begin{figure}[!!t]
  \begin{mdframed}
\[
\begin{array}[t]{rcl}
\typtrans{\tarrow{\tau_1}{\tau_2}} & = & 
\tarrow{\typtrans{\tau_1}}{\typtrans{\tau_2}}
\\[2mm]
\typtrans{\tnat} & = & \tnat 
\\[2mm]
\typtrans{\ttop} & = & \tunit 
\end{array}
\]
\vspace{2mm}
\hrule
\vspace{2mm}
\[
\begin{array}[t]{rcl}
 \subtrans{\jsubtype{\tau}{\tau}}{\ruleSourceSRefl} & = & \cid \\[2mm]
\subtrans{\jsubtype{\tau}{\ttop}}{\ruleSourceSTop} & = & \ctop \\[2mm]
\subtrans{\jsubtype{\tau_1}{\tau_3}}{\ruleSourceSTrans(D_1, D_2)} 
    & = & \ccomp{\subtrans{\jsubtype{\tau_2}{\tau_3}}{D_1}}{
        \subtrans{\jsubtype{\tau_1}{\tau_2}}{D_2}} \\[2mm]
\subtrans{\jsubtype{\tarrow{\tau'_1}{\tau_1}}{\tarrow{\tau'_2}{\tau_2}}}{
    \ruleSourceSArrow(D_1, D_2)} & = &
    \carrow{\subtrans{\jsubtype{\tau'_2}{\tau'_1}}{D_1}}{
        \subtrans{\jsubtype{\tau_1}{\tau_2}}{D_2}}
\end{array}
\]
\vspace{2mm}
\hrule
\vspace{2mm}
\[
\begin{array}[t]{rcl}
\exptrans{x}{\ruleSourceTVar} & = & \var{x} \\[2mm]
\exptrans{\lam{x}{e}}{\ruleSourceTAbs(D)} & = & \lam{x}{\exptrans{e}{D}} \\[2mm]
\exptrans{\app{e_1}{e_2}}{\ruleSourceTApp(D_1, D_2)} & = &
\app{\exptrans{e_1}{D_1}}{\exptrans{e_2}{D_2}} \\[2mm]
\exptrans{\fix{f}{x}{e}}{\ruleSourceTFix(D)} & = &
    \fix{f}{x}{\exptrans{e}{D}} \\[2mm]
\exptrans{e}{\ruleSourceTSub(D_1,D_2)} & = &
    \capp{\subtrans{\jsubtype{\tau}{\tau'}}{D_2}}{\exptrans{e}{D_1}} \\[2mm]
\exptrans{\natconst{n}}{\ruleSourceTConst} & = & \natconst{n} 
\end{array}
\]
  \end{mdframed}
\caption{\label{fig:stlcCoercionSemantics}Coercion semantics for the
  $\lambda$-calculus with subtyping}
\end{figure}

The coercion semantics of the source language is given in
Figure~\ref{fig:stlcCoercionSemantics}. The function $\subtrans{.}{.}$
translates subtyping proofs into coercions, and function
$\exptrans{.}{.}$ translates typing derivations into terms of the
target language, whereas types are translated by the function
$\typtrans{.}$, which we extend to a point-wise translation of typing
environments. Both $\subtrans{.}{.}$ and $\exptrans{.}{.}$ are defined
by structural recursion on derivation trees, where the structure of
the tree $D$ is given by the second argument, consisting of the name
of the final rule in the derivation $D$ and the immediate subtrees of
$D$. For example, in the equation
\vspace{2mm}
\[
\begin{array}[t]{rcl}
\exptrans{e}{\ruleSourceTSub(D_1,D_2)} & = &
    \capp{\subtrans{\jsubtype{\tau}{\tau'}}{D_2}}{\exptrans{e}{D_1}}
\end{array}
\]

\vspace{2mm}\noindent
$\ruleSourceTSub(D_1,D_2)$ represents the tree
\vspace{2mm}
\begin{center}
  \AxiomC{$D_1$}
  \noLine
  \UnaryInfC{$\jtyping{\Gamma}{e}{\tau}$}
  \AxiomC{$D_2$}
  \noLine
  \UnaryInfC{$\jsubtype{\tau}{\tau'}$}
  \rulelabel{\ruleSourceTSub}
  \BinaryInfC{$\jtyping{\Gamma}{e}{\tau'}$}
  \DisplayProof
\end{center}

\vspace{2mm}\noindent
The translation functions themselves are rather straightforward; their
role is to replace subtyping derivations with coercions applied to
expressions being coerced from one type to another. The soundness of
the translation functions is ensured by the following lemma.
\newpage
\begin{lem}\label{lem:stlcTransPreservesTypes}
Coercion semantics preserves types.
\begin{enumerate}
\item If $\jtree{D}{\jsubtype{\tau_1}{\tau_2}}$ then 
    $\jctyping{\subtrans{\jsubtype{\tau_1}{\tau_2}}{D}}{
    \typtrans{\tau_1}}{\typtrans{\tau_2}}$.
\vspace{1mm}
\item If $\jtree{D}{\jtyping{\Gamma}{e}{\tau}}$ then
    $\jtyping{\envtrans{\Gamma}}{\exptrans{e}{D}}{\typtrans{\tau}}$.
\end{enumerate}
\end{lem}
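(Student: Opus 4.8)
The plan is to prove the two parts by induction on the structure of the derivation tree~$D$: part~(1) by induction on subtyping derivations, and then part~(2) by induction on typing derivations, invoking part~(1) in the subsumption case. In every case one inspects the name of the final rule of~$D$, unfolds the corresponding defining clause of $\subtrans{.}{.}$ or $\exptrans{.}{.}$ from Figure~\ref{fig:stlcCoercionSemantics}, applies the induction hypothesis to the immediate subderivations, and closes the case with the matching target typing rule of Figure~\ref{fig:stlcTarget}. The only thing that needs a moment's care is how the type translation $\typtrans{.}$ interacts with the type formers and with typing environments; it is convenient to record once and for all that $\typtrans{\tarrow{\tau_1}{\tau_2}} = \tarrow{\typtrans{\tau_1}}{\typtrans{\tau_2}}$, that $\typtrans{\ttop} = \tunit$, and that $\envtrans{.}$ is pointwise, so that $\envtrans{\eextend{\Gamma}{x}{\tau}} = \eextend{\envtrans{\Gamma}}{x}{\typtrans{\tau}}$ (and likewise for the two-variable extension used by $\ruleSourceTFix$).

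For part~(1): the $\ruleSourceSRefl$ and $\ruleSourceSTop$ cases are immediate, the first from $\ruleTargetSRefl$ and the second from $\ruleTargetSTop$, where $\typtrans{\ttop} = \tunit$ is exactly what makes the codomain of the target typing match $\typtrans{\ttop}$. In the $\ruleSourceSTrans$ case, applying the induction hypothesis to the two subderivations and then $\ruleTargetSTrans$ yields the claim; the $\ruleSourceSArrow$ case is analogous via $\ruleTargetSArrow$, with the contravariant/covariant split of the premises lining up precisely because $\typtrans{.}$ commutes with the function-type constructor and because the translation clause for $\ruleSourceSArrow$ swaps the domain arguments in the same way as the target rule.

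For part~(2): the $\ruleSourceTVar$ and $\ruleSourceTConst$ cases follow directly from $\ruleTargetTVar$ (using that $(x:\tau)\in\Gamma$ implies $(x:\typtrans{\tau})\in\envtrans{\Gamma}$) and $\ruleTargetTConst$; in the $\ruleSourceTAbs$ and $\ruleSourceTFix$ cases one uses the pointwise behaviour of $\envtrans{.}$ on the extended environment, applies the induction hypothesis, and concludes with $\ruleTargetTAbs$ (resp.\ $\ruleTargetTFix$); the $\ruleSourceTApp$ case is a direct use of the induction hypothesis together with $\ruleTargetTApp$. The only case that links the two parts is $\ruleSourceTSub(D_1,D_2)$: here the induction hypothesis for part~(2) applied to~$D_1$ gives $\jtyping{\envtrans{\Gamma}}{\exptrans{e}{D_1}}{\typtrans{\tau}}$, part~(1) applied to~$D_2$ gives $\jctyping{\subtrans{\jsubtype{\tau}{\tau'}}{D_2}}{\typtrans{\tau}}{\typtrans{\tau'}}$, and $\ruleTargetTCApp$ combines the two into a derivation of $\jtyping{\envtrans{\Gamma}}{\capp{\subtrans{\jsubtype{\tau}{\tau'}}{D_2}}{\exptrans{e}{D_1}}}{\typtrans{\tau'}}$, which is exactly the term assigned to $e$ by the defining clause for $\exptrans{e}{\ruleSourceTSub(D_1,D_2)}$.

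I do not expect a genuine obstacle in this lemma: it is a type-preservation statement whose proof is mechanical by design, since the coercion typing rules and the target typing rules were set up to mirror the source subtyping and typing rules one for one. The only mild point, as already noted, is to dispatch the commutation facts for $\typtrans{.}$ and the pointwise behaviour of $\envtrans{.}$ up front, so that each individual case collapses to a single rule application; the substantive reasoning of the paper is reserved for the logical-relations construction, not for this bookkeeping.
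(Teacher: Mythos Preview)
Your proposal is correct and is exactly the routine structural induction one expects; the paper itself states the lemma without proof, treating it as evident from the one-to-one correspondence between source rules and target rules, and your argument is precisely the standard unfolding of that correspondence.
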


The following example demonstrates the translation function and the
coercions at work.

\begin{exa}
Consider the program
$\app{(\lam{f}{\app{\var{f}}{\natconst{1}}})}{(\lam{x}{\var{x}})}$
in the source language.
Let $D$ be the derivation where variable $\var{f}$
has type $\tarrow{\tnat}{\ttop}$ and the type for expression
$\lam{x}{\var{x}}$ is derived in the following way:
\vspace{2mm}
\begin{prooftree}
\AxiomC{}
\rulelabel{\ruleSourceTVar}
\UnaryInfC{$\jtyping{\var{x}:\ttop}{\var{x}}{\ttop}$}
\rulelabel{\ruleSourceTAbs}
\UnaryInfC{$\jtyping{}{\lam{x}{\var{x}}}{\tarrow{\ttop}{\ttop}}$}
  \AxiomC{}
  \rulelabel{\ruleSourceSTop}
  \UnaryInfC{$\jsubtype{\tnat}{\ttop}$}
    \AxiomC{}
    \rulelabel{\ruleSourceSRefl}
    \UnaryInfC{$\jsubtype{\ttop}{\ttop}$}
  \rulelabel{\ruleSourceSArrow}
  \BinaryInfC{$\jsubtype{\tarrow{\ttop}{\ttop}}{\tarrow{\tnat}{\ttop}}$}
\rulelabel{\ruleSourceTSub}
\BinaryInfC{$\jtyping{}{\lam{x}{\var{x}}}{\tarrow{\tnat}{\ttop}}$}
\end{prooftree}
\vspace{2mm}
The coercion translation of such derivation puts a coercion application
in a place, where the subsumption rule was used in the type derivation:
\vspace{2mm}
\[
\exptrans{
    \app{(\lam{f}{\app{\var{f}}{\natconst{1}}})}{(\lam{x}{\var{x}})}
}{D} = 
    \app{(\lam{f}{\app{\var{f}}{\natconst{1}}})}{
        (\capp{(\carrow{\ctop}{\cid})}{(\lam{x}{\var{x}})})}
\]

\vspace{2mm}\noindent
The result of the translation can be reduced using $\beta$ and
$\iota$-reductions:
\vspace{2mm}
\[
\begin{array}{ll}
\app{(\lam{f}{\app{\var{f}}{\natconst{1}}})}{
   (\capp{(\carrow{\ctop}{\cid})}{(\lam{x}{\var{x}})})}
& \rawredi{\beta}
\\[1mm]
  \app{\capp{(\carrow{\ctop}{\cid})}{(\lam{x}{\var{x}})}}{\natconst{1}}
& \rawredi{\iota}
\\[1mm]
  \capp{\cid}{(\app{(\lam{x}{\var{x}})}{(\capp{\ctop}{\natconst{1}})})}
& \rawredi{\iota}
\\[1mm]
  \capp{\cid}{(\app{(\lam{x}{\var{x}})}{\vunit})}
& \rawredi{\beta}
\\[1mm]
  \capp{\cid}{\vunit}
& \rawredi{\iota}
\\[1mm]
  \vunit
&
\end{array}
\]

\vspace{2mm}\noindent
First, we perform $\beta$-reduction, since
$\capp{(\carrow{\ctop}{\cid})}{(\lam{x}{\var{x}})}$ is a value.
Thereafter, the arrow coercion $(\carrow{\ctop}{\cid})$ gets two
arguments, so it can be $\iota$-reduced by distributing coercions
$\ctop$ and $\cid$ between the argument and the result of the identity
function. Then we continue the reduction using the call-by-value
strategy. Note that both $\beta$- and $\iota$-reductions are needed
during the evaluation.
\qed
\end{exa}

The problem of coherence is illustrated in the next example.
\begin{exa}
  \label{ex:hetero}
Coercion semantics can produce distant results for different typing
derivations, even in such simple calculus as presented in this
section. Consider the fixed-point operator
$\fix{y}{f}{\lam{x}{\app{\app{\var{f}}{(\app{\var{y}}{\var{f}})}}{\var{x}}}}$
expressed using recursive functions.  Assuming
$\jsubtype{\tau}{\tau'}$, one possible type of such an expression is
$\tarrow{(\tarrow{(\tarrow{\tau}{\tau'})}{\tarrow{\tau}{\tau}})}
{\tarrow{\tau}{\tau}}$.  Let $D_1$ be a derivation where
variable~$\var{y}$ has the same type as whole expression, and we
coerce only the subexpression $(\app{\var{y}}{\var{f}})$ from type
$\tarrow{\tau}{\tau}$ to $\tarrow{\tau}{\tau'}$. On the other hand,
let $D_2$ be a derivation where the whole expression is coerced from
the type
$\tarrow{(\tarrow{(\tarrow{\tau}{\tau})}{\tarrow{\tau}{\tau}})}
{\tarrow{\tau}{\tau}}$, which is derived directly.  As a result of the
coercion semantics for the derivations $D_1$ and $D_2$ we get the
following programs in the target calculus:
\vspace{2mm}
\begin{eqnarray*}
e_1 & := & \exptrans{
  \fix{y}{f}{\lam{x}{\app{\app{\var{f}}{(\app{\var{y}}{\var{f}})}}{\var{x}}}}
}{D_1} = 
  \fix{y}{f}{\lam{x}{\app{\app{\var{f}}{
    (\capp{(\carrow{\cid}{c})}{(\app{\var{y}}{\var{f}})})
  }}{\var{x}}}} \\[1mm]
e_2 & := & \exptrans{
  \fix{y}{f}{\lam{x}{\app{\app{\var{f}}{(\app{\var{y}}{\var{f}})}}{\var{x}}}}
}{D_2} =
  \capp{(\carrow{(\carrow{(\carrow{\cid}{c})}{\cid})}{\cid})}
  {(\fix{y}{f}{\lam{x}{\app{\app{\var{f}}{(\app{\var{y}}{\var{f}})}}{\var{x}}}})}
\mathrm{,}
\end{eqnarray*}

\vspace{2mm}\noindent
where $\jctyping{c}{\tau}{\tau'}$ is a result of translating the proof
of $\jsubtype{\tau}{\tau'}$.  These terms are different values and it
is hard to find any reasonable equational theory to equate them.
However, as a consequence of next sections, they are contextually
equivalent.  Indeed, they exhibit similar behavior when applied to two
values $f$ and $v$.  We can perform three $\beta$-reductions starting
from the term $\app{\app{e_1}{f}}{v}$.
\vspace{2mm}
\[
\begin{array}{ll}
\app{\app{e_1}{f}}{v} & \rawredi{\beta}^2
\\[1mm]
  \app{\app{f}{(\capp{(\carrow{\cid}{c})}{(
    \app{e_1}{f}
  )})}}{v}
& \rawredi{\beta}
\\[1mm]
  \app{\app{f}{(\capp{(\carrow{\cid}{c})}{(
    \lam{x}{\app{\app{\var{f}}{
      (\capp{(\carrow{\cid}{c})}{(\app{e_1}{\var{f}})})
    }}{\var{x}}})})}
  }{v}
&
\end{array}
\]

\vspace{2mm}
\noindent
Reducing the term $\app{\app{e_2}{f}}{v}$ requires some extra $\iota$-reductions.
Let $e_0 =
\fix{y}{f}{\lam{x}{\app{\app{\var{f}}{(\app{\var{y}}{\var{f}})}}{\var{x}}}}$
and $f_0 = \capp{(\carrow{(\carrow{\cid}{c})}{\cid})}{f}$.
We have the following reduction path.
\vspace{2mm}
\[
\begin{array}{ll}
\app{\app{e_2}{f}}{v} & \rawredi{\iota}
\\[1mm]
  \app{\capp{\cid}{(
    \app{e_0}{f_0}
  )}}{v}
& \rawredi{\beta}
\\[1mm]
  \app{\capp{\cid}{(
    \lam{x}{\app{\app{f_0}{(\app{e_0}{f_0})}}{\var{x}}}
  )}}{v}
& \rawredi{\iota}\rawredi{\beta}
\\[1mm]
  \app{\app{f_0}{(\app{e_0}{f_0})}}{v}
& \rawredi{\beta}
\\[1mm]
  \app{\app{f_0}{(
    \lam{x}{\app{\app{f_0}{(\app{e_0}{f_0})}}{\var{x}}}
  )}}{v}
& \rawredi{\iota}
\\[1mm]
  \app{\capp{\cid}{(
    \app{f}{(
      \capp{(\carrow{\cid}{c})}{(
        \lam{x}{\app{\app{f_0}{(\app{e_0}{f_0})}}{\var{x}}}
      )}
    )}
  )}}{v} &
\end{array}
\]

\vspace{2mm}
\noindent
In both cases we obtained a term of the form
$\app{\app{f}{(\capp{(\carrow{\cid}{c})}{e})}}{v}$ (modulo
insignificant identity coercions), where $e$ is a result of applying
$e_i$ to $f$.  \qed
\end{exa}

%%%%%%%%%%%%%%%%%%%%%
\subsection{Logical relations}
\label{subsec:logicalRelationsSTLC}

In order to reason about contextual equivalence in the target
language, we define logical relations (Figure~\ref{fig:stlcLogRel}).
Relations are expressed in the LSLR logic described in
Section~\ref{sec:stepIndexedLogicalRelations}, so they are implicitly
step-indexed.

We call these relations heterogeneous because they are parameterized
by two types, one for each of the arguments. This property is
important for our coherence proof, since it makes it possible to
relate the results of the translation of two typing derivations which
assign different types to the same term, e.g., as in
Example~\ref{ex:hetero}. When both types $\tau_1$ and $\tau_2$ are
$\tnat$ or both are arrow types, the value relation
$\relV{\tau_1}{\tau_2}$ is standard. Two values are related for type
$\tnat$ if they are the same constant, and two functions are related
when they map related arguments to related results.  Because we have
many kinds of values representing functions, we follow Pitts and
Stark~\cite{Pitts-Stark:HOOTS98} in using an application for testing
functions, instead of a substitution (as in,
e.g.,~\cite{Dreyer-al:JFP12,Ahmed:ESOP06}).  The most interesting are
the cases when type parameters of the relation are different. When one
of these types is $\tunit$, then any values are in the relation,
because we do not expect them to carry any information---$\tunit$ is
the result of translating the $\ttop$ type.
In such a case we do not even require that related values are of the
kind described by their corresponding type.
We can do so since in the calculus each coercion applied to
a value is or reduces to a value. 
In calculi without this property we have to be more careful
(see Section \ref{subsubsec:multiBaseTypes}).
The logical relation is empty for different types which are not $\tunit$.

The relation $\relE{\tau_1}{\tau_2}$ for closed terms is defined by
biorthogonality. Two terms are related if they behave the same in
related contexts, and contexts are related (relation
$\relK{\tau_1}{\tau_2}$) if they yield the same observations when
plugged with related values. Yielding the same observations (relation
$\rawRelApprox$) is defined for each step-index separately:
$\relApprox{e_1}{e_2}$ is valid at $k$ iff termination of $e_1$ using
at most $k$ $\beta$-steps and any number of $\iota$-steps (written
$\stopsn{k}{e_1}$), implies termination of $e_2$ in any number of
$\beta$-steps and $\iota$-steps. This interpretation is monotone, so
the relation $\rawRelApprox$ can be added to the LSLR logic.

In order to extend the relation $\relE{\tau_1}{\tau_2}$ to open terms
we first define a relation $\relG{\Gamma_1}{\Gamma_2}$ on
substitutions (mapping variables to closed values) parameterized by a
pair of typing environments. Then we say that two open terms are
related (written
$\relEop{\Gamma_1}{\Gamma_2}{e_1}{e_2}{\tau_1}{\tau_2}$) when every
pair of related closing substitutions makes them related.

Notice that we do not assume that related terms have valid types.  Our
relations may include some ``garbage'', e.g.,
$(\natconst{1},\lam{x}{\var{x}})\in\relV{\tunit}{\tnat}$, but it is
non-problematic. One can mechanically prune these relations to
well-typed terms, but this change complicates formalization and we did
not find it useful.

\begin{figure}[!!t]
  \begin{mdframed}
\[
\begin{array}{rcl}
(v_1, v_2) \in \relV{\tnat}{\tnat} & \iff & \exists n, v_1 = v_2 = n \\[2mm]
(v_1, v_2) \in \relV{\tarrow{\tau'_1}{\tau_1}}{\tarrow{\tau'_2}{\tau_2}}
    & \iff & \forall (a_1, a_2) \in \relV{\tau'_1}{\tau'_2} .
    (\app{v_1}{a_1}, \app{v_2}{a_2}) \in \relE{\tau_1}{\tau_2} \\[2mm]
(v_1, v_2) \in \relV{\tunit}{\tau_2}  & \iff & \top\ \\[2mm]
(v_1, v_2) \in \relV{\tau_1}{\tunit}  & \iff & \top \\[2mm]
(v_1, v_2) \in \relV{\tau_1}{\tau_2} & \iff & \bot \qquad
    \textrm{otherwise} \\[4mm]
(e_1, e_2) \in \relE{\tau_1}{\tau_2}
    & \iff & \forall (E_1, E_2) \in \relK{\tau_1}{\tau_2} .
    \relApprox{\inctx{E_1}{e_1}}{\inctx{E_2}{e_2}} \\[4mm]
(E_1, E_2) \in \relK{\tau_1}{\tau_2}
    & \iff & \forall (v_1, v_2) \in \relV{\tau_1}{\tau_2} .
    \relApprox{\inctx{E_1}{v_1}}{\inctx{E_2}{v_2}} \\[4mm]
\validAt{k}{\relApprox{e_1}{e_2}} & \iff &
    \stopsn{k}{e_1} \implies \stops{e_2} \\[4mm]
(\gamma_1, \gamma_2) \in \relG{\Gamma_1}{\Gamma_2} & \iff &
    \forall x, (\gamma_1(x), \gamma_2(x)) \in \relV{\Gamma_1(x)}{\Gamma_2(x)} \\[4mm]
\relEop{\Gamma_1}{\Gamma_2}{e_1}{e_2}{\tau_1}{\tau_2} & \iff &
    \forall (\gamma_1, \gamma_2)\in\relG{\Gamma_1}{\Gamma_2} .
    (e_1\gamma_1, e_2\gamma_2) \in\relE{\tau_1}{\tau_2}
\end{array}
\]
  \end{mdframed}
\caption{\label{fig:stlcLogRel}Logical relations for the
  $\lambda$-calculus with explicit coercions}
\end{figure}

The relation $\rawRelApprox$ is preserved by reductions in the
following sense, where the third assertion expresses an elimination
rule of the later modality that is crucial in the subsequent proofs.

\begin{lem}\label{lem:stlcApproxRed}
The following assertions hold:
\begin{enumerate}
\item If $\iotared{e_1}{e'_1}$ and $\relApprox{e_1'}{e_2}$
  then $\relApprox{e_1}{e_2}$.
\vspace{1mm}
\item If $\iotared{e_2}{e'_2}$ and $\relApprox{e_1}{e'_2}$
    then $\relApprox{e_1}{e_2}$.
\vspace{1mm}
\item If $\betared{e_1}{e'_1}$ and $\later\relApprox{e_1'}{e_2}$
    then $\relApprox{e_1}{e_2}$.
\vspace{1mm}
\item If $\betared{e_2}{e'_2}$ and $\relApprox{e_1}{e'_2}$
    then $\relApprox{e_1}{e_2}$.
\end{enumerate}
\end{lem}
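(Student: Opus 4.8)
The plan is to prove each of the four assertions by unfolding the definition of $\rawRelApprox$ at an arbitrary step index $k$ and tracking how $\beta$- and $\iota$-steps affect the bound $\stopsn{k}{\cdot}$. Recall that $\validAt{k}{\relApprox{e_1}{e_2}}$ means $\stopsn{k}{e_1} \implies \stops{e_2}$, where $\stopsn{k}{e_1}$ counts at most $k$ $\beta$-steps and any number of $\iota$-steps. The key arithmetic facts I would isolate first are: (a) if $\iotared{e_1}{e_1'}$ then $\stopsn{k}{e_1}$ iff $\stopsn{k}{e_1'}$ (an $\iota$-step does not change the $\beta$-budget, and since reduction is deterministic, the reduction sequences coincide up to this one step); and (b) if $\betared{e_1}{e_1'}$ then $\stopsn{k+1}{e_1}$ iff $\stopsn{k}{e_1'}$ (one $\beta$-step consumes exactly one unit of budget). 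I would also note that $\stops{e_2}$ — plain termination with no index — is what sits on the right in every case, so the right-hand side is insensitive to reductions of $e_2$ in the sense that $\betared{e_2}{e_2'}$ or $\iotared{e_2}{e_2'}$ implies $\stops{e_2'} \implies \stops{e_2}$.

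With these observations in hand, the four cases are routine. For assertion~(1): assume $\relApprox{e_1'}{e_2}$ and $\iotared{e_1}{e_1'}$; to show $\validAt{k}{\relApprox{e_1}{e_2}}$, take $\stopsn{k}{e_1}$, use (a) to get $\stopsn{k}{e_1'}$, and apply the hypothesis at index $k$ to conclude $\stops{e_2}$. Assertion~(2) is even simpler: from $\stopsn{k}{e_1}$ and $\relApprox{e_1}{e_2'}$ we get $\stops{e_2'}$, and since $\iotared{e_2}{e_2'}$, also $\stops{e_2}$. Assertion~(4) mirrors~(2), using $\betared{e_2}{e_2'}$ in place of $\iotared{e_2}{e_2'}$. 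The only case needing the later modality is~(3): to show $\validAt{k}{\relApprox{e_1}{e_2}}$ from $\later\relApprox{e_1'}{e_2}$ and $\betared{e_1}{e_1'}$, one distinguishes $k=0$ and $k = k'+1$. When $k=0$, $\stopsn{0}{e_1}$ would require zero $\beta$-steps, but $e_1$ takes a $\beta$-step and hence is not a value, so $\stopsn{0}{e_1}$ is false and the implication holds vacuously. When $k = k'+1$, the premise $\later\relApprox{e_1'}{e_2}$ gives $\validAt{k'}{\relApprox{e_1'}{e_2}}$; from $\stopsn{k'+1}{e_1}$ fact~(b) yields $\stopsn{k'}{e_1'}$, and the hypothesis at index $k'$ delivers $\stops{e_2}$.

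The main obstacle — really the only point requiring care — is the step-index bookkeeping in assertion~(3), specifically the need to see why the $\later$ is both necessary and sufficient: necessary because a $\beta$-reduction genuinely costs a step, so one cannot expect $\relApprox{e_1'}{e_2}$ to transfer to $e_1$ at the \emph{same} index; sufficient because the $k=0$ case is handled by the vacuity of $\stopsn{0}{e_1}$ for a non-value $e_1$, which is exactly the "base case captured for free" phenomenon noted for the L\"ob rule. Determinism of the reduction relation (stated after Figure~\ref{fig:stlcTarget}) is used implicitly to justify that "termination in $k$ $\beta$-steps" is well-defined and behaves functorially along single reduction steps; if one prefers, one can instead argue directly about the unique reduction sequence out of each term without invoking determinism as a named lemma.
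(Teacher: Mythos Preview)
Your proof is correct. The paper does not give an explicit proof of this lemma in the text (it is stated and then used; the details are relegated to the Coq formalization), so there is no paper proof to compare against beyond the obvious intended one. Your argument---unfolding the definition of $\rawRelApprox$ at an arbitrary index and using determinism of reduction to track the $\beta$-budget across single steps---is exactly the natural proof, and your handling of assertion~(3) via the case split on $k=0$ versus $k=k'+1$ is precisely how the $\later$ modality is eliminated here.
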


The proof of soundness of the logical relations follows closely the
standard technique for biorthogonal logical
relations~\cite{Pitts-Stark:HOOTS98,Dreyer-al:JFP12}. First, we need
to show compatibility lemmas, which state that the relation is
preserved by every language construct.

\begin{lem}[Compatibility]
The following assertions hold:
\begin{enumerate}
\item if $(x:\tau_1)\in\Gamma_1$ and $(x:\tau_2)\in\Gamma_2$
  then $\relEop{\Gamma_1}{\Gamma_2}{\var{x}}{\var{x}}{\tau_1}{\tau_2}$;
\vspace{1mm}
\item if
  $\relEop{(\eextend{\Gamma_1}{x}{\tau'_1})}{(\eextend{\Gamma_2}{x}{\tau'_2})}
    {e_1}{e_2}{\tau_1}{\tau_2}$ \\
  then
  $\relEop{\Gamma_1}{\Gamma_2}{\lam{x}{e_1}}{\lam{x}{e_2}}
    {\tarrow{\tau'_1}{\tau_1}}{\tarrow{\tau'_2}{\tau_2}}$;
\vspace{1mm}
\item if
  $\relEop{\Gamma_1}{\Gamma_2}{e_1}{e_2}
    {\tarrow{\tau'_1}{\tau_1}}{\tarrow{\tau'_2}{\tau_2}}$
  and
  $\relEop{\Gamma_1}{\Gamma_2}{e'_1}{e'_2}{\tau'_1}{\tau'_2}$ \\
  then
  $\relEop{\Gamma_1}{\Gamma_2}{\app{e_1}{e'_1}}{\app{e_2}{e'_2}}{\tau_1}{\tau_2}$;
\vspace{1mm}
\item if
  $\relEop
    {(\eextend{\eextend{\Gamma_1}{f}{\tarrow{\tau'_1}{\tau_1}}}{x}{\tau'_1})}
    {(\eextend{\eextend{\Gamma_2}{f}{\tarrow{\tau'_2}{\tau_2}}}{x}{\tau'_2})}
    {e_1}{e_2}{\tau_1}{\tau_2}$ \\
  then
  $\relEop{\Gamma_1}{\Gamma_2}{\fix{f}{x}{e_1}}{\fix{f}{x}{e_2}}
    {\tarrow{\tau'_1}{\tau_1}}{\tarrow{\tau'_2}{\tau_2}}$;
\vspace{1mm}
\item we have
  $\relEop{\Gamma_1}{\Gamma_2}{\natconst{n}}{\natconst{n}}{\tnat}{\tnat}$;
\vspace{1mm}
\item we have
  $\relEop{\Gamma_1}{\Gamma_2}{\vunit}{\vunit}{\tunit}{\tunit}$.
\end{enumerate}
\end{lem}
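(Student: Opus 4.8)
The plan is to follow the standard biorthogonal logical relations argument~\cite{Pitts-Stark:HOOTS98,Dreyer-al:JFP12}: for each clause, strip off a pair of related closing substitutions $(\gamma_1,\gamma_2)\in\relG{\Gamma_1}{\Gamma_2}$ via the definition of $\relEop{\Gamma_1}{\Gamma_2}{e_1}{e_2}{\tau_1}{\tau_2}$, and then argue about closed terms. Two auxiliary facts do the work. First, adequacy of values: if $(v_1,v_2)\in\relV{\tau_1}{\tau_2}$ then $(v_1,v_2)\in\relE{\tau_1}{\tau_2}$; this is immediate, since for any $(E_1,E_2)\in\relK{\tau_1}{\tau_2}$ the defining clause of $\relK{\tau_1}{\tau_2}$ at $(v_1,v_2)$ is exactly $\relApprox{\inctx{E_1}{v_1}}{\inctx{E_2}{v_2}}$. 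Second, a $\beta$-expansion principle for the closed-term relation: if $\betared{e_1}{e_1'}$, $\betared{e_2}{e_2'}$ and $(e_1',e_2')\in\relE{\tau_1}{\tau_2}$, then $(e_1,e_2)\in\relE{\tau_1}{\tau_2}$; unfolding $\relE{\tau_1}{\tau_2}$, pushing both reductions under an arbitrary related context pair (the reductions are closed under plugging into an evaluation context), and applying Lemma~\ref{lem:stlcApproxRed}(4) to the right-hand component and Lemma~\ref{lem:stlcApproxRed}(3) to the left-hand one, it remains to establish $\later\relApprox{\inctx{E_1}{e_1'}}{\inctx{E_2}{e_2'}}$, which follows from $(e_1',e_2')\in\relE{\tau_1}{\tau_2}$ applied to $(E_1,E_2)$ together with monotonicity ($\varphi\vdash\later\varphi$).

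With these in hand, clauses (1), (5) and (6) are one-liners: a related substitution pair yields $(\gamma_1(x),\gamma_2(x))\in\relV{\Gamma_1(x)}{\Gamma_2(x)}$ by the definition of $\relG{\Gamma_1}{\Gamma_2}$, while $(\natconst{n},\natconst{n})\in\relV{\tnat}{\tnat}$ and $(\vunit,\vunit)\in\relV{\tunit}{\tunit}$ hold directly (the last because the $\tunit$ clause of the value relation is $\top$), and adequacy lifts each into the corresponding closed-term relation. For clause (2), after substitution both sides are $\lambda$-values, so by adequacy it suffices to show $(\lam{x}{e_1\gamma_1},\lam{x}{e_2\gamma_2})\in\relV{\tarrow{\tau'_1}{\tau_1}}{\tarrow{\tau'_2}{\tau_2}}$; for $(a_1,a_2)\in\relV{\tau'_1}{\tau'_2}$ each application $\beta$-reduces to $e_i$ under $\gamma_i$ extended with $x\mapsto a_i$, these extended substitutions lie in $\relG{\eextend{\Gamma_1}{x}{\tau'_1}}{\eextend{\Gamma_2}{x}{\tau'_2}}$, so the hypothesis puts the reducts in $\relE{\tau_1}{\tau_2}$, and $\beta$-expansion finishes. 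For clause (3), fix $(E_1,E_2)\in\relK{\tau_1}{\tau_2}$; the key observation is that for every $(v_1,v_2)\in\relV{\tarrow{\tau'_1}{\tau_1}}{\tarrow{\tau'_2}{\tau_2}}$ the context pair $(\ctxcomp{E_1}{\valectx{v_1}{\mtectx}},\ctxcomp{E_2}{\valectx{v_2}{\mtectx}})$ belongs to $\relK{\tau'_1}{\tau'_2}$ (by the arrow clause of the value relation), so the argument hypothesis places $(\ctxcomp{E_1}{\argectx{\mtectx}{e'_1\gamma_1}},\ctxcomp{E_2}{\argectx{\mtectx}{e'_2\gamma_2}})$ in $\relK{\tarrow{\tau'_1}{\tau_1}}{\tarrow{\tau'_2}{\tau_2}}$, and applying the function hypothesis to that pair yields $\relApprox{\inctx{E_1}{\app{e_1\gamma_1}{e'_1\gamma_1}}}{\inctx{E_2}{\app{e_2\gamma_2}{e'_2\gamma_2}}}$.

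Clause (4) is the only one that genuinely needs step-indexing, and it is the main obstacle. After substitution both sides are values, so it suffices to prove the formula $\varphi$, namely $(\fix{f}{x}{e_1\gamma_1},\fix{f}{x}{e_2\gamma_2})\in\relV{\tarrow{\tau'_1}{\tau_1}}{\tarrow{\tau'_2}{\tau_2}}$, and we proceed by L\"ob induction, so we may assume $\later\varphi$. Given $(a_1,a_2)\in\relV{\tau'_1}{\tau'_2}$ --- these are values, hence the two applications are $\beta$-redexes --- and $(E_1,E_2)\in\relK{\tau_1}{\tau_2}$, one $\beta$-step on each side unrolls the recursion, and by Lemma~\ref{lem:stlcApproxRed}(4) and then Lemma~\ref{lem:stlcApproxRed}(3) it remains to prove $\later\relApprox{\inctx{E_1}{e_1\rho_1}}{\inctx{E_2}{e_2\rho_2}}$, where $\rho_i = \gamma_i[f\mapsto\fix{f}{x}{e_i\gamma_i},\,x\mapsto a_i]$. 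By $\later$-intro it is enough to derive $\relApprox{\inctx{E_1}{e_1\rho_1}}{\inctx{E_2}{e_2\rho_2}}$ from the assumptions $\varphi$, $(a_1,a_2)\in\relV{\tau'_1}{\tau'_2}$ and $(E_1,E_2)\in\relK{\tau_1}{\tau_2}$, all of which hold one step into the future: under $\varphi$ the pair of $\fix$-values lies in the arrow value relation, hence $(\rho_1,\rho_2)$ is a related substitution pair for the doubly-extended environments, so the hypothesis of this clause gives $(e_1\rho_1,e_2\rho_2)\in\relE{\tau_1}{\tau_2}$, which applied to $(E_1,E_2)$ closes the goal. The delicate part is exactly this: matching the $\later$ introduced by the $\beta$-reduction rule (Lemma~\ref{lem:stlcApproxRed}(3)) against the $\later$ supplied by L\"ob; everything else is routine $\alpha$-renaming so that $(\fix{f}{x}{e_i})\gamma_i = \fix{f}{x}{(e_i\gamma_i)}$ and the $\beta$-contractum is literally $e_i\rho_i$.
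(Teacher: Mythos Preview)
Your proposal is correct and follows essentially the same approach as the paper. The paper only spells out clause~(4), and your argument there matches it step for step: L\"ob to obtain $\later\varphi$, unfold the arrow value relation and the biorthogonal expression relation, $\beta$-reduce both sides and invoke Lemma~\ref{lem:stlcApproxRed} twice to reduce the goal to a $\later$-guarded approximation, then use $\later$-intro to strip the modality simultaneously from the goal and from the L\"ob hypothesis, and finish by showing the extended substitutions are related; your additional auxiliary facts (value adequacy and the $\beta$-expansion principle) and the detailed treatment of clauses~(1)--(3), (5), (6) simply make explicit what the paper dismisses as ``standard''.
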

\begin{proof}
The proof is standard and directed by the definition of logical
relations.  We only show the proof for the case with recursive
functions, where step indexing simplifies reasoning.  Assume $\relEop
{(\eextend{\eextend{\Gamma_1}{f}{\tarrow{\tau'_1}{\tau_1}}}{x}{\tau'_1})}
{(\eextend{\eextend{\Gamma_2}{f}{\tarrow{\tau'_2}{\tau_2}}}{x}{\tau'_2})}
{e_1}{e_2}{\tau_1}{\tau_2}$ (*).  Since $\fix{f}{x}{e_1}$ and
$\fix{f}{x}{e_2}$ are values, it suffices to show that for every
substitutions $(\gamma_1, \gamma_2) \in \relG{\Gamma_1}{\Gamma_2}$ we
have $(\fix{f}{x}{e_1\gamma_1}, \fix{f}{x}{e_2}{\gamma_2}) \in
\relV{\tarrow{\tau'_1}{\tau_1}}{\tarrow{\tau'_2}{\tau_2}}$.  Now, we
use the L\"ob rule to assume the induction hypothesis\footnote{This
  reasoning step corresponds to the induction on indices.}
$(\fix{f}{x}{e_1\gamma_1}, \fix{f}{x}{e_2}{\gamma_2}) \in
\later\relV{\tarrow{\tau'_1}{\tau_1}}{\tarrow{\tau'_2}{\tau_2}}$ (**).
Unfolding the definition of the relation
$\relV{\tarrow{\tau'_1}{\tau_1}}{\tarrow{\tau'_2}{\tau_2}}$, we need
to show that for every $(v_1, v_2)\in\relV{\tau'_1}{\tau'_2}$ and
$(E_1, E_2)\in\relK{\tau_1}{\tau_2}$, we have $\relApprox
{\inctx{E_1}{\app{(\fix{f}{x}{e_1\gamma_1})}{v_1}}}
{\inctx{E_2}{\app{(\fix{f}{x}{e_2\gamma_2})}{v_2}}} $.  By
Lemma~\ref{lem:stlcApproxRed} (used twice), it suffices to prove that
$\later\relApprox
{\inctx{E_1}{\bisubst{e_1\gamma_1}{f}{\fix{f}{x}{e_1\gamma_1}}{x}{v_1}}}
{\inctx{E_2}{\bisubst{e_2\gamma_2}{f}{\fix{f}{x}{e_2\gamma_2}}{x}{v_2}}}
$.  Using the later introduction rule, we can remove the later
operator both in the goal and in the assumption (**). Now, we can show
that the substitutions
$\bisubst{\gamma_1}{f}{\fix{f}{x}{e_1\gamma_1}}{x}{v_1}$ and
$\bisubst{\gamma_2}{f}{\fix{f}{x}{e_2\gamma_2}}{x}{v_2}$ are related,
hence using (*) we conclude the proof.
\end{proof}

The only compatibility lemma specific to our relations is the lemma
for coercion application. Since the subsumption rule is not
syntax-directed, we expect from the coercions to preserve the logical
relation, even when they are applied to only one of the related
expressions.

\begin{lem}[Coercion compatibility]\label{lem:stlcCoercionCompat}
The logical relation is preserved by coercion application:
\begin{enumerate}
\item If $\jctyping{c}{\tau_1}{\tau_2}$ and
    $\relEop{\Gamma_1}{\Gamma_2}{e_1}{e_2}{\tau_1}{\tau_0}$ then
    $\relEop{\Gamma_1}{\Gamma_2}{\capp{c}{e_1}}{e_2}{\tau_2}{\tau_0}$.
\vspace{1mm}
\item If $\jctyping{c}{\tau_1}{\tau_2}$ and
    $\relEop{\Gamma_1}{\Gamma_2}{e_1}{e_2}{\tau_0}{\tau_1}$ then
    $\relEop{\Gamma_1}{\Gamma_2}{e_1}{\capp{c}{e_2}}{\tau_0}{\tau_2}$.
\end{enumerate}
\end{lem}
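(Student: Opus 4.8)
The plan is to isolate the real content of the statement in an auxiliary \emph{coercion–context} lemma and then derive both parts of the lemma from it by the standard biorthogonality argument. Concretely, I would prove: if $\jctyping{c}{\tau_1}{\tau_2}$ and $(E_1,E_2)\in\relK{\tau_2}{\tau_0}$ then $(\ctxcomp{E_1}{\crcectx{c}{\mtectx}},E_2)\in\relK{\tau_1}{\tau_0}$ (and symmetrically on the second component). Granting this, part~(1) is immediate: unfolding $\relEop{\Gamma_1}{\Gamma_2}{\capp{c}{e_1}}{e_2}{\tau_2}{\tau_0}$ we fix $(\gamma_1,\gamma_2)\in\relG{\Gamma_1}{\Gamma_2}$ and must show $(\capp{c}{e_1\gamma_1},e_2\gamma_2)\in\relE{\tau_2}{\tau_0}$; given $(E_1,E_2)\in\relK{\tau_2}{\tau_0}$ we note $\inctx{E_1}{\capp{c}{e_1\gamma_1}}=\inctx{(\ctxcomp{E_1}{\crcectx{c}{\mtectx}})}{e_1\gamma_1}$, the context lemma turns $(E_1,E_2)$ into a pair in $\relK{\tau_1}{\tau_0}$, and the hypothesis $(e_1\gamma_1,e_2\gamma_2)\in\relE{\tau_1}{\tau_0}$ supplies the required $\relApprox{\cdot}{\cdot}$. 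Part~(2) is the mirror image, using the $\relK$-claim on the second component and Lemma~\ref{lem:stlcApproxRed}(2) in place of Lemma~\ref{lem:stlcApproxRed}(1).

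I would prove the context lemma by induction on $c$ (equivalently, on the derivation of $\jctyping{c}{\tau_1}{\tau_2}$); in each case I fix $(v_1,v_2)\in\relV{\tau_1}{\tau_0}$ and aim at $\relApprox{\inctx{E_1}{\capp{c}{v_1}}}{\inctx{E_2}{v_2}}$. For $c=\cid$, the term $\inctx{E_1}{\capp{\cid}{v_1}}$ $\iota$-reduces to $\inctx{E_1}{v_1}$, which $(E_1,E_2)\in\relK{\tau_1}{\tau_0}$ already relates to $\inctx{E_2}{v_2}$, so Lemma~\ref{lem:stlcApproxRed}(1) applies. For $c=\ctop$, the term $\iota$-reduces to $\inctx{E_1}{\vunit}$ and $(\vunit,v_2)\in\relV{\tunit}{\tau_0}$ holds vacuously, so $(E_1,E_2)\in\relK{\tunit}{\tau_0}$ together with Lemma~\ref{lem:stlcApproxRed}(1) suffices. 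For $c=\ccomp{c_1}{c_2}$ with $\jctyping{c_1}{\tau'}{\tau_2}$ and $\jctyping{c_2}{\tau_1}{\tau'}$, the term $\iota$-reduces to $\inctx{E_1}{\capp{c_1}{(\capp{c_2}{v_1})}}=\inctx{(\ctxcomp{(\ctxcomp{E_1}{\crcectx{c_1}{\mtectx}})}{\crcectx{c_2}{\mtectx}})}{v_1}$; applying the induction hypothesis first to $c_1$ and then to $c_2$ places this composed context, paired with $E_2$, in $\relK{\tau_1}{\tau_0}$, and Lemma~\ref{lem:stlcApproxRed}(1) closes the case.

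The arrow case, $c=\carrow{c_1}{c_2}$ with $\jctyping{c_1}{\sigma_2'}{\sigma_1'}$ and $\jctyping{c_2}{\sigma_1}{\sigma_2}$ (so the source type is $\tarrow{\sigma_1'}{\sigma_1}$ and the target type is $\tarrow{\sigma_2'}{\sigma_2}$), is the crux, and I expect it to be the main obstacle. Here $\capp{(\carrow{c_1}{c_2})}{v_1}$ is \emph{already a value}, so no reduction step is available; instead I must establish outright that $(\capp{(\carrow{c_1}{c_2})}{v_1},v_2)\in\relV{\tarrow{\sigma_2'}{\sigma_2}}{\tau_0}$, after which the desired $\relApprox{\cdot}{\cdot}$ follows from $(E_1,E_2)\in\relK{\tarrow{\sigma_2'}{\sigma_2}}{\tau_0}$. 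If $\tau_0=\tunit$ this is trivial; if $\tau_0=\tnat$ the assumption $(v_1,v_2)\in\relV{\tarrow{\sigma_1'}{\sigma_1}}{\tnat}$ is $\bot$; so assume $\tau_0=\tarrow{\tau_0'}{\tau_0''}$. Fixing $(b_1,b_2)\in\relV{\sigma_2'}{\tau_0'}$ and $(F_1,F_2)\in\relK{\sigma_2}{\tau_0''}$, the redex $\app{(\capp{(\carrow{c_1}{c_2})}{v_1})}{b_1}$ $\iota$-reduces to $\capp{c_2}{(\app{v_1}{(\capp{c_1}{b_1})})}$, so by Lemma~\ref{lem:stlcApproxRed}(1) and context composition it suffices to show $\big(\ctxcomp{(\ctxcomp{F_1}{\crcectx{c_2}{\mtectx}})}{\valectx{v_1}{\crcectx{c_1}{\mtectx}}},\ \ctxcomp{F_2}{\valectx{v_2}{\mtectx}}\big)\in\relK{\sigma_2'}{\tau_0'}$; the induction hypothesis for $c_1$ peels off the outer $\crcectx{c_1}{\mtectx}$ and reduces this to $\big(\ctxcomp{(\ctxcomp{F_1}{\crcectx{c_2}{\mtectx}})}{\valectx{v_1}{\mtectx}},\ \ctxcomp{F_2}{\valectx{v_2}{\mtectx}}\big)\in\relK{\sigma_1'}{\tau_0'}$, which follows by feeding an arbitrary $(b_1',b_2')\in\relV{\sigma_1'}{\tau_0'}$ into $(v_1,v_2)\in\relV{\tarrow{\sigma_1'}{\sigma_1}}{\tarrow{\tau_0'}{\tau_0''}}$ and using the induction hypothesis for $c_2$ to lift $(F_1,F_2)\in\relK{\sigma_2}{\tau_0''}$ to $(\ctxcomp{F_1}{\crcectx{c_2}{\mtectx}},F_2)\in\relK{\sigma_1}{\tau_0''}$.

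Thus the whole difficulty is the arrow case: because coerced functions are values rather than reducible terms, the proof cannot dispatch this case with a single reduction but must descend through the coercion $c_1$ acting \emph{on the argument}, which is where heterogeneity (the case split on the shape of $\tau_0$) and the bookkeeping of the coercion evaluation contexts $\crcectx{\cdot}{\cdot}$ genuinely matter. Once that nesting is organized, the remaining cases and the reduction of the stated lemma to the context lemma are routine applications of Lemma~\ref{lem:stlcApproxRed} and the definitions of the logical relations.
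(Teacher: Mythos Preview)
Your proposal is correct and follows the same overall strategy as the paper, namely induction on the derivation of $\jctyping{c}{\tau_1}{\tau_2}$; the paper's proof is only the one-line sketch ``by induction on the typing derivation of the coercion $c$'', and your argument is a faithful and careful expansion of that sketch. Your choice to factor the induction through a dual statement at the level of $\relK$ (pushing $\crcectx{c}{\mtectx}$ into the context) is a standard and natural organization in the biorthogonal setting, and your handling of the arrow case---case-splitting on $\tau_0$, recognizing that $\capp{(\carrow{c_1}{c_2})}{v_1}$ is already a value, and invoking the inductive hypotheses for $c_1$ and $c_2$ at the fresh parameters $\tau_0'$ and $\tau_0''$---is exactly what the induction demands.
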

\begin{proof}
We prove both cases by induction on the typing derivation of the coercion $c$.
\end{proof}

Compatibility lemmas allow us to show the fundamental property of the
logical relations, stating that the logical relation is reflexive for
well-typed terms.

\begin{thm}[Fundamental property]
If $\jtyping{\Gamma}{e}{\tau}$ then $\relEop{\Gamma}{\Gamma}{e}{e}{\tau}{\tau}$.
\end{thm}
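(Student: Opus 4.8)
The plan is to proceed by structural induction on the typing derivation $D$ witnessing $\jtyping{\Gamma}{e}{\tau}$, instantiating the diagonal case of each compatibility lemma ($\Gamma_1 = \Gamma_2 = \Gamma$ and $\tau_1 = \tau_2 = \tau$, and likewise on the subterms' environments and types). If the last rule of $D$ is \ruleTargetTVar, the goal is an immediate instance of compatibility part~(1); if it is \ruleTargetTConst or \ruleTargetTTop, it follows directly from parts~(5) and~(6). If the last rule is \ruleTargetTAbs, \ruleTargetTApp, or \ruleTargetTFix, I would apply the induction hypothesis to the immediate subderivation(s) and then conclude with compatibility parts~(2), (3), or~(4), respectively; in the \ruleTargetTFix case all the genuine work — the appeal to the L\"ob rule — is already encapsulated in part~(4), so nothing new is needed here.

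The only case not covered by the syntax-directed compatibility lemmas is coercion application. Suppose $D$ ends with \ruleTargetTCApp, so $e = \capp{c}{e'}$ with $\jctyping{c}{\tau_0}{\tau}$ and $\jtyping{\Gamma}{e'}{\tau_0}$. By the induction hypothesis, $\relEop{\Gamma}{\Gamma}{e'}{e'}{\tau_0}{\tau_0}$. I would then apply Lemma~\ref{lem:stlcCoercionCompat}(1) to move the coercion onto the left component, obtaining $\relEop{\Gamma}{\Gamma}{\capp{c}{e'}}{e'}{\tau}{\tau_0}$, and finally apply Lemma~\ref{lem:stlcCoercionCompat}(2) to the \emph{same} coercion $c$ on the right component, obtaining the desired $\relEop{\Gamma}{\Gamma}{\capp{c}{e'}}{\capp{c}{e'}}{\tau}{\tau}$. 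This is the one spot in the argument where heterogeneity is actually exercised, since the intermediate judgment relates two terms at the distinct types $\tau$ and $\tau_0$; the induction on the typing of $c$ that justifies Lemma~\ref{lem:stlcCoercionCompat} has, again, already been done.

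I do not expect a real obstacle in this proof: the theorem is essentially an assembly of the compatibility lemmas along the shape of $D$, and the two substantive ingredients — the step-indexed reasoning for recursive functions and the structural induction on coercion derivations — are discharged inside those lemmas. The only point requiring a little care is purely bookkeeping: threading the diagonal instantiation consistently through the induction, and, in the \ruleTargetTCApp case, being careful to reuse the identical coercion on both sides before collapsing the two type parameters back to $\tau$.
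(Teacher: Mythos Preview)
Your proposal is correct and follows exactly the paper's approach: induction on the typing derivation, invoking the appropriate compatibility lemma in each case. Your explicit handling of the \ruleTargetTCApp{} case---applying Lemma~\ref{lem:stlcCoercionCompat} once on the left and once on the right, passing through the heterogeneous intermediate judgment at types $\tau$ and $\tau_0$---is precisely the unpacking of what the paper's one-line proof leaves implicit.
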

\begin{proof}
By induction on the derivation $\jtyping{\Gamma}{e}{\tau}$.
In each case we apply the corresponding compatibility lemma.
\end{proof}

The soundness of the logical relations is a direct consequence
of the following properties:
\emph{precongruence} which says that the logical relation is preserved by
any well-typed context,
and \emph{adequacy} which says that related programs have
the same observable behavior.

\begin{lem}[Precongruence]\label{lem:stlcPrecong}
If $\jctxtyping{\Gamma}{C}{\tau}{\tau_0}$
and $\relEop{\Gamma}{\Gamma}{e_1}{e_2}{\tau}{\tau}$
then $(\inctx{C}{e_1},\inctx{C}{e_2})\in\relE{\tau_0}{\tau_0}$.
\end{lem}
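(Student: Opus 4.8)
The plan is to prove this by structural induction on the typed context $C$, showing that plugging logically related terms into any hole of a well-typed context yields logically related terms. The key observation is that a context is built from the same syntactic formers as terms, so each inductive case reduces to an application of the corresponding compatibility lemma, together with the Fundamental Property to handle the subterms of $C$ that do not contain the hole. Concretely, I would first state an auxiliary "open precongruence" claim: if $\jctxtyping{\Gamma}{C}{\tau}{\tau_0}$ (suitably generalized so that $C$ may bind variables and the hole sits in an extended environment $\Gamma'$), and $\relEop{\Gamma'}{\Gamma'}{e_1}{e_2}{\tau}{\tau}$, then $\relEop{\Gamma}{\Gamma}{\inctx{C}{e_1}}{\inctx{C}{e_2}}{\tau_0}{\tau_0}$; the lemma as stated is the special case where $C$ is closed and binds nothing over the hole, so $\Gamma' = \Gamma$ and the conclusion specializes to closed terms via $\relG{}{}$ being trivially inhabited by the empty substitution.

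The induction then proceeds over the grammar of contexts. In the base case $C = \mtectx$ (actually, the hole itself as a term context), the claim is exactly the hypothesis. For $C = \lam{x}{C'}$ we apply the induction hypothesis to $C'$ in the extended environment and then invoke compatibility case~(2) for abstraction. For an application context, say $C = \app{C'}{e}$ or $C = \app{e}{C'}$, we use the induction hypothesis on the branch containing the hole and the Fundamental Property on the hole-free branch (since $e$ is well typed by the definition of $\jctxtyping{}{}{}{}$), then combine via compatibility case~(3). The $\fix{f}{x}{C'}$ case is analogous, using compatibility case~(4) after extending the environment with both $f$ and $x$. For the coercion context $C = \capp{c}{C'}$ we apply the induction hypothesis to $C'$ and then Lemma~\ref{lem:stlcCoercionCompat} (taking $\tau_1 = \tau_2 = \tau_0$ there, or more precisely using it with the diagonal instance); the remaining leaf cases $\natconst{n}$ and $\vunit$ do not contain the hole and are vacuous.

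The main subtlety—rather than a genuine obstacle—is handling binders correctly: when the hole occurs under a $\lambda$ or a $\mathsf{fix}$, the environment at the hole grows, so the statement must be formulated with the hole's environment as a parameter, and the generalized induction hypothesis must be strong enough to feed back into the compatibility lemmas. One must also be careful that the definition of $\jctxtyping{\Gamma}{C}{\tau_1}{\tau_2}$ implicitly guarantees that every hole-free subterm of $C$ is typed in the appropriate environment, which is what licenses the appeal to the Fundamental Property. Beyond this bookkeeping, every case is a mechanical match between a context former and its compatibility lemma, so the proof is routine once the statement is correctly generalized.
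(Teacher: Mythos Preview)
Your proposal is correct and follows essentially the same approach as the paper: induction on the (derivation of the) context typing, invoking the appropriate compatibility lemma in each case, the Fundamental Property for hole-free subterms, and the empty substitution for the base case. Your explicit generalization to an ``open precongruence'' handling binders, and the two-step use of Lemma~\ref{lem:stlcCoercionCompat} for the coercion case, are exactly the bookkeeping the paper's terse proof leaves implicit.
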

\begin{proof}
By induction on the derivation of context typing, using the
appropriate compatibility lemma in each case. For contexts containing
subterms we also need the fundamental property. For the empty context
we use the empty substitution, since the empty substitutions are in
relation $\relG{\envempty}{\envempty}$.
\end{proof}

\begin{lem}[Adequacy]\label{lem:stlcAdequacy}
If $(e_1, e_2)\in\relE{\tau}{\tau}$ then $\relApprox{e_1}{e_2}$.
\end{lem}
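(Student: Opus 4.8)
The plan is to instantiate the definition of the expression relation $\relE{\tau}{\tau}$ with a suitable pair of related evaluation contexts, namely the empty ones, and then read off the desired approximation directly. Concretely: to use $(e_1,e_2)\in\relE{\tau}{\tau}$ I need a pair of contexts $(E_1,E_2)\in\relK{\tau}{\tau}$, since the definition gives $\relApprox{\inctx{E_1}{e_1}}{\inctx{E_2}{e_2}}$ for every such pair. The obvious candidate is $E_1=E_2=\mtectx$, because then $\inctx{E_1}{e_1}=e_1$ and $\inctx{E_2}{e_2}=e_2$, and we are done immediately once membership in $\relK{\tau}{\tau}$ is established.

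So the one real step is to show $(\mtectx,\mtectx)\in\relK{\tau}{\tau}$. Unfolding the definition of $\relK{\tau}{\tau}$, this amounts to proving that for every $(v_1,v_2)\in\relV{\tau}{\tau}$ we have $\relApprox{\inctx{\mtectx}{v_1}}{\inctx{\mtectx}{v_2}}$, i.e.\ $\relApprox{v_1}{v_2}$. By the definition of $\rawRelApprox$, and since this is to hold at every index $k$, it suffices to observe that $\stopsn{k}{v_1}$ trivially implies $\stops{v_2}$: indeed $v_2$ is already a value, so it terminates in zero reduction steps. Hence $(\mtectx,\mtectx)\in\relK{\tau}{\tau}$ unconditionally, for any type $\tau$.

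I do not anticipate a serious obstacle here—the lemma is essentially the ``escape hatch'' from biorthogonality, and the proof is the standard one for biorthogonal logical relations. The only point requiring a moment's care is the bookkeeping for step-indexing: the statement $\relApprox{v_1}{v_2}$ is a formula in the LSLR logic, so one argues that it is valid at an arbitrary world $k$, and there the implication $\stopsn{k}{v_1}\implies\stops{v_2}$ holds because its conclusion is unconditionally true for a value $v_2$. With $\relK{\tau}{\tau}$ inhabited by the pair of empty contexts, specializing the hypothesis $(e_1,e_2)\in\relE{\tau}{\tau}$ to this pair yields $\relApprox{e_1}{e_2}$ directly.
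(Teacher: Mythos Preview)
Your proposal is correct and matches the paper's own proof exactly: instantiate the expression relation with the pair of empty contexts, and verify $(\mtectx,\mtectx)\in\relK{\tau}{\tau}$ by noting that values always terminate. The paper's proof is simply a terser version of what you wrote.
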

\begin{proof}
Let us show $\relApprox{\inctx{\mtectx}{e_1}}{\inctx{\mtectx}{e_2}}$.
Using the assertion $(e_1, e_2)\in\relE{\tau}{\tau}$, 
it suffices to show $(\mtectx, \mtectx)\in\relK{\tau}{\tau}$,
which is trivial, since values always terminate.
\end{proof}

\begin{thm}[Soundness]\label{thm:stlcRelSoundness}
If $\validAt{k}{\relEop{\Gamma}{\Gamma}{e_1}{e_2}{\tau}{\tau}}$
holds for every $k$,
then $\jctxapprox{\Gamma}{e_1}{e_2}{\tau}$.
\end{thm}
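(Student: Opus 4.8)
The plan is to derive the statement by chaining Precongruence (Lemma~\ref{lem:stlcPrecong}) and Adequacy (Lemma~\ref{lem:stlcAdequacy}), using the observation that whenever a program terminates it does so with only finitely many $\beta$-steps, and this finite count is exactly the step index at which we need the logical relation.

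First I would fix an arbitrary context $C$ and type $\tau'$ with $\jctxtyping{\Gamma}{C}{\tau}{\tau'}$ and assume $\stops{\inctx{C}{e_1}}$. By definition of $\stops{\cdot}$, there is a reduction of $\inctx{C}{e_1}$ to a value; let $k$ be the number of $\beta$-steps it performs, so that $\stopsn{k}{\inctx{C}{e_1}}$ holds. Now instantiate the hypothesis at this particular $k$: we have $\validAt{k}{\relEop{\Gamma}{\Gamma}{e_1}{e_2}{\tau}{\tau}}$. Since Lemma~\ref{lem:stlcPrecong} is valid in the model, it holds at world $k$, so from the hypothesis we obtain $\validAt{k}{(\inctx{C}{e_1},\inctx{C}{e_2})\in\relE{\tau'}{\tau'}}$. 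Applying Lemma~\ref{lem:stlcAdequacy}, again valid at every world, gives $\validAt{k}{\relApprox{\inctx{C}{e_1}}{\inctx{C}{e_2}}}$. Unfolding the definition of $\rawRelApprox$ at index $k$, this is precisely the implication $\stopsn{k}{\inctx{C}{e_1}} \implies \stops{\inctx{C}{e_2}}$, which together with $\stopsn{k}{\inctx{C}{e_1}}$ established above yields $\stops{\inctx{C}{e_2}}$, as required.

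The only subtle point is the bookkeeping between the genuine termination predicate $\stops{\cdot}$ and its step-indexed refinement $\stopsn{k}{\cdot}$: the argument rests on $\stops{e}$ being equivalent to $\exists k.\,\stopsn{k}{e}$, so that an arbitrary terminating computation always hands us a concrete index at which to invoke the hypothesis. Everything else is routine composition of the two lemmas inside the LSLR logic, noting that both are proved unconditionally and are therefore available at every world.
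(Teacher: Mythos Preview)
Your proof is correct and follows essentially the same route as the paper: combine Precongruence (Lemma~\ref{lem:stlcPrecong}) and Adequacy (Lemma~\ref{lem:stlcAdequacy}) to obtain $\validAt{k}{\relApprox{\inctx{C}{e_1}}{\inctx{C}{e_2}}}$, then instantiate $k$ with the number of $\beta$-steps in the terminating run of $\inctx{C}{e_1}$. The only cosmetic difference is that the paper first derives the approximation at every $k$ and then specializes, whereas you fix $k$ up front and apply the lemmas at that world; the content is identical.
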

\begin{proof}
Suppose $\jctxtyping{\Gamma}{C}{\tau}{\tau_0}$ and
$\stops{\inctx{C}{e_1}}$, we need to show $\stops{\inctx{C}{e_2}}$. By
Lemma~\ref{lem:stlcPrecong} and Lemma~\ref{lem:stlcAdequacy} we know
$\validAt{k}{\relApprox{\inctx{C}{e_1}}{\inctx{C}{e_2}}}$ for every
$k$. Taking $k$ to be the number of steps in which $\inctx{C}{e_1}$
terminates, we have that $\inctx{C}{e_2}$ also terminates, by the
definition of $\rawRelApprox$.
\end{proof}

%%%%%%%%%%%%%%%%%%%%%%%
\subsection{Coherence of the coercion semantics}
\label{subsec:coherenceSTLC}

Having established soundness of the logical relations, we are in a
position to prove the main coherence lemma, phrased in terms of the
logical relations, and the coherence theorem.

\begin{lem}\label{lem:stlcCoherenceLemma}
If $\jtree{D_i}{\jtyping{\Gamma_i}{e}{\tau_i}}$ for $i=1,2$
are two typing derivations for the same term $e$ of the source language,
then $\relEop{\envtrans{\Gamma_1}}{\envtrans{\Gamma_2}}{
    \exptrans{e}{D_1}}{\exptrans{e}{D_2}}{
    \typtrans{\tau_1}}{\typtrans{\tau_2}}$.
\end{lem}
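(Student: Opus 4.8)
The plan is to argue by induction on the combined size $|D_1| + |D_2|$ of the two derivations, with a case analysis on the last rules they use. The point that makes everything work is that the subsumption rule \ruleSourceTSub\ is the only non-syntax-directed rule, and that heterogeneity of the logical relation, together with the coercion compatibility lemma (Lemma~\ref{lem:stlcCoercionCompat}), allows us to eliminate a use of \ruleSourceTSub\ at the root of \emph{one} of the two derivations at a time, without having to touch the other. Note that soundness, precongruence, and adequacy are not needed here; they enter only when turning this lemma into the coherence theorem.

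First I would handle the subsumption cases. Suppose $D_1$ ends with \ruleSourceTSub, say $D_1 = \ruleSourceTSub(D_1', D_c)$ where $D_1'$ derives $\jtyping{\Gamma_1}{e}{\tau}$ and $D_c$ derives $\jsubtype{\tau}{\tau_1}$. Then $\exptrans{e}{D_1} = \capp{c}{\exptrans{e}{D_1'}}$ with $c = \subtrans{\jsubtype{\tau}{\tau_1}}{D_c}$, and Lemma~\ref{lem:stlcTransPreservesTypes} gives $\jctyping{c}{\typtrans{\tau}}{\typtrans{\tau_1}}$. Applying the induction hypothesis to the strictly smaller pair $(D_1', D_2)$ — still a pair of derivations for the same term $e$ — yields $\relEop{\envtrans{\Gamma_1}}{\envtrans{\Gamma_2}}{\exptrans{e}{D_1'}}{\exptrans{e}{D_2}}{\typtrans{\tau}}{\typtrans{\tau_2}}$, and then Lemma~\ref{lem:stlcCoercionCompat}(1) closes the case. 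The case where $D_2$ ends with \ruleSourceTSub\ is symmetric, using Lemma~\ref{lem:stlcCoercionCompat}(2). Observe that after such a step the two sides genuinely have different types, which is precisely why the heterogeneous formulation is indispensable.

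It then remains to treat the case where neither $D_1$ nor $D_2$ ends with \ruleSourceTSub. Since both derive a type for the same term $e$, their final rules are fixed by the outermost constructor of $e$ and therefore coincide. In each such case the translation $\exptrans{e}{D_i}$ is the matching target constructor applied to the translations of the immediate subderivations, so invoking the induction hypothesis on those (strictly smaller) subderivations and then the corresponding clause of the compatibility lemma suffices. For instance, for $e = \app{e_1}{e_2}$ we have $D_i = \ruleSourceTApp(D_i^1, D_i^2)$ with $D_i^1$ deriving $\jtyping{\Gamma_i}{e_1}{\tarrow{\sigma_i}{\tau_i}}$ and $D_i^2$ deriving $\jtyping{\Gamma_i}{e_2}{\sigma_i}$; the induction hypothesis relates $\exptrans{e_1}{D_1^1}$ and $\exptrans{e_1}{D_2^1}$ at the (translated) arrow types $\tarrow{\typtrans{\sigma_1}}{\typtrans{\tau_1}}$, $\tarrow{\typtrans{\sigma_2}}{\typtrans{\tau_2}}$, and $\exptrans{e_2}{D_1^2}$, $\exptrans{e_2}{D_2^2}$ at $\typtrans{\sigma_1}$, $\typtrans{\sigma_2}$, which is exactly the input demanded by the application clause of the compatibility lemma. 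The variable, abstraction, \textsf{fix}, and numeral cases are analogous.

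I expect no deep obstacle once the compatibility and coercion-compatibility lemmas are available; the only genuine subtlety is the choice of well-founded measure. Because stripping a top-level \ruleSourceTSub\ leaves the term $e$ unchanged, the induction cannot be on the structure of $e$ alone; it must be on $|D_1| + |D_2|$ (or, equivalently, a lexicographic combination of the size of $e$ with the derivation sizes), so that peeling off a subsumption on either side strictly decreases the measure.
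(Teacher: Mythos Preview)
Your proposal is correct and matches the paper's own proof essentially step for step: the paper too proceeds by induction on the pair of derivations (noting that at least one strictly decreases in each case), peels off a top-level \ruleSourceTSub\ on either side using Lemma~\ref{lem:stlcCoercionCompat} together with Lemma~\ref{lem:stlcTransPreservesTypes}, and otherwise observes that the remaining rules are syntax-directed so the corresponding compatibility clause applies. Your explicit choice of $|D_1|+|D_2|$ as the well-founded measure is exactly what the paper's phrase ``at least one of these derivations is decreased in every case'' amounts to.
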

\begin{proof}
The proof follows by induction on the structure of both derivations
$D_1$ and $D_2$. At least one of these derivations is decreased in
every case. When one of the derivations starts with the subsumption
rule ($\ruleSourceTSub$), we apply Lemma~\ref{lem:stlcCoercionCompat}.
The coercion that we get after the translation is well-typed by
Lemma~\ref{lem:stlcTransPreservesTypes}. In other cases we just apply
the appropriate compatibility lemma.
\end{proof}

\begin{thm}[Coherence]
If $D_1$ and $D_2$ are derivations of the same typing judgment 
$\jtyping{\Gamma}{e}{\tau}$, then 
$\jctxapprox{\envtrans{\Gamma}}{\exptrans{e}{D_1}}{\exptrans{e}{D_2}}{\typtrans{\tau}}$.
\end{thm}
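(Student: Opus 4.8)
The plan is to obtain the theorem as an immediate composition of two results already established: the logical-relations coherence lemma (Lemma~\ref{lem:stlcCoherenceLemma}) and the soundness theorem (Theorem~\ref{thm:stlcRelSoundness}). First I would instantiate Lemma~\ref{lem:stlcCoherenceLemma} with $\Gamma_1 = \Gamma_2 = \Gamma$, $\tau_1 = \tau_2 = \tau$, and the two given derivations $D_1$ and $D_2$ of $\jtyping{\Gamma}{e}{\tau}$. This yields $\relEop{\envtrans{\Gamma}}{\envtrans{\Gamma}}{\exptrans{e}{D_1}}{\exptrans{e}{D_2}}{\typtrans{\tau}}{\typtrans{\tau}}$. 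Since this assertion is a theorem of the LSLR logic proved with no remaining step-indexed hypotheses, it is valid at every index $k$, which is exactly the premise demanded by Theorem~\ref{thm:stlcRelSoundness}. Applying that theorem gives $\jctxapprox{\envtrans{\Gamma}}{\exptrans{e}{D_1}}{\exptrans{e}{D_2}}{\typtrans{\tau}}$, which is the claim.

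If one wants the stronger conclusion that the two translations are contextually equivalent, rather than merely one-sided approximation, I would note that Lemma~\ref{lem:stlcCoherenceLemma} is symmetric in its two derivation arguments: rerunning the argument above with the roles of $D_1$ and $D_2$ swapped yields $\jctxapprox{\envtrans{\Gamma}}{\exptrans{e}{D_2}}{\exptrans{e}{D_1}}{\typtrans{\tau}}$ as well, and hence $\exptrans{e}{D_1}$ and $\exptrans{e}{D_2}$ are contextually equivalent in the target calculus. (Well-typedness of the two translated terms, needed implicitly when invoking the coherence lemma but not by soundness itself, comes from Lemma~\ref{lem:stlcTransPreservesTypes}.)

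At this stage there is essentially no obstacle to overcome: all the substantive work has already been done, in the compatibility and coercion-compatibility lemmas, the fundamental property, precongruence and adequacy — which together constitute Theorem~\ref{thm:stlcRelSoundness} — and in the double induction on $D_1$ and $D_2$ inside Lemma~\ref{lem:stlcCoherenceLemma}, where the subsumption case is discharged by coercion compatibility. The single point worth stating carefully is the handling of the step index: Lemma~\ref{lem:stlcCoherenceLemma} must be read as an assertion that holds at \emph{all} indices $k$ (which it does, being established inside the logic with no index-dependent assumptions), because soundness consumes precisely such a uniform family of validities; a validity at a single fixed index would not be enough to conclude termination transfer for an arbitrary terminating $\inctx{C}{\exptrans{e}{D_1}}$.
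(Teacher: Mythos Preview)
Your proposal is correct and follows exactly the paper's own approach: the paper's proof is the single line ``Immediately from Lemma~\ref{lem:stlcCoherenceLemma} and Theorem~\ref{thm:stlcRelSoundness},'' which is precisely the composition you spell out. Your additional remarks about uniform validity at all step indices and the symmetric instantiation for full contextual equivalence are accurate elaborations that the paper leaves implicit.
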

\begin{proof}
Immediately from Lemma~\ref{lem:stlcCoherenceLemma} and 
Theorem~\ref{thm:stlcRelSoundness}.
\end{proof}

%%%%%%%%%%%%%%%%%%%%%%%%%
\subsection{Variants}
\label{subsec:variants}

In this section we briefly discuss some possible extensions of the
results presented so far.

%%%%%%%%%%%%%
\subsubsection{Coercions as $\lambda$-terms}
\label{subsubsec:coercionsAsTerms}

The coercion semantics described here translates the source language
into the language with explicit coercions. We chose coercions to be a
separate syntactic category, because we found it very convenient,
especially for proving Lemma~\ref{lem:stlcCoercionCompat}.  However,
one can define a coercion semantics which translates subtyping proofs
directly to $\lambda$-expressions. Our result can be easily extended
for such a translation. Let $\erase{e}$ be a term $e$ with all the
coercions replaced by the corresponding expressions. To prove that for
any contextually equivalent terms $e_1$ and $e_2$ in the language with
coercions, terms $\erase{e_1}$ and $\erase{e_2}$ are contextually
equivalent in the language without coercions, we need three simple
facts that can be easily verified:
\begin{enumerate}
\item every well-typed term in the language without coercions is
    well typed in the language with coercions,
\item term $e$ terminates iff $\erase{e}$ terminates,
\item if context $C$ does not contain coercions then
  $\inctx{C}{\erase{e}} = \erase{\inctx{C}{e}}$.
\end{enumerate}

%%%%%%%%%%%%%
\subsubsection{Multiple base types} 
\label{subsubsec:multiBaseTypes}

In this presentation we consider languages with only one base type.
Adding more base types and some subtyping between them will not change
the general shape of the proof, but defining logical relations for
such a case is a little trickier.

Let $\baseTpSet$ be a set of base types and $\baseSub$ be a subtyping
relation on them. Assume for every $b\in\baseTpSet$ we have set
$\baseVal{b}$ of constants of type $b$.  These constants are values in
both source and target calculi.  Additionally, for each
$b\baseSub{}b'$ we have a corresponding coercion $\baseCrc{b}{b'}$ and
a function $\baseCrcFun{b}{b'}\colon\baseVal{b}\to\baseVal{b'}$.  The
$\iota$-rule for a coercion $\baseCrc{b}{b'}$ is defined as follows:
if $v\in\baseVal{b}$, then
$\iotared{\inctx{E}{\capp{\baseCrc{b}{b'}}{v}}}
{\inctx{E}{\baseCrcFun{b}{b'}(v)}}$.

The coherence of coercion semantics requires coherence on base types.
More precisely, we assume the following properties:
\begin{enumerate}
\item relation $\baseSub$ is reflexive and transitive;
\item for each $b\in\baseTpSet$ the function
  $\baseCrcFun{b}{b}$ is an identity;
\item if $b_1\baseSub{}b_2$ and $b_2\baseSub{}b_3$ then
  $\baseCrcFun{b_2}{b_3}\circ\baseCrcFun{b_1}{b_2} = \baseCrcFun{b_1}{b_3}$.
\end{enumerate}

We would stipulate that two values
$v_1$ and $v_2$ are related for base types $b_1$ and $b_2$ iff for
every common supertype $b$ of $b_1$ and $b_2$, coercing $v_1$ and
$v_2$ to $b$ yields the same constant:
\vspace{2mm}
\[
(v_1, v_2)\in\relV{b_1}{b_2} \iff
  v_1\in\baseVal{b_1} \wedge v_2\in\baseVal{b_2} \wedge
  \left(\forall b.b_1\baseSub{}b \wedge b_2\baseSub{}b \Rightarrow
    \baseCrcFun{b_1}{b}(v_1) = \baseCrcFun{b_2}{b}(v_2)\right)
    \]

\vspace{2mm}\noindent
Note that since relation $\baseSub$ is reflexive, for $b_1=b_2$ this
definition yields the identity relation on values of a base type
$b_1$, the same as in Section~\ref{subsec:logicalRelationsSTLC}.

Moreover, we have to be more careful with defining the logical
relation for the $\tunit$ type. The proof of
Lemma~\ref{lem:stlcCoercionCompat} relies on the fact that for every
$\jctyping{c}{\tau_1}{\tau_2}$ and $(v_1,v_2)\in\relV{\tau}{\tau_1}$,
the expression $\capp{c}{v_2}$ either is or reduces to a value. To
ensure that property, the relation for $\tunit$ and base type $b$
should relate any value with any value of type $b$:
\vspace{2mm}
\begin{eqnarray*}
(v_1, v_2)\in\relV{\tunit}{b} & \iff v_2\in\baseVal{b} \\[1mm]
(v_1, v_2)\in\relV{b}{\tunit} & \iff v_1\in\baseVal{b}
\end{eqnarray*}

%%%%%%%%%%%%%%%%%%%%%%%%%%%%%%%%%%%%%%%%%%%%%%%%%
\section{Coherence of a CPS translation of control-effect subtyping}
\label{sec:coherenceEffectSubtyping}

In this section we show that the results presented in
Section~\ref{sec:introducingLogicalRelations} can be adapted to a
considerably more complex calculus---a calculus of delimited control
with control-effect subtyping~\cite{Materzok-Biernacki:ICFP11}.

%%%%%%%%%%%%%%%%%%%%%%%%%
\subsection{Delimited continuations, informally}
\label{subsec:delimitedContinuations}

Control operators for delimited continuations, introduced
independently by Felleisen~\cite{Felleisen:POPL88} and by Danvy and
Filinski~\cite{Danvy-Filinski:LFP90}, allow the programmer to delimit
the current context of computation and to abstract such a delimited
context as a first-class value. They have found numerous applications
(see, e.g.,~\cite{Biernacka-al:LMCS05} for a list), including
Filinski's result showing that all computational effects are
expressible in terms of the delimited-control operators
$\mathsf{shift}$ and $\mathsf{reset}$~\cite{Filinski:POPL94}.

The calculus of delimited control studied in this work is the
call-by-value $\lambda$-calculus extended with natural numbers,
recursion, and the control operators $\mathsf{shift_0}$ ($\rawshiftz$)
and $\mathsf{reset_0}$ ($\rawresetz$)---a variant of $\mathsf{shift}$
and $\mathsf{reset}$~\cite{Danvy-Filinski:LFP90}. These operators have
recently enjoyed an upsurge of interest due to their considerable
expressive power and connections with the
$\lambda\mu$-calculi~\cite{Materzok-Biernacki:ICFP11,Materzok-Biernacki:APLAS12,
  Materzok:CSL13,Downen-Ariola:JFP14,Downen-Ariola:ICFP14,MunchMaccagnoni:LICS14}. Both
the calculus and the coercion semantics we consider in the rest of the
article are based on the type system and the CPS translation
introduced by Materzok and the first
author~\cite{Materzok-Biernacki:ICFP11}.

We will define the semantics of the calculus by a CPS translation to a
target calculus endowed with a reduction semantics, but if we were to
directly give reduction rules for $\mathsf{shift_0}$ and
$\mathsf{reset_0}$, they would be~\cite{Materzok-Biernacki:ICFP11}:
\vspace{2mm}
\begin{eqnarray*}
\inctx{F}{\resetz{\inctx{E}{\shiftz{x}{e}}}} 
& \rawredi{} &
\inctx{F}{\subst{e}{x}{\lam{y}{\resetz{\inctx{E}{y}}}}}
\\[1mm]
\inctx{F}{\resetz{v}} 
& \rawredi{} &
\inctx{F}{v} 
\end{eqnarray*}

\vspace{2mm}\noindent
where $E$ is a pure call-by-value evaluation context representing the
current delimited continuation (delimited by $\rawresetz$ and captured
by $\rawshiftz$), and $F$ is a metacontext, i.e., a general evaluation
context that consists of a number of pure evaluation contexts
separated by control delimiters.

Let us consider a simple example
\vspace{2mm}
\[
\add
    {\natconst{1}}
    {\resetz
      {\add
        {\natconst{10}}
        {\shiftz
          {k}
          {\add
            {\natconst{100}}
            {\app
              {k}
              {(\app{k}{0})}}}}}}
    \]

\vspace{2mm}\noindent
that represents an arithmetic expression over natural numbers. Here is
how this expression is evaluated according to the reduction rules (we
assume the standard reduction rules for $+$ and the call-by-value
$\beta$-reduction):
\vspace{2mm}
\[
\begin{array}{llr}
\add
    {\natconst{1}}
    {\resetz
      {\add
        {\natconst{10}}
        {\shiftz
          {k}
          {\add
            {\natconst{100}}
            {\app
              {k}
              {(\app{k}{0})}}}}}}
& \rawredi{} 
& \quad\quad (1)
\\[1mm]
\add
    {\natconst{1}}
    {\add
      {\natconst{100}}
      {\app
        {(\lam{y}{\resetz{\add{\natconst{10}}{y}}})}
        {(\app{(\lam{y}{\resetz{\add{\natconst{10}}{y}}})}{0})}}}
& \rawredi{}^3
& \quad\quad (2)
\\[1mm]
\add
    {\natconst{1}}
    {\add
      {\natconst{100}}
      {\app
        {(\lam{y}{\resetz{\add{\natconst{10}}{y}}})}
        {10}}}
& \rawredi{}^3
& \quad\quad (3)
\\[1mm]
\add
    {\natconst{1}}
    {\add
      {\natconst{100}}
      {20}}
& \rawredi{}^2
& \quad\quad (4)
\\[1mm]
121
&
\end{array}
\]

\vspace{2mm}\noindent
In step (1) the delimited continuation
$\lam{y}{\resetz{\add{\natconst{10}}{y}}}$ is captured and substituted
for $k$. In step (2) the captured continuation is applied to $0$, and
the result of this application, the value 10, is returned---the
captured continuation is functional in that it is composed with the
remaining computation, rather than abortive as it would be the case
for $\mathsf{call/cc}$. In step (3) the captured continuation is
applied to the passed value, and again it returns a value to the
remaining computation that consists in simple arithmetic, carried out
in step (4).

In contrast to $\mathsf{shift}$, $\mathsf{shift_0}$ is a control
operator that can explore and reorganize an arbitrary portion of the
metacontext. Here is an example:
\vspace{2mm}
\[
\begin{array}{llr}
\resetz{
  \add
      {\natconst{1}}
      {\resetz
      {\mul
        {\natconst{10}}
        {\shiftz
          {k_1}
          {\shiftz
            {k_2}
            {\app
              {k_1}
              {(\app{k_2}{0})}}}}}}}
& \rawredi{} 
& \quad\quad (1)
\\[1mm]
\resetz{
  \add
      {\natconst{1}}
      {\shiftz
        {k_2}
        {\app
          {(\lam{y}{\resetz{\mul{\natconst{10}}{y}}})}
          {(\app{k_2}{0})}}}}
& \rawredi{}
& \quad\quad (2)
\\[1mm]
\app
    {(\lam{y}{\resetz{\mul{\natconst{10}}{y}}})}
    {(\app{(\lam{y}{\resetz{\add{\natconst{1}}{y}}})}{0})}
& \rawredi{}^3
& \quad\quad (3)
\\[1mm]
\app
    {(\lam{y}{\resetz{\mul{\natconst{10}}{y}}})}
    {\natconst{1}}
& \rawredi{}^3
& \quad\quad (4)
\\[1mm]
10
&
\end{array}
\]

\vspace{2mm}\noindent
In step (1) $k_1$ is bound to the captured continuation representing
multiplication by $10$. In step (2) $k_2$ is bound to the captured
continuation representing incrementation by $1$. In step (3) and (4) 0
is first incremented and the result is then multiplied by 10---the
order of these operations is reversed compared to their occurrence in
the initial expression, which is achieved by repeatedly shifting
delimited continuations in steps (1) and (2) and by composing them in
the desired order.

Expressive type systems for delimited continuations are built around
the idea that the type of an expression depends on the type of a
context in which the expression is
immersed~\cite{Danvy-Filinski:DIKU89,Biernacka-Biernacki:PPDP09}. For
example, the expression
\vspace{2mm}
\[
\add{42}{\shiftz{k}{k}}
\]

\vspace{2mm}\noindent is well typed in such systems.  Given a context
$E$ that can be plugged with a value of type $\tnat$ and that returns
a value of some type $\tau$, assuming that $E$ does not trigger
control effects when plugged with a value, the evaluation of this
expression would return a value of type $\tarrow{\tnat}{\tau}$. Then,
given a metacontext $F$ that expects a value of that type, the
expression
\vspace{2mm}
\[
\inctx{F}
      {\resetz
        {\inctx{E}
          {\add{42}{\shiftz{k}{k}}}}}
      \]

\vspace{2mm}\noindent would be well~typed. We observe that the answer
type $\tau$ of the context $E$ differs from $\tarrow{\tnat}{\tau}$,
the type expected by the metacontext $F$. Such answer-type
modification is characteristic of type systems \`a la Danvy and
Filinski~\cite{Danvy-Filinski:DIKU89} and is necessary to exploit the
expressive power of typed delimited-control
operators~\cite{Asai-Kameyama:APLAS07,Biernacka-Biernacki:PPDP09,
  Materzok-Biernacki:ICFP11}.

Since the control operator $\mathsf{shift_0}$ is allowed to explore
the metacontext arbitrarily deep, the type of the expression should
actually depend not only on the type of its nearest enclosing context,
but also on the types of the remaining contexts that form the
metacontext. For example, the type of the term
\vspace{2mm}
\[
\shiftz{k_1}{\shiftz{k_2}{\app{k_1}{(\app{k_2}{\natconst{42}})}}}
\]

\vspace{2mm}\noindent
would express that given a context $E_1$ expecting a value of type
$\tau$ and with answer type $\tau'$, and a context $E_2$ expecting a
value of type $\tnat$ and with answer type $\tau$, the type of the
expression
\vspace{2mm}
\[
\resetz
    {\inctx{E_2}
    {\resetz
      {\inctx{E_1}
             {\shiftz{k_1}{\shiftz{k_2}{\app{k_1}{(\app{k_2}{\natconst{42}})}}}}}}}
\]

\vspace{2mm}\noindent
is $\tau'$. In fact the types in this example could be more complex
and express, e.g., that both $E_1$ and $E_2$ are effectful.
 
The calculus considered in the rest of this article was built around
the idea of types describing the relevant portion of the metacontext,
where, under some conditions, an expression that imposes certain
requirements on the metacontext can be used with a metacontext of
which more is known or assumed~\cite{Materzok-Biernacki:ICFP11}. For
example, a pure expression such as the constant $\natconst{42}$ can be
plugged in a pure evaluation context expecting values of type $\tnat$,
but also in arbitrarily complex metacontexts that have the inner-most
context accepting values of type $\tnat$. Coercions between types
describing metacontexts are possible thanks to the subtyping relation
that lies at the heart of the calculus presented in
Section~\ref{subsec:calculusEffects}.

%%%%%%%%%%%%%%%%%%%%%%%%%%%%%%%%%%%%%%%%%%%%%%%%%
\subsection{The lambda calculus with delimited control
  and effect subtyping}
\label{subsec:calculusEffects}

\begin{figure}[!!t]
  \begin{mdframed}
\begin{tabularx}{\textwidth}{rcXr}
$\tau$ & $\bnfdef$ & $\tnat \bnfor \tarrow{\tau}{T}$
    & (pure types) \\[2mm]
$T$ & $\bnfdef$ & $\puretype{\tau} \bnfor \efftype{\tau}{T}{T}$
    & (types) \\[2mm]
$e$ & $\bnfdef$ & $\var{x} \bnfor \lam{x}{e} \bnfor \app{e}{e}
  \bnfor \fix{x}{x}{e}
  \bnfor \shiftz{x}{e} \bnfor \resetz{e}
  \bnfor \natconst{n}$
    & (expressions)
\end{tabularx}
\vspace{4mm}
\hrule
\vspace{4mm}
{
\begin{center}
    \AxiomC{\phantomPremise}
    \rulelabel{\ruleSourceSRefl}
    \UnaryInfC{$\jsubtype{T}{T}$}
    \DisplayProof
\quad
    \AxiomC{$\jsubtype{T_2}{T_3}$}
    \AxiomC{$\jsubtype{T_1}{T_2}$}
    \rulelabel{\ruleSourceSTrans}
    \BinaryInfC{$\jsubtype{T_1}{T_3}$}
    \DisplayProof
\quad
    \AxiomC{$\jsubtype{\puretype{\tau_2}}{\puretype{\tau_1}}$}
    \AxiomC{$\jsubtype{T_1}{T_2}$}
    \rulelabel{\ruleSourceSArrow}
    \BinaryInfC{$\jsubtype{\puretype{(\tarrow{\tau_1}{T_1})}}{
        \puretype{(\tarrow{\tau_2}{T_2})}}$}
    \DisplayProof
\\[5mm]
    \AxiomC{$\jsubtype{\puretype{\tau_1}}{\puretype{\tau_2}}$}
    \AxiomC{$\jsubtype{T_2}{T_1}$}
    \AxiomC{$\jsubtype{U_1}{U_2}$}
    \rulelabel{\ruleSourceSCons}
    \TrinaryInfC{$\jsubtype{\efftype{\tau_1}{T_1}{U_1}}{
        \efftype{\tau_2}{T_2}{U_2}}$}
    \DisplayProof
\quad
    \AxiomC{$\jsubtype{T_1}{T_2}$}
    \rulelabel{\ruleSourceSLift}
    \UnaryInfC{$\jsubtype{\puretype{\tau}}{\efftype{\tau}{T_1}{T_2}}$}
    \DisplayProof
\\[5mm]
    \AxiomC{$\jtyping{\Gamma}{e}{T}$}
    \AxiomC{$\jsubtype{T}{U}$}
    \rulelabel{\ruleSourceTSub}
    \BinaryInfC{$\jtyping{\Gamma}{e}{U}$}
    \DisplayProof
\quad
    \AxiomC{$(x:\tau)\in\Gamma$}
    \rulelabel{\ruleSourceTVar}
    \UnaryInfC{$\jtyping{\Gamma}{\var{x}}{\puretype{\tau}}$}
    \DisplayProof
\quad
    \AxiomC{$\jtyping{\eextend{\Gamma}{x}{\tau}}{e}{T}$}
    \rulelabel{\ruleSourceTAbs}
    \UnaryInfC{$\jtyping{\Gamma}{\lam{x}{e}}{\puretype{\tarrow{\tau}{T}}}$}
    \DisplayProof
\\[5mm]
    \AxiomC{$\jtyping{\Gamma}{e_1}{\puretype{\tarrow{\tau}{T}}}$}
    \AxiomC{$\jtyping{\Gamma}{e_2}{\puretype\tau}$}
    \rulelabel{\ruleSourceTPApp}
    \BinaryInfC{$\jtyping{\Gamma}{\app{e_1}{e_2}}{T}$}
    \DisplayProof
\\[5mm]
    \AxiomC{$\jtyping{\Gamma}{e_1}{
        \efftype{(\tarrow{\tau_2}{\efftype{\tau_1}{U_4}{U_3}})}{U_2}{U_1}}$}
    \AxiomC{$\jtyping{\Gamma}{e_2}{\efftype{\tau_2}{U_3}{U_2}}$}
    \rulelabel{\ruleSourceTApp}
    \BinaryInfC{$\jtyping{\Gamma}{\app{e_1}{e_2}}{\efftype{\tau_1}{U_4}{U_1}}$}
    \DisplayProof
\\[5mm]
    \AxiomC{$\jtyping{
        \eextend{\eextend{\Gamma}{f}{\tarrow{\tau}{T}}}{x}{\tau}
        }{e}{T}$}
    \rulelabel{\ruleSourceTFix}
    \UnaryInfC{$\jtyping{\Gamma}{\fix{f}{x}{e}}{\puretype{\tarrow{\tau}{T}}}$}
    \DisplayProof
\quad
    \AxiomC{\phantomPremise}
    \rulelabel{\ruleSourceTConst}
    \UnaryInfC{$\jtyping{\Gamma}{\natconst{n}}{\puretype{\tnat}}$}
    \DisplayProof
\\[5mm]
    \AxiomC{$\jtyping{\eextend{\Gamma}{x}{\tarrow{\tau}{T}}}{e}{U}$}
    \rulelabel{\ruleSourceTSft}
    \UnaryInfC{$\jtyping{\Gamma}{\shiftz{x}{e}}{\efftype{\tau}{T}{U}}$}
    \DisplayProof
\quad
    \AxiomC{$\jtyping{\Gamma}{e}{\efftype{\tau}{\puretype{\tau}}{T}}$}
    \rulelabel{\ruleSourceTRst}
    \UnaryInfC{$\jtyping{\Gamma}{\resetz{e}}{T}$}
    \DisplayProof
\end{center}}
\end{mdframed}
\caption{\label{fig:shiftSource}The source language---the
  $\lambda$-calculus with delimited control and effect subtyping}
\end{figure}

The syntax and typing rules of the calculus of delimited control are
shown in Figure~\ref{fig:shiftSource}. Our presentation differs
slightly from the original one~\cite{Materzok-Biernacki:ICFP11}, but
only in some inessential details, and the two type systems are equally
expressive. Types are either pure ($\tau$) or effect annotated
($\efftype{\tau}{T_1}{T_2}$). A type $\efftype{\tau}{T_1}{T_2}$
describes a computation of type $\tau$ that when run in a delimited
context with an answer type $T_1$, yields a computation described by
$T_2$. For instance, the expression
$\shiftz{k_1}{\shiftz{k_2}{\app{k_1}{(\app{k_2}{\natconst{42}})}}}$,
considered in the previous section, can be given type
$\efftype{\tnat}{\tnat}{(\efftype{\tnat}{\tnat}{\tnat})}$, whereas
$\shiftz{k}{\natconst{42}}$ can be given type
$\efftype{\tnat}{\efftype{\tnat}{\tnat}{\tnat}}{\tnat}$.

The calculus comprises the simply typed lambda calculus (rules
$\ruleSourceTSub$, $\ruleSourceTVar$, $\ruleSourceTAbs$,
$\ruleSourceTPApp$) with the standard subtyping rules
($\ruleSourceSRefl$, $\ruleSourceSTrans$, $\ruleSourceSArrow$),
general recursion ($\ruleSourceTFix$), natural numbers
($\ruleSourceTConst$), and the remaining rules that describe control
effects at the level of types. First, the rule $\ruleSourceTSft$
corresponds to the operational behavior of $\shiftz{x}{e}$: assuming
that $e$, possibly using a captured context $\resetz{E}$ of type
$\tarrow{\tau}{T}$, can be plugged into a metacontext $F$ of type $U$,
it is sound to use the whole expression with the metacontext
$\inctx{F}{\resetz{E}}$ of type $\efftype{\tau}{T}{U}$. Accordingly,
the rule $\ruleSourceTRst$ expresses that $\resetz{e}$ can be used in
a metacontext $F$ of type $T$ provided $e$ can be plugged in the
metacontext $\inctx{F}{\resetz{\hole}}$ of type
$\efftype{\tau}{\tau}{T}$. Then, the rule $\ruleSourceTApp$ describes
an effectful application $\app{e_1}{e_2}$, where each of the
computation $e_1$, $e_2$, and the application itself can manipulate
the metacontext. This is a standard rule found already in Danvy and
Filinski's type-and-effect system for $\mathsf{shift}$ and
$\mathsf{reset}$~\cite{Danvy-Filinski:DIKU89}, where it was derived
from the CPS semantics of these operators.

Finally, we have two rules governing the subtyping of effectful
computations, namely $\ruleSourceSCons$ and $\ruleSourceSLift$. The
rule $\ruleSourceSCons$ follows from the CPS interpretation of
delimited continuations---a type $\efftype{\tau}{T_1}{T_2}$ is
interpreted in CPS as $\teffarrow{\tau}{T_1}{T_2}$, where
$\Rightarrow$ means an effectful function space (see
Section~\ref{subsec:coercionSemanticsEffects}). So,
$\efftype{\tau_1}{T_1}{U_1}$ is a subtype of
$\efftype{\tau_2}{T_2}{U_2}$ when $\tarrow{\tau_2}{T_2}$ is a subtype
of $\tarrow{\tau_1}{T_1}$ (the argument type is, as always, treated
contravariantly), and $U_1$ is a subtype of $U_2$ (the result type is,
as always, treated covariantly). The rule $\ruleSourceSLift$ is more
interesting and it says that a pure computation can be considered
impure, provided the answer type of the inner-most context can be
coerced into the type of the rest of the metacontext. We have, e.g.,
$\jsubtype{\tnat}{\efftype{\tnat}{\tau}{\tau}}$ by using
$\ruleSourceSLift$, which combined with $\ruleSourceSCons$ also
implies, e.g.,
$\jsubtype{\efftype{\tnat}{\efftype{\tnat}{\tau}{\tau}}{\tau'}}{\efftype{\tnat}{\tnat}{\tau'}}$.

The following example illustrates some of the typing rules of the
type system.
\begin{exa}
  \label{ex:typing-deriv}
  Taking $T = \efftype{\tnat}{\tnat}{\tnat}$ and $\tau =
  \tarrow{\tnat}{T}$ as well as $\Gamma = x:\tarrow{\tnat}{\tnat},
  y:\tau, z:\tarrow{\tnat}{\tnat}$ and $\Delta =
  \eextend{\Gamma}{k}{\tarrow{\tnat}{\tnat}}$, we have the following
  derivation $D$:

\vspace{2mm}
\begin{prooftree}
  \AxiomC{\phantomPremise}
  \rulelabel{\ruleSourceTVar}
  \UnaryInfC{$\jtyping{\Gamma}{x}{\tarrow{\tnat}{\tnat}}$}
  \AxiomC{$D_1$}
  \noLine
  \UnaryInfC{$\jtyping{\Gamma}{y}{\efftype{\tau}{\tnat}{\tnat}}$}
  \AxiomC{$D_2$}
  \noLine
  \UnaryInfC{$\jtyping{\Gamma}{\shiftz{k}{\app{z}{(\app{k}{\natconst{42}})}}}{T}$}
  \rulelabel{\ruleSourceTApp}
  \BinaryInfC{$\jtyping{\Gamma}{\app{y}{\shiftz{k}{\app{z}{(\app{k}{\natconst{42}})}}}}{T}$}
  \rulelabel{\ruleSourceTRst}
  \UnaryInfC{$\jtyping{\Gamma}{\resetz{\app{y}{\shiftz{k}{\app{z}{(\app{k}{\natconst{42}})}}}}}{\tnat}$}
  \rulelabel{\ruleSourceTPApp}
  \BinaryInfC{$\jtyping{\Gamma}
    {\app{x}{\resetz{\app{y}{\shiftz{k}{\app{z}{(\app{k}{\natconst{42}})}}}}}}
    {\tnat}$}
\end{prooftree}

\vspace{1mm}\noindent
where $D_1$ is
\vspace{2mm}
\begin{prooftree}
  \AxiomC{\phantomPremise}
  \rulelabel{\ruleSourceTVar}
  \UnaryInfC{$\jtyping{\Gamma}{y}{\tau}$}
  \AxiomC{\phantomPremise}
  \rulelabel{\ruleSourceSRefl}
  \UnaryInfC{$\jsubtype{\tnat}{\tnat}$}
  \rulelabel{\ruleSourceSLift}
  \UnaryInfC{$\jsubtype{\tau}{\efftype{\tau}{\tnat}{\tnat}}$}
  \rulelabel{\ruleSourceTSub}
  \BinaryInfC{$\jtyping{\Gamma}{y}{\efftype{\tau}{\tnat}{\tnat}}$}
\end{prooftree}

\vspace{1mm}\noindent
and $D_2$ is
\vspace{2mm}
\begin{prooftree}
  \AxiomC{\phantomPremise}
  \rulelabel{\ruleSourceTVar}
  \UnaryInfC{$\jtyping{\Delta}{z}{\tarrow{\tnat}{\tnat}}$}
  \AxiomC{\phantomPremise}
  \rulelabel{\ruleSourceTVar}
  \UnaryInfC{$\jtyping{\Delta}{k}{\tarrow{\tnat}{\tnat}}$}
  \AxiomC{\phantomPremise}
  \rulelabel{\ruleSourceTConst}
  \UnaryInfC{$\jtyping{\Delta}{\natconst{42}}{\tnat}$}
  \rulelabel{\ruleSourceTPApp}
  \BinaryInfC{$\jtyping{\Delta}{\app{k}{\natconst{42}}}{\tnat}$}
  \rulelabel{\ruleSourceTPApp}
  \BinaryInfC{$\jtyping{\Delta}{\app{z}{(\app{k}{\natconst{42}})}}{\tnat}$}
  \rulelabel{\ruleSourceTSft}
  \UnaryInfC{$\jtyping{\Gamma}{\shiftz{k}{\app{z}{(\app{k}{\natconst{42}})}}}{T}$}
\end{prooftree}
\qed
\end{exa}

%%%%%%%%%%%%%%%%%%%%%%%%
\subsection{Coercion semantics: a type-directed selective CPS translation}
\label{subsec:coercionSemanticsEffects}

\begin{figure}[!!t]
\begin{mdframed}
\begin{tabularx}{\textwidth}{rcXr}
$\tau$ & $\bnfdef$ & $\tnat \bnfor \tarrow{\tau}{T}$
    & (pure types) \\[2mm]
$T$ & $\bnfdef$ & $\tau \bnfor \teffarrow{\tau}{T}{T}$
    & (types) \\[2mm]
$c$ & $\bnfdef$ & $\cid \bnfor \ccomp{c}{c} \bnfor \carrow{c}{c}
    \bnfor \clift{c} \bnfor \ccons{c}{c}{c}$
    & (coercions) \\[2mm]
$e$ & $\bnfdef$ & $\var{x} \bnfor \lam{x}{e} \bnfor \app{e}{e}
  \bnfor \capp{c}{e} \bnfor \fix{x}{x}{e}
  \bnfor \natconst{n}
  $
    & (expressions) \\[2mm]
$v$ & $\bnfdef$ & $\var{x} \bnfor \lam{x}{e} \bnfor \fix{x}{x}{e}
  \bnfor \capp{(\carrow{c}{c})}{v} \bnfor \capp{\clift{c}}{v}
  \bnfor \capp{(\ccons{c}{c}{c})}{v} \bnfor \natconst{n} $
    & (values) \\[2mm]
$E$ & $\bnfdef$ & $\mtectx \bnfor \argectx{E}{e} \bnfor \valectx{v}{E}
  \bnfor \crcectx{c}{E} $
    & (evaluation contexts)
\end{tabularx}
\vspace{4mm}
\hrule
\vspace{4mm}
{
\begin{center}
    \AxiomC{\phantomPremise}
    \rulelabel{\ruleTargetSRefl}
    \UnaryInfC{$\jctyping{\cid}{T}{T}$}
    \DisplayProof
\quad
    \AxiomC{$\jctyping{c_1}{T_2}{T_3}$}
    \AxiomC{$\jctyping{c_2}{T_1}{T_2}$}
    \rulelabel{\ruleTargetSTrans}
    \BinaryInfC{$\jctyping{\ccomp{c_1}{c_2}}{T_1}{T_3}$}
    \DisplayProof
\\[5mm]
    \AxiomC{$\jctyping{c_1}{\tau_2}{\tau_1}$}
    \AxiomC{$\jctyping{c_2}{T_1}{T_2}$}
    \rulelabel{\ruleTargetSArrow}
    \BinaryInfC{$\jctyping{\carrow{c_1}{c_2}}{(\tarrow{\tau_1}{T_1})}{
        (\tarrow{\tau_2}{T_2})}$}
    \DisplayProof
\quad
    \AxiomC{$\jctyping{c}{T_1}{T_2}$}
    \rulelabel{\ruleTargetSLift}
    \UnaryInfC{$\jctyping{\clift{c}}{\puretype{\tau}}{(\teffarrow{\tau}{T_1}{T_2})}$}
    \DisplayProof
\\[5mm]
    \AxiomC{$\jctyping{c}{\tau_1}{\tau_2}$}
    \AxiomC{$\jctyping{c_1}{T_2}{T_1}$}
    \AxiomC{$\jctyping{c_2}{U_1}{U_2}$}
    \rulelabel{\ruleTargetSCons}
    \TrinaryInfC{$\jctyping{\ccons{c}{c_1}{c_2}}{
        (\teffarrow{\tau_1}{T_1}{U_1})}{(\teffarrow{\tau_2}{T_2}{U_2})}$}
    \DisplayProof
\\[5mm]
    \AxiomC{\phantomPremise}
    \rulelabel{\ruleTargetTConst}
    \UnaryInfC{$\jtyping{\Gamma}{\natconst{n}}{\tnat}$}
    \DisplayProof
\quad
\AxiomC{$(x:\tau)\in\Gamma$}
    \rulelabel{\ruleTargetTVar}
    \UnaryInfC{$\jtyping{\Gamma}{\var{x}}{\tau}$}
    \DisplayProof
\quad
\AxiomC{$\jtyping{\eextend{\Gamma}{x}{\tau}}{e}{T}$}
    \rulelabel{\ruleTargetTAbs}
    \UnaryInfC{$\jtyping{\Gamma}{\lam{x}{e}}{\tarrow{\tau}{T}}$}
    \DisplayProof
\\[5mm]
    \AxiomC{$\jtyping{\Gamma}{e_1}{\tarrow{\tau}{T}}$}
    \AxiomC{$\jtyping{\Gamma}{e_2}{\tau}$}
    \rulelabel{\ruleTargetTApp}
    \BinaryInfC{$\jtyping{\Gamma}{\app{e_1}{e_2}}{T}$}
    \DisplayProof
\\[5mm]
    \AxiomC{$\jtyping{\eextend{\Gamma}{x}{\tarrow{\tau}{T}}}{e}{U}$}
    \rulelabel{\ruleTargetTKAbs}
    \UnaryInfC{$\jtyping{\Gamma}{\lam{x}{e}}{\teffarrow{\tau}{T}{U}}$}
    \DisplayProof
\\[5mm]
    \AxiomC{$\jtyping{\Gamma}{e}{\teffarrow{\tau}{T}{U}}$}
    \AxiomC{$\jtyping{\Gamma}{v}{\tarrow{\tau}{T}}$}
    \rulelabel{\ruleTargetTKApp}
    \BinaryInfC{$\jtyping{\Gamma}{\app{e}{v}}{U}$}
    \DisplayProof
\\[5mm]
    \AxiomC{$\jctyping{c}{T}{U}$}
    \AxiomC{$\jtyping{\Gamma}{e}{T}$}
    \rulelabel{\ruleTargetTCApp}
    \BinaryInfC{$\jtyping{\Gamma}{\capp{c}{e}}{U}$}
    \DisplayProof
\hspace{-1.5mm}
    \AxiomC{$\jtyping{
        \eextend{\eextend{\Gamma}{f}{\tarrow{\tau}{T}}}{x}{\tau}
        }{e}{T}$}
    \rulelabel{\ruleTargetTFix}
    \UnaryInfC{$\jtyping{\Gamma}{\fix{f}{x}{e}}{\tarrow{\tau}{T}}$}
    \DisplayProof
\end{center}}
\vspace{4mm}
\hrule
\vspace{4mm}
\begin{center}
\setlength{\tabcolsep}{-8pt}
\begin{tabular}{ll}
\setlength{\tabcolsep}{1pt}
\begin{tabular}[t]{rcl}
$\inctx{E}{\app{(\lam{x}{e})}{v}}$ & $\rawredi{\beta}$ & 
    $\inctx{E}{\subst{e}{x}{v}}$ \\[2mm]
$\inctx{E}{\app{(\fix{f}{x}{e})}{v}}$ & $\rawredi{\beta}$ &
    $\inctx{E}{\bisubst{e}{f}{\fix{f}{x}{e}}{x}{v}}$ \\[2mm]
$\inctx{E}{\app{\capp{\clift{c}}{v_1}}{v_2}}$ & $\rawredi{\beta}$ &
    $\inctx{E}{\capp{c}{(\app{v_2}{v_1})}}$
\end{tabular}
\setlength{\tabcolsep}{1pt}
\begin{tabular}[t]{rcl}
$\inctx{E}{\capp{\cid}{v}}$ & $\rawredi{\iota}$ & $\inctx{E}{v}$ \\[2mm]
$\inctx{E}{\capp{(\ccomp{c_1}{c_2})}{v}}$ & $\rawredi{\iota}$ &
    $\inctx{E}{\capp{c_1}{(\capp{c_2}{v})}}$ \\[2mm]
$\inctx{E}{\app{\capp{(\carrow{c_1}{c_2})}{v_1}}{v_2}}$ & $\rawredi{\iota}$ &
    $\inctx{E}{\capp{c_2}{(\app{v_1}{(\capp{c_1}{v_2})})}}$ \\[2mm]
$\inctx{E}{\app{\capp{(\ccons{c}{c_1}{c_2})}{v_1}}{v_2}}$ & $\rawredi{\iota}$ &
    $\inctx{E}{\capp{c_2}{(\app{v_1}{(\capp{(\carrow{c}{c_1})}{v_2})})}}$
\end{tabular}
\end{tabular}
\end{center}
\end{mdframed}    
\caption{\label{fig:shiftTarget}The target language---the
  $\lambda$-calculus with explicit coercions of control effects}
\end{figure}

The type structure of the source calculus has been used by Materzok
and the first author to define the semantics of well-typed expressions
by a selective CPS translation of typing derivations into the
call-by-value
$\lambda$-calculus~\cite{Materzok-Biernacki:ICFP11}. Their translation
can be seen as a coercion semantics of the source calculus that
introduces explicit coercions in the image of the translation and
leaves pure expressions in direct style. Such a semantics thus can
serve as a basis for implementing delimited continuations as a
fragment of a conventional functional language. However, one should
first make sure that it is coherent.

The target calculus that we present next differs from the one
considered in~\cite{Materzok-Biernacki:ICFP11} in that it contains a
separate syntactic category of coercions as well as a dedicated
function type for expressions in CPS.

%%%%%%%%%%%%%
\subsubsection{Target calculus}
\label{subsubsec:targetEffects}

The syntax and typing rules of the target language are presented in
Figure~\ref{fig:shiftTarget}. There are two kinds of arrow type: the
usual one $\tarrow{\tau}{T}$ for regular functions and the effectful
one $\teffarrow{\tau}{T}{U}$ for expressions in CPS. We make this
distinction to express the fact that the CPS translation (see
Figure~\ref{fig:shiftCoercionSemantics}) yields expressions with
strong restrictions on the occurrence of terms in CPS: they are never
passed as arguments (typing environment consists of only pure types)
and they can be applied only to values representing continuations
(witness the rule $\ruleTargetTKApp$). Furthermore, observe that in
general terms in CPS, i.e., of type $\teffarrow{\tau}{T}{U}$ expect a
bunch of (delimited) continuations to produce the final answer. For
example the type
$\teffarrow{\tnat}{\tnat}{\teffarrow{\tnat}{\tnat}{\tnat}}$ is
inhabited by the term
$\lam{k_1}{\lam{k_2}{\app{k_1}{(\app{k_2}{\natconst{42}})}}}$.

The syntactic category $c$ of coercions and the typing rules defining
judgment $\jctyping{c}{T_1}{T_2}$ are in one-to-one correspondence
with the subtyping rules in the source calculus, discussed in
Section~\ref{subsec:calculusEffects}. In particular the rule
$\ruleTargetSLift$ allows us to treat a pure computation of type
$\tau$ as an effectful one (i.e., in CPS) of type
$\teffarrow{\tau}{T_1}{T_2}$, provided the answer type $T_1$ of the
immediate continuation is a subtype of $T_2$, the type describing the
remaining continuations. 

Again, the operational semantics distinguishes between $\beta$-rules
and $\iota$-rules.  We classified the last $\beta$-rule as ``actual
computation'' because it does not only rearrange coercions. It
translates back a lifted value $v_1$ and applies to it a given
continuation $v_2$. This rule and the last $\iota$-rule reduce a
coerced value applied to a continuation, so terms of the form
$(\capp{\clift{c}}{v})$ and $(\capp{\ccons{c}{c}{c}}{v})$ are
considered values. Notice that these values have effectful types. We
extend the notion of $\iota$-reduction to evaluation contexts:
$\iotared{E_1}{E_2}$ holds iff
$\iotared{\inctx{E_1}{v}}{\inctx{E_2}{v}}$ for every value $v$.

As in Section~\ref{sec:introducingLogicalRelations}, the metavariable
$C$ ranges over general closed contexts. We also define typing of
general contexts $\jctxtyping{\Gamma}{C}{T}{T_0}$ as before. The
definition of contextual approximation is necessarily slightly weaker,
because we allow only contexts with pure answer type:
we have $\jctxapprox{\Gamma}{e_1}{e_2}{T}$ if for every
$\jctxtyping{\Gamma}{C}{T}{\tau}$ a termination of $\inctx{C}{e_1}$
implies a termination of $\inctx{C}{e_2}$. Indeed, an
expression that requires a continuation to trigger computation can
hardly be considered a complete program.

%%%%%%%%%%%%%
\subsubsection{Translation}
\label{subsubsec:translationEffects}

\begin{figure}[!!t]
  \begin{mdframed}
  \[
  \begin{array}[t]{c@{\hspace{20mm}}c}
    \begin{array}{rcl}
      \puretyptrans{\tnat} & = & \tnat \\[2mm]
      \puretyptrans{\tarrow{\tau}{T}} & = &
      \tarrow{\puretyptrans{\tau}}{\efftyptrans{T}}
    \end{array}
    &
    \begin{array}{rcl}
      \efftyptrans{\puretype{\tau}} & = & \puretyptrans{\tau} \\[2mm]
      \efftyptrans{\efftype{\tau}{T}{U}} & = & 
      \teffarrow{\puretyptrans{\tau}}{\efftyptrans{T}}{\efftyptrans{U}}
    \end{array}
  \end{array}
  \]
  \vspace{2mm}
  \hrule
  \vspace{2mm}
  \[
  \begin{array}{rcl}
    \subtrans{\jsubtype{T}{T}}{\ruleSourceSRefl} & = & \cid \\[2mm]
    \subtrans{\jsubtype{T_1}{T_3}}{\ruleSourceSTrans(D_1, D_2)} 
    & = & \ccomp{\subtrans{\jsubtype{T_2}{T_3}}{D_1}}{
      \subtrans{\jsubtype{T_1}{T_2}}{D_2}} \\[2mm]
    \subtrans{\jsubtype{\puretype{\tarrow{\tau_1}{T_1}}}{
        \puretype{\tarrow{\tau_2}{T_2}}}}{
      \ruleSourceSArrow(D_1, D_2)} & = &
    \carrow{\subtrans{\jsubtype{\puretype{\tau_2}}{\puretype{\tau_1}}}{D_1}}{
      \subtrans{\jsubtype{T_1}{T_2}}{D_2}} \\[2mm]
    \subtrans{\jsubtype{\puretype{\tau}}{\efftype{\tau}{T}{U}}}{
      \ruleSourceSLift(D)} & = &
    \clift{\subtrans{\jsubtype{T}{U}}{D}} \\[2mm]
    \subtrans{\jsubtype{\efftype{\tau_1}{T_1}{U_1}}{\efftype{\tau_2}{T_2}{U_2}}}{
      \ruleSourceSCons(D,D_1,D_2)} & = &
    \ccons{\subtrans{\jsubtype{\puretype{\tau_1}}{\puretype{\tau_2}}}{D}}{
      \subtrans{\jsubtype{T_2}{T_1}}{D_1}}{
      \subtrans{\jsubtype{U_1}{U_2}}{D_2}}
  \end{array}
  \]
  \vspace{2mm}
  \hrule
  \vspace{2mm}
  \[
  \begin{array}[t]{rcl}
    \exptrans{e}{\ruleSourceTSub(D_1,D_2)} & = &
    \capp{\subtrans{\jsubtype{T}{U}}{D_2}}{\exptrans{e}{D_1}} \\[2mm]
    \exptrans{\var{x}}{\ruleSourceTVar} & = & \var{x} \\[2mm]
    \exptrans{\lam{x}{e}}{\ruleSourceTAbs(D)} & = & \lam{x}{\exptrans{e}{D}} \\[2mm]
    \exptrans{\app{e_1}{e_2}}{\ruleSourceTPApp(D_1, D_2)} & = &
    \app{\exptrans{e_1}{D_1}}{\exptrans{e_2}{D_2}} \\[2mm]
    \exptrans{\app{e_1}{e_2}}{\ruleSourceTApp(D_1, D_2)} & = &
    \lam{k}{\app{\exptrans{e_1}{D_1}}{(\lam{f}{
          \app{\exptrans{e_2}{D_2}}{(\lam{x}{
              \app{\app{\var{f}}{\var{x}}}{\var{k}}})}})}}  \\[2mm]
    \exptrans{\fix{f}{x}{e}}{\ruleSourceTFix(D)} & = &
    \fix{f}{x}{\exptrans{e}{D}} \\[2mm]
    \exptrans{\natconst{n}}{\ruleSourceTConst} & = & \natconst{n} \\[2mm]
    \exptrans{\shiftz{x}{e}}{\ruleSourceTSft(D)} & = &
    \lam{x}{\exptrans{e}{D}} \\[2mm]
    \exptrans{\resetz{e}}{\ruleSourceTRst(D)} & = &
    \app{\exptrans{e}{D}}{(\lam{x}{\var{x}})}
  \end{array}
  \]
  \end{mdframed}
  \caption{\label{fig:shiftCoercionSemantics}Type-directed selective CPS translation}
\end{figure}

The coercion semantics of the source language is given by the
type-directed selective CPS translation presented in
Figure~\ref{fig:shiftCoercionSemantics}. The translation is selective
because it leaves terms of pure type in direct style---witness, e.g,
the equations for variable or pure application. Effectful applications
are translated according to Plotkin's call-by-value CPS
translation~\cite{Plotkin:TCS75}, whereas the translation of
$\mathsf{shift_0}$ and $\mathsf{reset_0}$ is surprisingly
straightforward---$\mathsf{shift_0}$ is turned into a
lambda-abstraction expecting a delimited continuation, and
$\mathsf{reset_0}$ is interpreted by providing its subexpression with
the reset delimited continuation, represented by the identity
function. The following example illustrates the main points of the CPS
translation.

\begin{exa}
  Let us consider the derivation $D$ given in
  Example~\ref{ex:typing-deriv}. We have
  \vspace{2mm}
  \[
\exptrans{\app{x}{\resetz{\app{y}{\shiftz{k}{\app{z}{(\app{k}{\natconst{42}})}}}}}}{D} =
\app{x}{(\app{(\lam{l}{\app{\capp{\clift{\cid}}{y}}{(\lam{f}{\app{(\lam{k}{\app{z}{(\app{k}{\natconst{42}})}})}{(\lam{u}{\app{\app{f}{u}}{l}})}})}})}{(\lam{v}{v})})}
\]

\vspace{2mm}\noindent
where
\begin{itemize}
\item the applications
  $\app{x}{\resetz{\app{y}{\shiftz{k}{\app{z}{(\app{k}{\natconst{42}})}}}}}$,
  $\app{z}{(\app{k}{\natconst{42}})}$, and $\app{k}{\natconst{42}}$
  are pure and, hence, stay in direct style through the translation;
\item the application
  $\app{y}{\shiftz{k}{\app{z}{(\app{k}{\natconst{42}})}}}$ is
  effectful, and therefore translated to CPS, with $y$ coerced to a
  continuation-expecting expression.
\end{itemize}
\qed
\end{exa}

The following lemma establishes the type soundness of the CPS translation.

\begin{lem}
Coercion semantics preserves types:
\begin{enumerate}
\item If $\jtree{D}{\jsubtype{T_1}{T_2}}$ then 
    $\jctyping{\subtrans{\jsubtype{T_1}{T_2}}{D}}{
    \typtrans{T_1}}{\typtrans{T_2}}$.
\vspace{1mm}
\item If $\jtree{D}{\jtyping{\Gamma}{e}{T}}$ then
    $\jtyping{\envtrans{\Gamma}}{\exptrans{e}{D}}{\typtrans{T}}$.
\end{enumerate}
\end{lem}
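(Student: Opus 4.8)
The plan is to prove both assertions by a single induction on the derivation $D$: assertion~(1) by induction on the subtyping derivation, assertion~(2) by induction on the typing derivation, the case $\ruleSourceTSub$ of~(2) appealing to~(1) on a strictly smaller derivation. The one preliminary fact that makes everything line up is obtained by a trivial case analysis on source types: $\efftyptrans{\puretype{\tau}} = \puretyptrans{\tau}$ is always a pure target type while $\efftyptrans{\efftype{\tau}{T}{U}}$ always has the effectful shape $\teffarrow{\cdot}{\cdot}{\cdot}$, and the translations commute with the type constructors as expected, in particular $\puretyptrans{\tarrow{\tau}{T}} = \tarrow{\puretyptrans{\tau}}{\efftyptrans{T}}$. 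This determines which target (coercion-)typing rule is applicable in each case.

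Assertion~(1) is routine because the target coercion-typing rules are in bijection with the source subtyping rules: each case applies the induction hypothesis to the immediate subderivations and concludes with the matching rule ($\ruleSourceSRefl$/$\ruleTargetSRefl$, $\ruleSourceSTrans$/$\ruleTargetSTrans$, $\ruleSourceSArrow$/$\ruleTargetSArrow$, $\ruleSourceSLift$/$\ruleTargetSLift$, $\ruleSourceSCons$/$\ruleTargetSCons$). The only subtlety is that $\ruleTargetSArrow$ (and $\ruleTargetSCons$) requires its contravariant component to coerce between \emph{pure} target types; this holds automatically, since the corresponding premise of $\ruleSourceSArrow$ (resp.\ $\ruleSourceSCons$) has the form $\jsubtype{\puretype{\tau_2}}{\puretype{\tau_1}}$, whose translation is by the preliminary fact a coercion between the pure target types $\puretyptrans{\tau_2}$ and $\puretyptrans{\tau_1}$.

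For assertion~(2), note that source environments bind variables to pure types only, so $\envtrans{\Gamma}$ binds $x$ to $\puretyptrans{\tau}$ whenever $(x:\tau)\in\Gamma$. The cases $\ruleSourceTVar$, $\ruleSourceTConst$, $\ruleSourceTPApp$, $\ruleSourceTFix$ follow immediately from the induction hypothesis and the corresponding target rules, using the commutation of the translation with $\tarrow{\cdot}{\cdot}$; the case $\ruleSourceTSub$ types the translated coercion by assertion~(1) and then applies $\ruleTargetTCApp$. Two cases exploit the fact that the target has two introduction rules for $\lambda$: $\ruleSourceTAbs$ produces $\lam{x}{\exptrans{e}{D}}$, typed at the pure arrow type by $\ruleTargetTAbs$, whereas $\ruleSourceTSft$ produces the syntactically identical term $\lam{x}{\exptrans{e}{D}}$ but must be typed at the effectful type $\teffarrow{\puretyptrans{\tau}}{\efftyptrans{T}}{\efftyptrans{U}}$ via $\ruleTargetTKAbs$ --- this is precisely why the target distinguishes the two function spaces. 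Finally, $\ruleSourceTRst$ yields $\app{\exptrans{e}{D}}{(\lam{x}{\var{x}})}$, which is well typed by $\ruleTargetTKApp$ because the induction hypothesis gives $\exptrans{e}{D}$ the type $\teffarrow{\puretyptrans{\tau}}{\puretyptrans{\tau}}{\efftyptrans{T}}$, whose continuation argument type is exactly that of the identity function.

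The main obstacle is the effectful-application case $\ruleSourceTApp$, whose translation is the CPS term $\lam{k}{\app{\exptrans{e_1}{D_1}}{(\lam{f}{\app{\exptrans{e_2}{D_2}}{(\lam{x}{\app{\app{\var{f}}{\var{x}}}{\var{k}}})}})}}$. The plan is to typecheck it from the inside out: $\ruleTargetTKAbs$ introduces $k\colon\tarrow{\puretyptrans{\tau_1}}{\efftyptrans{U_4}}$; under it, alternate between $\ruleTargetTAbs$ for the administrative binders $f$ and $x$, $\ruleTargetTApp$ then $\ruleTargetTKApp$ to form $\app{\app{\var{f}}{\var{x}}}{\var{k}}$, and $\ruleTargetTKApp$ again to feed the resulting continuations to $\exptrans{e_2}{D_2}$ and $\exptrans{e_1}{D_1}$, whose effectful types come from the induction hypothesis (weakening the environment as usual when passing under binders). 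Each application matches because the answer-type annotations $U_4,U_3,U_2,U_1$ telescope exactly as the source rule prescribes, and because the pure source type $\tarrow{\tau_2}{\efftype{\tau_1}{U_4}{U_3}}$ produced by $e_1$ translates to the regular arrow $\tarrow{\puretyptrans{\tau_2}}{\teffarrow{\puretyptrans{\tau_1}}{\efftyptrans{U_4}}{\efftyptrans{U_3}}}$, which is precisely the type of $\lam{f}{\cdots}$. The bookkeeping here is heavy but entirely mechanical, and it is discharged without difficulty in the accompanying Coq development.
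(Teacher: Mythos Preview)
Your proof is correct and is exactly the natural argument: a straightforward induction on the derivation, matching each source rule to the corresponding target rule, with the effectful-application case handled by typechecking the CPS term inside-out using $\ruleTargetTKAbs$, $\ruleTargetTAbs$, $\ruleTargetTApp$, and $\ruleTargetTKApp$ in the order you describe. The paper itself states this lemma without proof, deferring it to the Coq formalization as a routine sanity check; your write-up supplies precisely the details one would expect that formalization to carry out.
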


The problem of coherence of the CPS translation is demonstrated in the
following example.

\begin{exa}
\label{exa:twoDerivations}
Let us consider the term
$\app{(\fix{f}{x}{\app{\var{f}}{\var{x}}})}{\natconst{1}}$ in the
source language. We derive the type ${\efftype{\tnat}{T}{T}}$ for it
in two ways: let $D_1$ be the derivation
\vspace{2mm}
\begin{prooftree}
\AxiomC{$\vdots$}
\noLine
\UnaryInfC{$\jtyping{f:\tarrow{\tnat}{\efftype{\tnat}{T}{T}},x:\tnat}{
    \app{\var{f}}{\var{x}}}{\efftype{\tnat}{T}{T}}$}
\rulelabel{\ruleSourceTFix}
\UnaryInfC{$\jtyping{}{\fix{f}{x}{\app{\var{f}}{\var{x}}}}{
    \puretype{\tarrow{\tnat}{\efftype{\tnat}{T}{T}}}}$}
    \AxiomC{}
    \rulelabel{\ruleSourceTConst}
    \UnaryInfC{$\jtyping{}{\natconst{1}}{\puretype{\tnat}}$}
\rulelabel{\ruleSourceTPApp}
\BinaryInfC{$\jtyping{}{
    \app{(\fix{f}{x}{\app{\var{f}}{\var{x}}})}{\natconst{1}}}{
    \efftype{\tnat}{T}{T}}$}
\end{prooftree}

\vspace{2mm}\noindent
and $D_2$ be the derivation
\vspace{2mm}
\begin{prooftree}
\AxiomC{$\vdots$}
\noLine
\UnaryInfC{$\jtyping{}{\fix{f}{x}{\app{\var{f}}{\var{x}}}}{
    \puretype{\tarrow{\tnat}{\efftype{\tnat}{T}{T}}}}$}
\AxiomC{$\vdots$}
\rulelabel{\ruleSourceTSub}
\BinaryInfC{$\jtyping{}{\fix{f}{x}{\app{\var{f}}{\var{x}}}}{
    \efftype{(\tarrow{\tnat}{\efftype{\tnat}{T}{T}})}{T}{T}}$}
    \AxiomC{}
    \rulelabel{\ruleSourceTConst}
    \UnaryInfC{$\jtyping{}{\natconst{1}}{\puretype{\tnat}}$}
    \AxiomC{$\vdots$}
    \rulelabel{\ruleSourceTSub}
    \BinaryInfC{$\jtyping{}{\natconst{1}}{\efftype{\tnat}{T}{T}}$}
\rulelabel{\ruleSourceTApp}
\BinaryInfC{$\jtyping{}{
    \app{(\fix{f}{x}{\app{\var{f}}{\var{x}}})}{\natconst{1}}}{
    \efftype{\tnat}{T}{T}}$}
\end{prooftree}

\vspace{2mm}\noindent
Then we have
\vspace{2mm}
\begin{eqnarray*}
\exptrans{\app{(\fix{f}{x}{\app{\var{f}}{\var{x}}})}{\natconst{1}}}{D_1}
    & = & \app{(\fix{f}{x}{\app{\var{f}}{\var{x}}})}{\natconst{1}} \\[1mm]
\exptrans{\app{(\fix{f}{x}{\app{\var{f}}{\var{x}}})}{\natconst{1}}}{D_2}
    & = & \lam{k}{\app{
        \capp{\clift{\cid}}{(\fix{f}{x}{\app{\var{f}}{\var{x}}})}}{(\lam{g}{
        \app{\capp{\clift{\cid}}{\natconst{1}}}{(\lam{y}{
        \app{\app{\var{g}}{\var{y}}}{\var{k}}})}})}}
\end{eqnarray*}

\vspace{2mm}\noindent
We observe that the two terms are quite distinct: one is a diverging
expression, and the other is a lambda abstraction. However, the
results of the next two sections show that these two terms are
contextually equivalent, and so both typing derivations have
equivalent coercion semantics.
\qed
\end{exa}

%%%%%%%%%%%%%%%%%%%%%%%%%
\subsection{Logical relations}
\label{subsec:logicalRelationsEffects}

\begin{figure}[!!t]
  \begin{mdframed}
\[
\hspace{-2mm}
\begin{array}{rcl}
(v_1, v_2) \in \relV{\tnat}{\tnat} & \iff & \exists n, v_1 = v_2 = n \\[2mm]
(v_1, v_2) \in \relV{\tarrow{\tau_1}{T_1}}{\tarrow{\tau_2}{T_2}}
    & \iff & \forall (a_1, a_2) \in \relV{\tau_1}{\tau_2} .
    (\app{v_1}{a_1}, \app{v_2}{a_2}) \in \relE{T_1}{T_2} \\[2mm]
(v_1, v_2) \in \relV{\tau_1}{\tau_2} & \iff & \bot \qquad \textrm{otherwise} \\[4mm]
(e_1, e_2) \in \relE{T_1}{T_2}
    & \iff & \forall (E_1, E_2) \in \relK{T_1}{T_2} .
    \relApprox{\inctx{E_1}{e_1}}{\inctx{E_2}{e_2}} \\[4mm]
(E_1, E_2) \in \relK{\tau_1}{\tau_2}
    & \iff & \forall (v_1, v_2) \in \relV{\tau_1}{\tau_2} .
    \relApprox{\inctx{E_1}{v_1}}{\inctx{E_2}{v_2}} \\[2mm]
(E_1, E_2) \in \relK{\tau_1}{\teffarrow{\tau_2}{T_2}{U_2}}
    & \iff & \exists T, (E, \kappa) \in \relEV{\tau_1}{T}{\tau_2}{T_2}, \\[1mm]
    && \hspace{6mm} (E'_1, E'_2) \in \relK{T}{U_2} . \\[1mm]
    && \hspace{12mm} \rtcloiotared{E_1}{\ctxcomp{E'_1}{E}}
    \wedge \rtcloiotared{E_2}{\inctx{E'_2}{\app{\hole}{\kappa}}} \\[2mm]
(E_1, E_2) \in \relK{\teffarrow{\tau_1}{T_1}{U_1}}{\tau_2}
    & \iff & \exists T, (\kappa, E) \in
    \later{}\relVE{\tau_1}{T_1}{\tau_2}{T}, \\[1mm]
    && \hspace{6mm} (E'_1, E'_2) \in \later{}\relK{U_1}{T} . \\[1mm]
    && \hspace{12mm} \rtcloiotared{E_1}{\inctx{E'_1}{\app{\hole}{\kappa}}}
    \wedge \rtcloiotared{E_2}{\ctxcomp{E'_2}{E}} \\[2mm]
(E_1, E_2) \in \relK{\teffarrow{\tau_1}{T_1}{U_1}}{\\\teffarrow{\tau_2}{T_2}{U_2}}
    & \iff & \exists (\kappa_1, \kappa_2) \in
    \relV{\tarrow{\tau_1}{T_1}}{\tarrow{\tau_2}{T_2}}, \\[1mm]
    && \hspace{6mm} (E'_1, E'_2) \in \relK{U_1}{U_2} . \\[1mm]
    && \hspace{12mm} \rtcloiotared{E_1}{\inctx{E'_1}{\app{\hole}{\kappa_1}}}
    \wedge \rtcloiotared{E_2}{\inctx{E'_2}{\app{\hole}{\kappa_2}}} \\[4mm]
(E, \kappa) \in \relEV{\tau_1}{T_1}{\tau_2}{T_2}
    & \iff & \forall (a_1, a_2)\in\relV{\tau_1}{\tau_2}.
    (\inctx{E}{a_1}, \app{\kappa}{a_2})\in\relE{T_1}{T_2} \\[4mm]
(\kappa, E) \in \relVE{\tau_1}{T_1}{\tau_2}{T_2}
    & \iff & \forall (a_1, a_2)\in\relV{\tau_1}{\tau_2}.
    (\app{\kappa}{a_1}, \inctx{E}{a_2})\in\relE{T_1}{T_2} \\[4mm]
(\gamma_1, \gamma_2) \in \relG{\Gamma_1}{\Gamma_2} & \iff &
    \forall x. (\gamma_1(x), \gamma_2(x)) \in \relV{\Gamma_1(x)}{\Gamma_2(x)} \\[4mm]
\relEop{\Gamma_1}{\Gamma_2}{e_1}{e_2}{T_1}{T_2} & \iff &
    \forall (\gamma_1, \gamma_2)\in\relG{\Gamma_1}{\Gamma_2} .
    (e_1\gamma_1, e_2\gamma_2) \in\relE{T_1}{T_2}
\end{array}
\]
  \end{mdframed}
\caption{\label{fig:shiftLogRel}Logical relations for the
  $\lambda$-calculus with explicit coercions of control effects}
\end{figure}

The logical relations are defined in Figure~\ref{fig:shiftLogRel}. We
use the metavariable $\kappa$ to range over values that are meant to
represent continuations. The relation $\relV{\tau_1}{\tau_2}$ for pure
values and the relation $\relE{T_1}{T_2}$ for expressions are similar
to the relations defined in
Section~\ref{subsec:logicalRelationsSTLC}. All information about
control effects is captured in the relation $\relK{T_1}{T_2}$ for
contexts. If $T_1$ and $T_2$ are pure, then we proceed as usual with
biorthogonal logical relations: two contexts are related if they
behave the same way for related values, since pure computations can
interact with their context only by returning a value.

For impure types (of the form $\teffarrow{\tau}{T}{U}$) contexts
should be plugged with effectful expressions which expect a
continuation (represented as a function) to trigger computation.  A
context is able to provide such a continuation $\kappa$ if it can be
decomposed as an application of the hole to $\kappa$ and the rest of
the context.  In general it does not mean that the context has
necessarily the form $\inctx{E'}{\app{\hole}{\kappa}}$, but that it
can be $\iota$-reduced to such a form. For instance, context
$\inctx{E}{\app{(\capp{(\ccons{c}{c_1}{c_2})}{\hole})}{\kappa}}$ does
not have an application to $\kappa$ as the inner-most element, but
still applies plugged value to a continuation
$\capp{(\carrow{c}{c_1})}{\kappa}$ after one $\iota$-step.  The
logical relation for contexts of impure types (case
$\relK{\teffarrow{\tau_1}{T_1}{U_1}}{\teffarrow{\tau_2}{T_2}{U_2}}$)
relates two contexts iff they can be decomposed (using
$\iota$-reduction) as applications to related continuations in related
contexts.

The most interesting are the cases that relate pure and impure
contexts. As previously, the impure context should be decomposed to a
continuation $\kappa$ and the rest of the context. Then the pure
context should be decomposed in such a way that the continuation
$\kappa$ is related with some portion $E$ of the pure context. The
answer type of $E$ cannot be retrieved from the type of the initial
pure context, so we quantify over all possible types.
Unlike the logical relations for
parametricity~\cite{Ahmed:ESOP06,Ahmed-al:POPL09}
we quantify over syntactic types. In order to make
the construction well-founded, the relations are defined by nested
induction on step indices and on the structure of the second
type. Notice that step indices play a role only in one case---when we
quantify over the second type and the later operator guards the
non-structural use of the relations $\relVE{\tau_1}{T_1}{\tau_2}{T}$
and $\relK{U_1}{T}$.  The auxiliary relations
$\relEV{\tau_1}{T_1}{\tau_2}{T_2}$ and
$\relVE{\tau_1}{T_1}{\tau_2}{T_2}$ relate a portion of an evaluation
context with a value of an arrow type and they are defined analogously
to the value relation for functions.

The relations of this section possess properties analogous to the ones
of Section~\ref{subsec:logicalRelationsSTLC}, in particular the
relation $\rawRelApprox$ is preserved by reduction
(Lemma~\ref{lem:stlcApproxRed}) and the compatibility lemmas
(including Lemma~\ref{lem:stlcCoercionCompat}) hold. However, the
proof of the compatibility lemmas requires the following results that
establish the preservation of relations with respect to
$\iota$-reductions of evaluation contexts.

\begin{lem}
The following assertions hold:
\begin{enumerate}\label{lem:ectxIotaObs}
\item If $\rtcloiotared{E}{E'}$ and $\relApprox{\inctx{E'}{e_1}}{e_2}$
    then $\relApprox{\inctx{E}{e_1}}{e_2}$.
\vspace{1mm}
\item If $\rtcloiotared{E}{E'}$ and $\relApprox{e_1}{\inctx{E'}{e_2}}$
    then $\relApprox{e_1}{\inctx{E}{e_2}}$.
\vspace{1mm}
\item If $\rtcloiotared{E_1}{E'_1}$ and $(E'_1, E_2)\in\relK{T_1}{T_2}$
    then $(E_1, E_2)\in\relK{T_1}{T_2}$.
\vspace{1mm}
\item If $\rtcloiotared{E_2}{E'_2}$ and $(E_1, E'_2)\in\relK{T_1}{T_2}$
    then $(E_1, E_2)\in\relK{T_1}{T_2}$.
\end{enumerate}
\end{lem}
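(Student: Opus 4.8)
The plan is to prove assertions (1) and (2) by a direct analysis of terminating reductions, and then obtain (3) and (4) from them together with transitivity of $\rtcloiotared{}$. The one subtlety to keep in mind is that $\rtcloiotared{E}{E'}$ does \emph{not} imply $\inctx{E}{e}\rawredi{\iota}^*\inctx{E'}{e}$ for an arbitrary $e$: a context $\iota$-step fires only once the hole holds a value, so the computation of $\inctx{E}{e}$ must first reduce $e$ to a value before the context reduction can be replayed.

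For assertion (1), fix a step index $k$, assume $\validAt{k}{\relApprox{\inctx{E'}{e_1}}{e_2}}$ and $\stopsn{k}{\inctx{E}{e_1}}$; it suffices to show $\stopsn{k}{\inctx{E'}{e_1}}$, for then $\stops{e_2}$ follows. Since $\inctx{E}{e_1}$ terminates, so does $e_1$ (a standard consequence of reduction being deterministic and $E$ an evaluation context), say $e_1\rawredi{}^* v$ with $v$ a value, using $b$ $\beta$-steps with $b\le k$; and the terminating reduction of $\inctx{E}{e_1}$ factors through $\inctx{E}{v}$, so $\inctx{E}{v}$ terminates using at most $k-b$ $\beta$-steps. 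Plugging the value $v$ into $\rtcloiotared{E}{E'}$ gives a genuine reduction $\inctx{E}{v}\rawredi{\iota}^*\inctx{E'}{v}$, which by determinism is a prefix of the terminating reduction of $\inctx{E}{v}$; hence $\inctx{E'}{v}$ terminates using at most $k-b$ $\beta$-steps, and therefore $\inctx{E'}{e_1}\rawredi{}^*\inctx{E'}{v}$ terminates using at most $b+(k-b)=k$ $\beta$-steps, i.e.\ $\stopsn{k}{\inctx{E'}{e_1}}$. Assertion (2) is entirely symmetric; since the right component of $\rawRelApprox$ carries no step bound, there is no $\beta$-counting to do.

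Assertions (3) and (4) follow by case analysis on the shapes of $T_1$ and $T_2$, following the clauses that define $\relK{T_1}{T_2}$. When both types are pure, the definition of $\relK{\tau_1}{\tau_2}$ quantifies over related values placed in the holes, and the claim is immediate from assertion (1) --- for (3), instantiated with the plugged value in the role of $e_1$ and $\inctx{E_2}{v_2}$ in the role of $e_2$ --- or from assertion (2) for (4). In each of the three remaining cases the defining clause already existentially quantifies over the relevant witnesses (a type, a continuation $\kappa$, sub-contexts, remaining contexts) together with a reduction clause of the form $\rtcloiotared{E_1}{\cdots}$ (resp.\ $\rtcloiotared{E_2}{\cdots}$); we simply reuse all these witnesses and prepend $\rtcloiotared{E_1}{E'_1}$ (resp.\ $\rtcloiotared{E_2}{E'_2}$) to the existing chain, using transitivity of $\rtcloiotared{}$. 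The $\later$-guarded witnesses occurring in the mixed pure/impure cases are carried over verbatim and cause no difficulty.

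I expect the crux to be the argument behind (1) and (2): one must resist the urge to propagate the context reduction through the plugged term and instead observe that a terminating $\inctx{E}{e}$ first evaluates $e$ to a value, and that moving the context $\iota$-steps to that point costs no $\beta$-steps, so the step index is preserved. Everything else is bookkeeping.
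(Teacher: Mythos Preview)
The paper does not include a proof of this lemma in the text; it is merely stated, with the details relegated to the Coq development. Your argument is correct, and you have put your finger on the one genuine subtlety: since context $\iota$-reduction is defined only through plugging with values, parts (1) and (2) cannot be obtained by a naive appeal to the term-level closure of $\rawRelApprox$ under $\iota$-reduction. Your detour---first reducing the plugged term to a value inside the context, then replaying the context $\iota$-steps at zero $\beta$-cost, then reassembling the reduction in the new context---is exactly what is needed, and the $\beta$-step bookkeeping is accurate. For (3) and (4) your case split along the defining clauses of $\relK{T_1}{T_2}$ is the natural one; in the three cases where at least one type is impure, transitivity of $\rtcloiotared{}$ on contexts already suffices, and in the pure--pure case the plugged terms are values, so there the subtlety you flagged does not even bite (iterating the term-level $\iota$-closure lemma would already do).
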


The rest of the soundness proof follows the same lines as in
Section~\ref{subsec:logicalRelationsSTLC}. Interestingly, the adequacy
lemma can be proved only for pure types, which is in harmony with the
notion of contextual equivalence in the target calculus.

\begin{thm}[Fundamental property]
If $\jtyping{\Gamma}{e}{T}$ then $\relEop{\Gamma}{\Gamma}{e}{e}{T}{T}$.
\end{thm}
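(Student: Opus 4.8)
The plan is to prove the statement exactly as the corresponding theorem was proved in Section~\ref{subsec:logicalRelationsSTLC}: by induction on the (target) typing derivation $\jtyping{\Gamma}{e}{T}$, discharging each case by invoking the compatibility lemma that matches the last rule of the derivation. All the reasoning takes place inside the LSLR logic, so step-indexing is handled implicitly; in particular the case of $\ruleTargetTFix$ uses the L\"ob rule to make the induction hypothesis available on the recursive calls, exactly as in the earlier proof of the compatibility lemma for recursive functions.

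Concretely, $\ruleTargetTConst$ and $\ruleTargetTVar$ are immediate; $\ruleTargetTAbs$ and $\ruleTargetTApp$ follow from the analogues of the STLC compatibility lemmas applied to the induction hypotheses on the immediate subderivations; $\ruleTargetTFix$ likewise, via the L\"ob rule. For $\ruleTargetTCApp$, where the conclusion term is $\capp{c}{e}$ with $\jctyping{c}{T}{U}$ supplied by the premise of the rule, we apply the two directions of the coercion-compatibility lemma (the counterpart of Lemma~\ref{lem:stlcCoercionCompat}) in turn, turning $(e,e)\in\relE{T}{T}$ from the induction hypothesis into $(\capp{c}{e},\capp{c}{e})\in\relE{U}{U}$; that lemma is itself proved by a side induction on the coercion typing derivation, whose $\ruleTargetSLift$ and $\ruleTargetSCons$ cases are the ones that exercise the mixed pure/impure clauses of the context relation.

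The remaining cases, $\ruleTargetTKAbs$ and $\ruleTargetTKApp$, are where the control-effect machinery enters. For $\ruleTargetTKAbs$ we must show $(\lam{x}{e},\lam{x}{e})$ related at $\teffarrow{\tau}{T}{U}$; unfolding the definition of $\relK{T_1}{T_2}$ at an impure type, a related pair of contexts $\iota$-reduces to the form $\inctx{E'_i}{\app{\hole}{\kappa_i}}$ with $(\kappa_1,\kappa_2)$ related as continuations and $(E'_1,E'_2)\in\relK{U}{U}$; plugging in $\lam{x}{e_i}$, absorbing the $\iota$-steps via Lemma~\ref{lem:ectxIotaObs} and the administrative $\beta$-step via Lemma~\ref{lem:stlcApproxRed} (which inserts a $\later$), we reduce to the induction hypothesis instantiated with the substitution extended by $x\mapsto\kappa_i$ and with the contexts $E'_i$. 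For $\ruleTargetTKApp$, where $\app{e}{v}$ has type $U$, we pull the frame $\app{\hole}{v}$ out of the evaluation context, observe that $\inctx{E_i}{\app{\hole}{v_i}}$ already has the syntactic shape demanded by the impure/impure clause of the context relation, and feed it to the induction hypothesis on $e$.

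I do not expect a genuine obstacle inside this theorem: given the compatibility lemmas, the induction is as mechanical as in Section~\ref{subsec:logicalRelationsSTLC}. The real difficulty, which the statement lets us assume away, lies in the compatibility lemmas for the effectful constructs---above all in reasoning about the context relation at mixed pure and impure types, where one quantifies existentially over an a priori unknown answer type and must decompose a direct-style context so that a portion of it is related, via the auxiliary relations $\relEV{\tau_1}{T_1}{\tau_2}{T_2}$ and $\relVE{\tau_1}{T_1}{\tau_2}{T_2}$, to a continuation value. This is precisely where heterogeneity is indispensable, and where the nested induction (on the step index, then on the structure of the second type, with the $\later$-guards making the non-structural recursive uses of $\relVE{\tau_1}{T_1}{\tau_2}{T}$ and $\relK{U_1}{T}$ well-founded) must be set up with care.
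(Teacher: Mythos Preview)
Your proposal is correct and follows the same approach as the paper: induction on the typing derivation, with each case discharged by the matching compatibility lemma. The paper itself gives no proof beyond stating the theorem (implicitly reusing the one-line argument from Section~\ref{subsec:logicalRelationsSTLC}), so your elaboration of what the compatibility lemmas for $\ruleTargetTKAbs$ and $\ruleTargetTKApp$ must do is additional detail rather than a different route. One small imprecision: the L\"ob rule is used \emph{inside} the compatibility lemma for $\ruleTargetTFix$, not in the fundamental-property induction itself, which merely invokes that lemma.
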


\begin{lem}[Precongruence]
If $\jctxtyping{\Gamma}{C}{T}{\tau}$ and 
$\relEop{\Gamma}{\Gamma}{e_1}{e_2}{T}{T}$,
then $(\inctx{C}{e_1}, \inctx{C}{e_2})\in\relE{\tau}{\tau}$.
\end{lem}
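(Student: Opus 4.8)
The plan is to follow the proof of Lemma~\ref{lem:stlcPrecong} almost verbatim, carrying out an induction on the derivation of the context typing judgment $\jctxtyping{\Gamma}{C}{T}{\tau}$ (equivalently, a structural induction on $C$). In the base case $C=\mtectx$ we have $\inctx{C}{e_i}=e_i$; instantiating the hypothesis $\relEop{\Gamma}{\Gamma}{e_1}{e_2}{T}{T}$ at the empty closing substitutions, which lie in $\relG{\envempty}{\envempty}$, yields $(e_1,e_2)\in\relE{T}{T}$, and the context typing judgment forces $T=\tau$, so this is the desired $(e_1,e_2)\in\relE{\tau}{\tau}$.

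For the inductive step, every way of building a single-hole context out of a smaller context $C'$ corresponds to a term former of the target calculus, and the argument applies the matching compatibility lemma to the conclusion of the induction hypothesis on $C'$. Concretely: for $C=\lam{x}{C'}$ we use the compatibility lemma for $\ruleTargetTAbs$ or for $\ruleTargetTKAbs$, according to which rule the context typing derivation ends with; for $C=\app{C'}{e_0}$ or $C=\app{e_0}{C'}$ we use the compatibility lemma for $\ruleTargetTApp$ or $\ruleTargetTKApp$; for $C=\fix{f}{x}{C'}$ the one for $\ruleTargetTFix$; and for $C=\capp{c}{C'}$ the coercion compatibility lemma (the analogue of Lemma~\ref{lem:stlcCoercionCompat} for this calculus), which applies because the context typing derivation supplies the judgment $\jctyping{c}{T'}{U'}$ for the coercion in question. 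Whenever a former has a hole-free immediate subterm $e_0$ (such as the argument in $\app{C'}{e_0}$), that subterm is well typed in the ambient environment, so the Fundamental property relates $e_0$ to itself and thereby supplies the remaining premise of the compatibility lemma. Composing the induction hypothesis through these compatibility lemmas gives $(\inctx{C}{e_1},\inctx{C}{e_2})\in\relE{\tau}{\tau}$, as required.

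Since we may assume the compatibility lemmas and the Fundamental property of this section, this is essentially the same bookkeeping over context shapes as in the pure calculus, with no additional step-index reasoning: the later-modality manipulations needed for the impure cases have been absorbed into Lemmas~\ref{lem:stlcApproxRed} and~\ref{lem:ectxIotaObs} and into the compatibility lemmas themselves. One caveat worth keeping in mind is that contextual approximation in this calculus is restricted to contexts with a pure outer answer type $\tau$, so each case must respect that restriction; this is automatic, since $\jctxtyping{\Gamma}{C}{T}{\tau}$ already carries a full typing derivation of $\inctx{C}{e}$. The genuinely delicate part---showing that $\relE{T_1}{T_2}$ is stable under the effectful application former and under coercion application, where an impure context must first be $\iota$-decomposed into an application of its hole to a related continuation---is confined to the compatibility lemmas, and that is where I would expect the real difficulty to lie.
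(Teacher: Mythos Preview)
Your proposal is correct and mirrors the paper's approach: the paper does not spell out a proof for this lemma but states that the soundness proof ``follows the same lines as in Section~\ref{subsec:logicalRelationsSTLC},'' whose Precongruence proof is precisely induction on the context typing derivation, invoking the appropriate compatibility lemma in each case and the Fundamental property for hole-free subterms, with the empty substitution handling the base case. Your enumeration of cases and the observation that all step-index and $\iota$-decomposition subtleties are already packaged into the compatibility lemmas are exactly in line with what the paper intends.
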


\begin{lem}[Adequacy]
If $(e_1, e_2)\in\relE{\tau}{\tau}$ then $\relApprox{e_1}{e_2}$.
\end{lem}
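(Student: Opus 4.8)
The plan is to mirror the proof of Lemma~\ref{lem:stlcAdequacy} almost verbatim; the only novelty in the effectful setting is the considerably richer definition of $\relK{\cdot}{\cdot}$ in Figure~\ref{fig:shiftLogRel}, but the instance we actually need is precisely the pure one, which has the same simple shape as before. First I would observe that $\relApprox{e_1}{e_2}$ is literally the same formula as $\relApprox{\inctx{\mtectx}{e_1}}{\inctx{\mtectx}{e_2}}$, since plugging into the empty evaluation context is the identity. Hence, by the hypothesis $(e_1,e_2)\in\relE{\tau}{\tau}$ instantiated with $E_1=E_2=\mtectx$, it suffices to prove $(\mtectx,\mtectx)\in\relK{\tau}{\tau}$.

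The key point is that $\tau$ is a \emph{pure} type, so $\relK{\tau}{\tau}$ is governed by the first clause of Figure~\ref{fig:shiftLogRel}, and the obligation unfolds to: for every $(v_1,v_2)\in\relV{\tau}{\tau}$ we have $\relApprox{\inctx{\mtectx}{v_1}}{\inctx{\mtectx}{v_2}}$, i.e.\ $\relApprox{v_1}{v_2}$. Unfolding $\rawRelApprox$, this says that at every index $k$, $\stopsn{k}{v_1}$ implies $\stops{v_2}$; the conclusion holds outright because $v_2$ is a value, and values terminate (in zero steps). So the whole proof reduces to this triviality, and in particular no step-index bookkeeping is required, since the conclusion $\stops{v_2}$ is index-independent.

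I expect no real obstacle here; the only thing worth flagging is \emph{why} the statement is confined to a pure type~$\tau$. For an impure type $\teffarrow{\tau}{T}{U}$ the relation $\relK{\cdot}{\cdot}$ no longer admits $\mtectx$ among the left/right projections of related contexts—every related context must $\iota$-reduce to an application of the hole to a (related) continuation—so the argument above would break down, and indeed adequacy genuinely fails in that case, as anticipated in the paragraph preceding the lemma. This is exactly why the notion of contextual approximation in the target calculus was deliberately restricted to contexts of pure answer type, so that Adequacy as stated is all that is ever needed downstream (in Precongruence together with Soundness). Consequently the proof plan is: rewrite $e_i=\inctx{\mtectx}{e_i}$, apply the hypothesis, and discharge $(\mtectx,\mtectx)\in\relK{\tau}{\tau}$ by the pure clause of $\relK{\cdot}{\cdot}$ and termination of values.
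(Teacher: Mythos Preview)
Your proposal is correct and follows exactly the approach the paper takes for the simpler calculus (Lemma~\ref{lem:stlcAdequacy}), which is implicitly carried over here: instantiate the hypothesis with the empty contexts and observe that $(\mtectx,\mtectx)\in\relK{\tau}{\tau}$ holds trivially by the pure clause, since values terminate. Your remark on why the lemma is restricted to pure $\tau$ is also accurate and matches the paper's explanation.
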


\begin{thm}[Soundness]
If $\validAt{k}{\relEop{\Gamma}{\Gamma}{e_1}{e_2}{T}{T}}$ holds for every $k$,
then $\jctxapprox{\Gamma}{e_1}{e_2}{T}$.
\end{thm}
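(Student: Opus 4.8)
The plan is to follow the proof of Theorem~\ref{thm:stlcRelSoundness} almost verbatim, now that the analogues of all the required ingredients --- Precongruence, Adequacy, and the fact that $\rawRelApprox$ is monotone and preserved by reduction --- have been re-established for the present calculus. Suppose $\jctxtyping{\Gamma}{C}{T}{\tau}$ (recall that contextual approximation in the target calculus quantifies only over contexts whose answer type $\tau$ is pure, since an expression of effectful type requires a continuation to compute and thus hardly counts as a complete program), and assume $\stops{\inctx{C}{e_1}}$; the goal is to derive $\stops{\inctx{C}{e_2}}$.

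First I would invoke Precongruence with the hypothesis $\relEop{\Gamma}{\Gamma}{e_1}{e_2}{T}{T}$, which holds at every step index, and the context typing $\jctxtyping{\Gamma}{C}{T}{\tau}$, obtaining $(\inctx{C}{e_1},\inctx{C}{e_2}) \in \relE{\tau}{\tau}$. Next, since $\tau$ is pure, I would apply the Adequacy lemma to conclude that $\validAt{k}{\relApprox{\inctx{C}{e_1}}{\inctx{C}{e_2}}}$ holds for every $k$. Finally, taking $k$ to be the number of $\beta$-steps used by the terminating reduction sequence of $\inctx{C}{e_1}$, so that $\stopsn{k}{\inctx{C}{e_1}}$ holds, and unfolding the definition of $\rawRelApprox$ at index $k$ yields $\stops{\inctx{C}{e_2}}$, which completes the proof.

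The main point to be careful about --- and the only place this argument deviates from a mechanical transcription of the simply-typed development --- is that contextual approximation, Adequacy, and hence this theorem are all restricted to pure answer types. This restriction is forced: Adequacy genuinely fails for effectful types because such terms are not complete programs, so the definition of $\jctxapprox{\cdot}{\cdot}{\cdot}{\cdot}$ must match. Everything upstream is routine but not entirely free of work: Precongruence rests on the compatibility lemmas (in particular the coercion-compatibility lemma and the Fundamental property), whose proofs in turn rely on the new lemma about $\iota$-reduction of evaluation contexts (Lemma~\ref{lem:ectxIotaObs}) and on the L\"ob and later-introduction manipulations already rehearsed in Section~\ref{subsec:logicalRelationsSTLC}.
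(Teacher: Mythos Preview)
Your proof is correct and follows essentially the same route as the paper, which explicitly says that the soundness proof proceeds along the same lines as Theorem~\ref{thm:stlcRelSoundness}: assume a context with pure answer type, apply Precongruence and then Adequacy to obtain $\validAt{k}{\relApprox{\inctx{C}{e_1}}{\inctx{C}{e_2}}}$ for all $k$, and instantiate $k$ with the number of $\beta$-steps in the terminating run of $\inctx{C}{e_1}$. Your remark that the restriction to pure answer types is forced by the failure of Adequacy at effectful types matches the paper's own discussion.
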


%%%%%%%%%%%%%%%%%%%%%%%%%
\subsection{Coherence of the CPS translation}
\label{subsec:coherenceEffects}

Although standard compatibility lemmas and coercion compatibility
suffice to prove soundness of logical relations, we need another kind
of compatibility to prove coherence, since there is another source of
ambiguity. Two typing derivations in the source language can be
different not only because of the subsumption rule, but also because
of two rules for application.

\begin{lem}[Mixed application compatibility]
The following assertions hold:
\begin{enumerate}
\item If $\relEop{\Gamma_1}{\Gamma_2}{f_1}{f_2}{
        \teffarrow{(\tarrow{\tau'_1}{\teffarrow{\tau_1}{U_4}{U_3}})}{U_2}{U_1}}{
        \tarrow{\tau'_2}{T_2}}$ \\
    and
    $\relEop{\Gamma_1}{\Gamma_2}{e_1}{e_2}{
        \teffarrow{\tau'_1}{U_3}{U_2}}{\tau'_2}$ \\
    then
    $\relEop{\Gamma_1}{\Gamma_2}{
        \lam{k}{\app{f_1}{(\lam{f}{
            \app{e_1}{(\lam{x}{
            \app{\app{\var{f}}{\var{x}}}{\var{k}}})}})}}}{\app{f_2}{e_2}}{
        \teffarrow{\tau'_1}{U_4}{U_1}}{T}$.
\vspace{1mm}
\item If $\relEop{\Gamma_1}{\Gamma_2}{f_1}{f_2}{\tarrow{\tau'_1}{T_1}}{
        \teffarrow{(\tarrow{\tau'_2}{\teffarrow{\tau_2}{U_4}{U_3}})}{U_2}{U_1}}$ \\
    and
    $\relEop{\Gamma_1}{\Gamma_2}{e_1}{e_2}{\tau'_1}{
        \teffarrow{\tau'_2}{U_3}{U_2}}$ \\
    then
    $\relEop{\Gamma_1}{\Gamma_2}{\app{f_1}{e_1}}{
        \lam{k}{\app{f_2}{(\lam{f}{
            \app{e_2}{(\lam{x}{
            \app{\app{\var{f}}{\var{x}}}{\var{k}}})}})}}}{T}{
        \teffarrow{\tau'_2}{U_4}{U_1}}$.
\end{enumerate}
\end{lem}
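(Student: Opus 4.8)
We outline the argument; only the first assertion is treated, the second being its mirror image with the two components of every relation interchanged. Unfolding $\relEop{\cdot}{\cdot}{\cdot}{\cdot}{\cdot}{\cdot}$ and absorbing a pair of related closing substitutions into the terms, it suffices to show, for closed terms with $(f_1,f_2)\in\relE{\teffarrow{(\tarrow{\tau'_1}{\teffarrow{\tau_1}{U_4}{U_3}})}{U_2}{U_1}}{\tarrow{\tau'_2}{T_2}}$ and $(e_1,e_2)\in\relE{\teffarrow{\tau'_1}{U_3}{U_2}}{\tau'_2}$, that the CPS term $\lam{k}{\app{f_1}{(\lam{f}{\app{e_1}{(\lam{x}{\app{\app{\var{f}}{\var{x}}}{\var{k}}})}})}}$ is related to $\app{f_2}{e_2}$, at the impure type carried by the former and at the type $T$ of the pure application on the right. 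Unfolding $\relE{\cdot}{\cdot}$, we fix a pair of related contexts $(E_1,E_2)$ and reduce to proving $\relApprox{\inctx{E_1}{\lam{k}{\app{f_1}{\cdots}}}}{\inctx{E_2}{\app{f_2}{e_2}}}$.

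The first move is to peel the continuation off the left-hand context. Since its left component has an impure type, whichever clause of $\relK{\cdot}{\cdot}$ applies supplies a value $\kappa$ and a residual context with $\rtcloiotared{E_1}{\inctx{E'_1}{\app{\hole}{\kappa}}}$, a matching $\iota$-decomposition of $E_2$, and relatedness data for $\kappa$ (with a portion of $E_2$ in the ``pure on the right'' clause, or with the right-hand continuation in the ``impure on the right'' clause) and for the residual contexts. We rewrite $E_1$ along this $\iota$-reduction using Lemma~\ref{lem:ectxIotaObs}, $\beta$-reduce the application of $\lam{k}{\cdot}$ to the value $\kappa$ and absorb the resulting later by Lemma~\ref{lem:stlcApproxRed}(3), apply the $\later$-introduction rule to pass to the future world---which strips the later from the auxiliary hypotheses produced by the $\relK{\cdot}{\cdot}$ clause---and finally rewrite $E_2$ along its decomposition, again by Lemma~\ref{lem:ectxIotaObs}. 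The left-hand side has now become $\inctx{E'_1}{\app{f_1}{(\lam{f}{\app{e_1}{(\lam{x}{\app{\app{\var{f}}{\var{x}}}{\kappa}})}})}}$ and the right-hand side is $\app{f_2}{e_2}$ plugged into a context read off from that decomposition.

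Now we bring $f_1$ and $f_2$ to the holes of evaluation contexts---on the left $E'_1$ followed by $\argectx{\hole}{(\lam{f}{\cdots})}$, on the right the decomposed context followed by $\argectx{\hole}{e_2}$---so that, by the hypothesis on $(f_1,f_2)$, it suffices to show these two contexts related at $\relK{\teffarrow{(\tarrow{\tau'_1}{\teffarrow{\tau_1}{U_4}{U_3}})}{U_2}{U_1}}{\tarrow{\tau'_2}{T_2}}$. The right type is an ordinary arrow and the left one is impure, so we are in the ``impure on the left, pure arrow on the right'' clause, whose existential witnesses can be read off: the left context is already an application of the hole to $\kappa'=\lam{f}{\app{e_1}{(\lam{x}{\app{\app{\var{f}}{\var{x}}}{\kappa}})}}$, the right context is already a composition with tail $\argectx{\hole}{e_2}$, and for the quantified answer type and the residual-context pair we recycle those from the opening decomposition of $(E_1,E_2)$ (the later guard coming for free by monotonicity). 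The one real obligation is that $\kappa'$ together with the relevant tail of the right context lies in the appropriate later-guarded instance of $\relVE{\cdot}{\cdot}{\cdot}{\cdot}$; unfolding $\relVE{\cdot}{\cdot}{\cdot}{\cdot}$ for a pair of related arrow values $(g_1,g_2)$ and performing one more $\beta$-step on the left, handled as before, turns this into a statement of exactly the same shape one layer deeper, with $(e_1,e_2)$ in place of $(f_1,f_2)$ and inner continuation $\lam{x}{\app{\app{g_1}{\var{x}}}{\kappa}}$ in place of $\kappa'$. Replaying the paragraph with the second hypothesis, the recursion bottoms out after one more unfolding, at which point we must relate $\app{\app{g_1}{h_1}}{\kappa}$ with $\app{g_2}{h_2}$ inside a context, for related arguments $(h_1,h_2)$: there we feed $(h_1,h_2)$ to $(g_1,g_2)$ to obtain $(\app{g_1}{h_1},\app{g_2}{h_2})\in\relE{\teffarrow{\tau_1}{U_4}{U_3}}{T_2}$, reduce to a context-relatedness goal, and discharge it using precisely the continuation/context pair that the opening decomposition of $(E_1,E_2)$ had already supplied.

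The variant in which the pure application $\app{f_2}{e_2}$ itself receives an impure type $T$ is handled the same way, except that the opening decomposition of $(E_1,E_2)$ goes through the ``both impure'' clause of $\relK{\cdot}{\cdot}$, which furnishes a right-hand continuation rather than a context portion, so the re-packaging is merely adjusted mechanically. The difficulty is organizational rather than conceptual: no individual step is hard, but one must thread the existentially quantified answer types through the three nested instances of the $\relK{\cdot}{\cdot}$ and $\relVE{\cdot}{\cdot}{\cdot}{\cdot}$ clauses and keep the later modalities synchronized---each $\beta$-step is paid with one later, and every non-structural use of $\relK{\cdot}{\cdot}$ or $\relVE{\cdot}{\cdot}{\cdot}{\cdot}$ on the right must be guarded by a later while the corresponding hypothesis is first moved into the matching future world---so that the unguarded hypotheses on $(f_1,f_2)$ and $(e_1,e_2)$ can be invoked at the right moments. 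I expect this bookkeeping to be the real work; everything else is routine.
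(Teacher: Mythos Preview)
Your proposal is correct and follows the same approach as the paper: decompose the left-hand context via the impure clause of $\relK{\cdot}{\cdot}$ to extract a continuation $\kappa$, use it to trigger the outer $\beta$-redex, and then repeatedly repackage contexts into the appropriate $\relK{\cdot}{\cdot}$/$\relVE{\cdot}{\cdot}{\cdot}{\cdot}$ clauses while invoking the two hypotheses. The paper's own proof is a three-sentence sketch that stops after obtaining $\kappa$ and declares the remainder to be ``simple context manipulations, applying definitions and performing reductions''; you have faithfully spelled out precisely those manipulations, including the recycling of the existentially quantified answer type and residual context pair at each nested level and the bookkeeping of the later modalities, so there is nothing substantively different between the two arguments.
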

\begin{proof}
Both cases are similar, so we show only the first one.  We have to
show that both terms closed by substitutions have the same
observations in related contexts $(E_1,
E_2)\in\relK{\teffarrow{\tau'_1}{U_4}{U_1}}{T}$.  Since context $E_1$
is in relation for effectful type, by the definition of logical
relations and Lemma~\ref{lem:ectxIotaObs}, it can be decomposed as a
continuation $\kappa$ and the rest of the context. Now we have the
missing continuation $\kappa$ that can trigger computation in the
first term, so the rest of the proof consists in simple context
manipulations, applying definitions and performing reductions.
\end{proof}

\begin{lem}
If $\jtree{D_i}{\jtyping{\Gamma_i}{e}{T_i}}$ for $i=1,2$
are two typing judgments for the same term $e$ of the source language,
then $\relEop{\envtrans{\Gamma_1}}{\envtrans{\Gamma_2}}{
    \exptrans{e}{D_1}}{\exptrans{e}{D_2}}{
    \typtrans{T_1}}{\typtrans{T_2}}$.
\end{lem}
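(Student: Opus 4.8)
The plan is to mirror the proof of Lemma~\ref{lem:stlcCoherenceLemma}: proceed by induction on the structure of both derivations $D_1$ and $D_2$, arranging that at least one of them is strictly decreased in every case. It is worth stressing that the statement is intentionally stronger than ``two derivations of the same judgment'': it permits distinct environments $\Gamma_1,\Gamma_2$ and distinct types $T_1,T_2$, which is exactly what lets the induction go through, since passing under a binder extends the two environments independently, and the subterms of $e$ are in general assigned unrelated types by $D_1$ and $D_2$. Heterogeneity of the logical relation is precisely what frees us from ever having to reconcile those types.

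I would first handle the cases in which one of the derivations ends with the subsumption rule $\ruleSourceTSub$. If $D_1 = \ruleSourceTSub(D_1',D')$, with $D'$ a derivation of $\jsubtype{T}{U}$, then by definition $\exptrans{e}{D_1} = \capp{c}{\exptrans{e}{D_1'}}$ with $c = \subtrans{\jsubtype{T}{U}}{D'}$; since the coercion semantics preserves types, $\jctyping{c}{\typtrans{T}}{\typtrans{U}}$. The induction hypothesis applied to the strictly smaller pair $(D_1',D_2)$ gives $\relEop{\envtrans{\Gamma_1}}{\envtrans{\Gamma_2}}{\exptrans{e}{D_1'}}{\exptrans{e}{D_2}}{\typtrans{T}}{\typtrans{T_2}}$, and one appeal to coercion compatibility (the analogue of Lemma~\ref{lem:stlcCoercionCompat}) yields the goal. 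The case in which $D_2$ ends with $\ruleSourceTSub$ is symmetric, using the second part of coercion compatibility.

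In the remaining cases neither derivation ends with $\ruleSourceTSub$, so each ends with a syntax-directed rule; since $D_1$ and $D_2$ derive types for the \emph{same} term $e$, their final rules must pertain to the same term former. For a variable, a $\lambda$-abstraction, a recursive function, a numeral, a $\rawshiftz$, or a $\rawresetz$ there is a single applicable rule, so I would apply the induction hypothesis to the pairs of subderivations for the (common) subterms of $e$---these are proper subtrees of both $D_1$ and $D_2$---and conclude by the matching compatibility lemma (all of which, including those for $\rawshiftz$, $\rawresetz$ and coercion application, hold in this setting, as noted in Section~\ref{subsec:logicalRelationsEffects}), unfolding the definition of the type translation to see that the types produced by that lemma are the required ones. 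The only term former admitting two rules is application: if $e = \app{e_1}{e_2}$, then each of $D_1,D_2$ ends with either $\ruleSourceTPApp$ or $\ruleSourceTApp$. When both end with $\ruleSourceTPApp$ I use the compatibility lemma for pure application; when both end with $\ruleSourceTApp$, the one for effectful application; and when one ends with $\ruleSourceTPApp$ and the other with $\ruleSourceTApp$, the Mixed application compatibility lemma (case~1 or case~2 according to which of $D_1,D_2$ is the pure one). In each sub-case the premises demanded by the chosen lemma are exactly what the induction hypothesis delivers for the pair of subderivations of $e_1$ and the pair of subderivations of $e_2$.

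The genuinely new point compared with the simply-typed development of Section~\ref{subsec:coherenceSTLC} is the clash between the two application rules, which is entirely absorbed by the Mixed application compatibility lemma; the real difficulty of this section lives in \emph{that} lemma's proof (biorthogonal decomposition of an effectful context into a continuation $\kappa$ and the surrounding context, up to $\iota$-reduction, followed by routine reductions), not in the present argument. Within the coherence lemma itself the only things to be careful about are the matching of final rules to compatibility lemmas and the observation---via type soundness of the coercion semantics---that the coercions produced in the subsumption cases are well typed, so that coercion compatibility is applicable.
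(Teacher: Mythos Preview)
Your proposal is correct and matches the paper's approach: induction on the pair of derivations, dispatching the $\ruleSourceTSub$ cases via coercion compatibility and type preservation of the translation, and otherwise invoking the appropriate (standard or mixed-application) compatibility lemma for the common outermost term former. The paper does not spell out this proof explicitly in Section~\ref{subsec:coherenceEffects}, but its placement immediately after the Mixed application compatibility lemma, together with the remark that the two application rules are the only additional source of ambiguity beyond subsumption, makes clear that exactly your argument is intended.
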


\begin{thm}[Coherence]
\label{thm:coherenceEffects}
If $D_1$ and $D_2$ are derivations of the same typing judgment
$\jtyping{\Gamma}{e}{T}$, then
$\jctxapprox{\envtrans{\Gamma}}{\exptrans{e}{D_1}}{\exptrans{e}{D_2}}{\typtrans{T}}$.
\end{thm}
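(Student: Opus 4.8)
The plan is to reduce this theorem, just as the Coherence theorem of Section~\ref{subsec:coherenceSTLC} was reduced to Lemma~\ref{lem:stlcCoherenceLemma}, to the logical-relations reformulation of coherence stated in the lemma immediately above together with the Soundness theorem. Given two derivations $D_1$ and $D_2$ of the judgment $\jtyping{\Gamma}{e}{T}$, that preceding lemma, taken with $\Gamma_1 = \Gamma_2 = \Gamma$ and $T_1 = T_2 = T$, supplies $\relEop{\envtrans{\Gamma}}{\envtrans{\Gamma}}{\exptrans{e}{D_1}}{\exptrans{e}{D_2}}{\typtrans{T}}{\typtrans{T}}$, i.e., validity at every step index, whereupon the Soundness theorem yields $\jctxapprox{\envtrans{\Gamma}}{\exptrans{e}{D_1}}{\exptrans{e}{D_2}}{\typtrans{T}}$ (exchanging $D_1$ and $D_2$ gives the converse approximation, so the two translations are in fact contextually equivalent). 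All of the substance thus sits in the preceding lemma.

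That lemma I would prove in the spirit of Lemma~\ref{lem:stlcCoherenceLemma}, by a simultaneous induction on the structures of $D_1$ and $D_2$ arranged so that at least one of the two derivations strictly decreases in each case, splitting on the last rule of each. If either $D_i$ ends with $\ruleSourceTSub$, I strip that rule off: its translation is a coercion application $\capp{c}{\cdot}$ with $c$ well typed by the type-preservation lemma, so the coercion compatibility lemma for this calculus (the analogue of Lemma~\ref{lem:stlcCoercionCompat}, which the text notes still holds), applied to the induction hypothesis on the strictly smaller typing subderivation, closes the case. This disposes of every configuration in which subsumption sits at the root of one side.

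In the remaining cases neither derivation ends with $\ruleSourceTSub$, so the last rule of each is fixed by the head constructor of $e$: it is unique for $\var{x}$, $\lam{x}{e'}$, $\fix{f}{x}{e'}$, $\natconst{n}$, $\shiftz{x}{e'}$ and $\resetz{e'}$, and for these I apply the matching compatibility lemma to the induction hypotheses on the corresponding subderivations. The one genuinely new situation, and what I expect to be the main obstacle, is $e = \app{e_1}{e_2}$, since each side may end with either $\ruleSourceTPApp$ or $\ruleSourceTApp$ and these rules have markedly different translations (a direct-style application versus a CPS $\lambda$-abstraction). When both sides use the same rule I apply the ordinary application compatibility lemma; when they use different rules I invoke the Mixed application compatibility lemma in whichever orientation (its part~1 or part~2) matches which of the two sides uses the pure rule, after checking, by inspection of $\typtrans{\cdot}$, that the types delivered by the induction hypotheses on $e_1$ and $e_2$ already have the shapes that lemma requires --- for instance an effectful $e_1$ receives some type $\efftype{(\tarrow{\tau_2}{\efftype{\tau_1}{U_4}{U_3}})}{U_2}{U_1}$, whose translation $\teffarrow{(\tarrow{\puretyptrans{\tau_2}}{\teffarrow{\puretyptrans{\tau_1}}{\efftyptrans{U_4}}{\efftyptrans{U_3}}})}{\efftyptrans{U_2}}{\efftyptrans{U_1}}$ is exactly of the form demanded for the CPS operand. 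The point that one and the same source term can legitimately get a pure type in $D_1$ and an effectful type in $D_2$ is precisely what forces the value and expression relations to be \emph{heterogeneous}; the argument would not even typecheck with a homogeneous relation, and making this heterogeneous bookkeeping line up across the two application translations is where the real work lies. Everything else is routine unfolding of the definitions of Figure~\ref{fig:shiftLogRel} together with administrative reductions, discharged by Lemmas~\ref{lem:stlcApproxRed} and~\ref{lem:ectxIotaObs}.
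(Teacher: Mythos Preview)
Your proposal is correct and follows the same route as the paper: the Coherence theorem is an immediate consequence of the preceding heterogeneous logical-relations lemma together with Soundness, exactly as in Section~\ref{subsec:coherenceSTLC}, and your sketch of that lemma---induction on both derivations, stripping $\ruleSourceTSub$ with coercion compatibility, dispatching syntax-directed cases with the ordinary compatibility lemmas, and handling the $\ruleSourceTPApp$/$\ruleSourceTApp$ mismatch with the Mixed application compatibility lemma---is precisely the structure the paper sets up (the paper introduces the Mixed application compatibility lemma for exactly this purpose and leaves the lemma's proof implicit).
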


%%%%%%%%%%%%%%%%%%%%%%%%%
\subsection{Coercions as \texorpdfstring{$\lambda$}{lambda}-terms}
\label{subsubsec:coercionsAsTermsEffects}

In contrast to the calculus considered in
Section~\ref{subsec:coherenceSTLC}, such a coherence theorem does not
imply coherence of the translation directly to the simply typed
$\lambda$-calculus (where coercions are expressed as
$\lambda$-terms). As a counterexample, consider the expression
$\app{(\fix{f}{x}{\app{\var{f}}{\var{x}}})}{\natconst{1}}$ and the two
derivations $D_1$ and $D_2$ presented in
Example~\ref{exa:twoDerivations}. Recall that
\vspace{2mm}
\begin{eqnarray*}
\exptrans{\app{(\fix{f}{x}{\app{\var{f}}{\var{x}}})}{\natconst{1}}}{D_1}
    & = & \app{(\fix{f}{x}{\app{\var{f}}{\var{x}}})}{\natconst{1}} \\[1mm]
\exptrans{\app{(\fix{f}{x}{\app{\var{f}}{\var{x}}})}{\natconst{1}}}{D_2}
    & = & \lam{k}{\app{
        \capp{\clift{\cid}}{(\fix{f}{x}{\app{\var{f}}{\var{x}}}})}{(\lam{g}{
        \app{\capp{\clift{\cid}}{\natconst{1}}}{(\lam{y}{
        \app{\app{\var{g}}{\var{y}}}{\var{k}}})}})}}
\end{eqnarray*}

\vspace{2mm}\noindent
The former term is a diverging computation, but the latter is
a lambda abstraction waiting for an argument (continuation).
After translation to simply typed $\lambda$-calculus,
these terms can be distinguished even by the context
$C = \app{(\lam{x}{\natconst{1}})}{\hole}$ with answer type $\tnat$.
But by Theorem~\ref{thm:coherenceEffects} these terms are equivalent.
This is because types in the target language
carry more information than simple types, and in
particular, an expression of a type $\teffarrow{\tau}{T}{U}$ is not a
usual function, but a computation waiting for a continuation, as
explained in Section~\ref{subsec:coercionSemanticsEffects}.
Computations cannot be passed as arguments, so the context $C$
is not well-typed in the target calculus.

But still we can prove some interesting properties of a direct
translation to the simply typed $\lambda$-calculus in two cases: when
control effects do not leak to the context or when we relate only
whole programs. Let $\erase{e}$ be a term $e$ with all coercions
replaced by corresponding expressions.

\begin{cor}
If $\jtree{D_1,D_2}{\jtyping{\Gamma}{e}{\tau}}$ and $\tau$ does not
contain any type of the form $\efftype{\tau'}{T}{U}$, then
$\erase{\exptrans{e}{D_1}}$ and $\erase{\exptrans{e}{D_2}}$ are
contextually equivalent.
\end{cor}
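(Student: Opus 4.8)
The plan is to pull back the coherence result of Theorem~\ref{thm:coherenceEffects}, which is stated for the target calculus with explicit coercions, to the simply typed $\lambda$-calculus in which coercions have been unfolded to $\lambda$-terms. I would reuse the three elementary facts from Section~\ref{subsubsec:coercionsAsTerms}, suitably adapted: (i) every coercion-free term or context that is well typed in the simply typed calculus can be re-typed in the target calculus, taking every arrow to be a pure arrow $\tarrow{\cdot}{\cdot}$; (ii) $e$ terminates iff $\erase{e}$ terminates; and (iii) if a context $C$ is coercion-free then $\inctx{C}{\erase{e}} = \erase{\inctx{C}{e}}$. Facts (ii) and (iii) are checked just as in the simply typed case: unfolding a coercion turns the $\iota$-rules and the $\beta$-rule for lifted values into ordinary $\beta$-reductions, so neither termination nor the commutation of plugging with erasure is disturbed.

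The point where the hypothesis on $\tau$ enters is fact (i) \emph{applied to the hole type}. Since $\tau$ contains no subterm of the form $\efftype{\tau'}{T}{U}$, the translation $\typtrans{\tau}$ is built solely from $\tnat$ and the pure arrow, so it already \emph{is} a pure target type and coincides --- as a simple type --- with its own erasure. Therefore any coercion-free simply typed context $C$ with hole type $\typtrans{\tau}$ and answer type $\sigma$ lifts, via fact (i), to a target context $\widehat{C}$ with $\jctxtyping{\envtrans{\Gamma}}{\widehat{C}}{\typtrans{\tau}}{\widehat{\sigma}}$ whose lifted answer type $\widehat{\sigma}$ is again pure (an all-pure-arrow type belongs to the grammatical category of pure types), hence $\widehat{C}$ is an admissible context for the contextual approximation of the target calculus, which only allows pure answer types. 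By contrast, for an effectful $\tau$ this lifting fails --- $\erase{\typtrans{\tau}}$ no longer lifts back to $\typtrans{\tau}$ --- which is precisely what makes the direct translation incoherent, as the counterexample above shows.

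Given all this, the argument is routine: if $\inctx{C}{\erase{\exptrans{e}{D_1}}}$ terminates, then by (iii) it equals $\erase{\inctx{\widehat{C}}{\exptrans{e}{D_1}}}$, so by (ii) the target term $\inctx{\widehat{C}}{\exptrans{e}{D_1}}$ terminates; Theorem~\ref{thm:coherenceEffects} gives $\jctxapprox{\envtrans{\Gamma}}{\exptrans{e}{D_1}}{\exptrans{e}{D_2}}{\typtrans{\tau}}$ together with its converse, whence $\inctx{\widehat{C}}{\exptrans{e}{D_2}}$ terminates, and applying (ii) and (iii) once more, so does $\inctx{C}{\erase{\exptrans{e}{D_2}}}$; the reverse implication is symmetric. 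I expect the only real obstacle to be the precise formulation of the type erasure and its section on target types, and the verification that lifting a simply typed context yields a \emph{closed} target context with a \emph{pure} answer type and hole type exactly $\typtrans{\tau}$ --- which is where the restriction on $\tau$ is used, and where one must also take care with the types occurring in $\envtrans{\Gamma}$.
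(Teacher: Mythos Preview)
The paper states this corollary without proof; the intended argument is the adaptation of the three facts from Section~\ref{subsubsec:coercionsAsTerms} that you spell out, and in that sense your proposal is exactly what the paper has in mind. You also correctly pinpoint the role of the effect-freeness hypothesis on $\tau$: it ensures that $\typtrans{\tau}$ contains no $\teffarrow{\cdot}{\cdot}{\cdot}$ and hence coincides with the pure-arrow lift of its own erasure, so the lifted hole type matches and the lifted answer type is pure.

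The caveat you raise about $\envtrans{\Gamma}$ is, however, a genuine gap rather than a bookkeeping detail. If some binding in $\Gamma$ mentions an effectful type, then $\envtrans{\Gamma}$ contains $\teffarrow{\cdot}{\cdot}{\cdot}$, and your ``all pure arrows'' lift of a simply typed context $C$ produces a target context whose hole environment is the pure-arrow image of $\erase{\envtrans{\Gamma}}$, not $\envtrans{\Gamma}$ itself --- so Theorem~\ref{thm:coherenceEffects}, which is stated relative to $\envtrans{\Gamma}$, does not apply to it. Nor can you in general re-type $C$ at hole environment $\envtrans{\Gamma}$: in the target calculus an expression of type $\teffarrow{\tau'}{T}{U}$ can never occur in argument position (the only application rule accepting an argument is $\ruleTargetTApp$, which demands a pure type; $\ruleTargetTKApp$ puts the effectful term in head position), whereas a simply typed context is free to pass a subterm of the corresponding erased arrow type as an argument. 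To close the argument you must either additionally assume that all types in $\Gamma$ are effect-free (which makes $\envtrans{\Gamma}$ coincide with its pure-arrow lift, and which the paper may well intend), or show that every simply typed context can be replaced by a termination-equivalent one that \emph{does} re-type against $\envtrans{\Gamma}$ --- for instance by $\eta$-expanding each subterm destined to carry an effectful type into a pure-arrow value $\lam{k}{\app{\cdot}{k}}$ before it is passed around.
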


\begin{cor}
If $\jtree{D_1,D_2}{\jtyping{\Gamma}{e}{\tau}}$ then
$\erase{\exptrans{e}{D_1}}$ terminates iff $\erase{\exptrans{e}{D_2}}$
terminates.  Moreover, if $\tau=\tnat$ and one of the expressions
terminates to a constant, then the other term evaluates to the same
constant.
\end{cor}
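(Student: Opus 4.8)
The plan is to reduce the statement, via the erasure $\erase{\cdot}$, to properties of the target calculus with explicit coercions of Figure~\ref{fig:shiftTarget}, where Theorem~\ref{thm:coherenceEffects} is available. To that end I would first record the three bookkeeping facts analogous to those of Section~\ref{subsubsec:coercionsAsTerms}: (i)~erasing the coercions of a well-typed target term yields a well-typed term of the simply typed $\lambda$-calculus (collapsing $\teffarrow{\tau}{T}{U}$ to $\tarrow{\tau}{T}$); (ii)~a term $t$ terminates iff $\erase{t}$ does, and when $t$ has type $\tnat$ the two reduce to the same numeral; and (iii)~$\inctx{C}{\erase{e}}=\erase{\inctx{C}{e}}$ for coercion-free $C$. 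Only~(ii) requires work: one fixes the usual $\lambda$-term images of the coercions ($\cid\mapsto\lam{x}{\var{x}}$, $\ccomp{c_1}{c_2}\mapsto\lam{x}{\app{\erase{c_1}}{(\app{\erase{c_2}}{\var{x}})}}$, and likewise for $\carrow{c_1}{c_2}$, $\clift{c}$ and $\ccons{c}{c_1}{c_2}$) and checks that every $\beta$- or $\iota$-contraction of a coerced redex is matched, up to $\beta$-reduction, by its erasure---exactly as for the calculus of Figure~\ref{fig:stlcTarget}; the refinement to numerals is immediate, since $\erase{\cdot}$ fixes $\natconst{n}$.

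For the termination equivalence, fix $\jtree{D_1,D_2}{\jtyping{\Gamma}{e}{\tau}}$ with $\tau$ a (pure) source type; we may assume $\Gamma$ empty, closing it otherwise by a context $\app{(\lam{\vec{x}}{\hole})}{\vec{v}}$ with $\vec{v}$ closed values of the translated variable types (every translated type has a closed inhabitant). By type preservation $\jtyping{}{\exptrans{e}{D_i}}{\typtrans{\tau}}$, and $\typtrans{\tau}$ is a \emph{pure} target type (its head is $\tnat$ or a plain arrow, never the CPS arrow $\teffarrow{\tau'}{T'}{U'}$), so the hole $\hole$ is a legal observing context for the contextual approximation of Section~\ref{subsubsec:targetEffects}. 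Hence Theorem~\ref{thm:coherenceEffects} gives $\stops{\exptrans{e}{D_1}}$ iff $\stops{\exptrans{e}{D_2}}$, and fact~(ii) transfers this to $\erase{\exptrans{e}{D_1}}$ and $\erase{\exptrans{e}{D_2}}$.

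For the ``same constant'' refinement take $\tau=\tnat$; then each terminating $\exptrans{e}{D_i}$ reduces to a numeral $\natconst{m_i}$ by type preservation, and we must show $m_1=m_2$. This is \emph{not} a consequence of Theorem~\ref{thm:coherenceEffects} as stated, because in the coercion-free target---just as in the bare calculus of Section~\ref{subsec:STLC}---no context separates two distinct numerals when termination is the only observation, i.e., the biorthogonal relation $\relE{\tnat}{\tnat}$ is strictly coarser than the identity on numerals. I would close the gap exactly as the paper already allows in Section~\ref{subsec:STLC}: natural-number operations can be added to all the calculi \emph{seamlessly}, without affecting any result---in particular Theorem~\ref{thm:coherenceEffects}, whose proof uses only that $\relV{\tnat}{\tnat}$ is the identity and that new reductions behave like the existing $\beta$/$\iota$-rules (the compatibility lemma for, say, a zero-test is a routine instance of the $\fix$ pattern). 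In the extended calculus there is, for each $n$, a coercion-free context $C_n$ with $\jctxtyping{}{C_n}{\tnat}{\tnat}$ that returns $\natconst{0}$ when plugged with $\natconst{n}$ and diverges otherwise (a $\fix$-loop guarded by a zero-test against $n$). If $\erase{\exptrans{e}{D_1}}$ reduces to $\natconst{n}$, then by~(ii) so does $\exptrans{e}{D_1}$, whence $\stops{\inctx{C_n}{\exptrans{e}{D_1}}}$; Theorem~\ref{thm:coherenceEffects} (which still holds in the extension) gives $\stops{\inctx{C_n}{\exptrans{e}{D_2}}}$, which forces $\exptrans{e}{D_2}$ to reduce to $\natconst{n}$, and~(ii) concludes. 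The reverse direction is symmetric.

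The main obstacle is precisely this last point: the ``same constant'' claim is genuinely \emph{not} derivable from contextual equivalence---nor from the logical relation---\emph{for this particular calculus}, whose numerals are contextually indistinguishable, so it has to be recovered through the harmless extension above (or, alternatively, via a simulation between source and target reduction, which the excerpt does not develop). Everything else---the erasure simulation behind fact~(ii) and the observation that a pure-answer-type observing context suffices---is routine and parallels Section~\ref{subsubsec:coercionsAsTerms}.
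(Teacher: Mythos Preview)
The paper states this corollary without proof, so there is nothing to compare your argument against line by line. For the termination half your route is exactly what the surrounding development invites: invoke Theorem~\ref{thm:coherenceEffects} with the trivial observing context---legal because $\typtrans{\tau}$ is a pure target type---and transport the result along the erasure simulation, mirroring Section~\ref{subsubsec:coercionsAsTerms}. That part is unproblematic.

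Your diagnosis of the ``same constant'' clause is correct and worth emphasising: in the calculus \emph{as written}, with no eliminator for $\tnat$, distinct numerals are contextually equivalent (no reduction rule inspects a numeral, so $\inctx{E}{\natconst{m}}$ and $\inctx{E}{\natconst{n}}$ terminate together for every evaluation context $E$), and hence the biorthogonal closure $\relE{\tnat}{\tnat}$ relates any two numerals even though $\relV{\tnat}{\tnat}$ is the identity. Neither Theorem~\ref{thm:coherenceEffects} nor the preceding coherence lemma therefore delivers the refinement on its own. Your fix---adjoining a zero-test or similar destructor, which Section~\ref{subsec:STLC} already advertises as seamless---is the minimal change that restores the argument, and with it your separating contexts $C_n$ do the job. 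An alternative would be to strengthen $\rawRelApprox$ (and hence adequacy) to record the returned numeral at type $\tnat$; that touches more of the development but avoids extending the syntax. Either way, the corollary as printed is leaning on an ingredient the paper leaves implicit, and you have identified it precisely.
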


%%%%%%%%%%%%%%%%%%%%%%%%%%%%%%%%%%%%%%%%%%%%%%%%%
\section{Coq formalization}
\label{sec:formalization}

%%%%%%%%%%%%%%%%%%%%%%%%%
\subsection{The library IxFree}
\label{subsec:IxFree}

Our Coq formalization accompanying this article is built on our IxFree
library that contains a shallow embedding of the LSLR logic similar to
Appel et al.'s formalization of the ``very modal
model''~\cite{Appel-POPL07} and Krebbers et al.'s Iris proof
mode~\cite{Krebbers-POPL17}.
Instead of using type \verb+Prop+ to
represent propositions, we use a special type of ``indexed
propositions'' defined as a type of monotone functions from \verb+nat+
to \verb+Prop+.

\begin{CoqCode}
\coqKW{Definition} monotone (P : nat $\to$ Prop) :=\
$\forall$ n, P (S n) $\to$ P n.
\coqKW{Definition} IProp := \UseVerb{CBrOpn} P : nat $\to$ Prop\
| monotone P \UseVerb{CBrCls}.

\coqKW{Definition} I_valid_at (n : nat) (P : IProp) :=\
proj1_sig P n.
\coqKW{Notation} "n $\models$ P" := (I_valid_at n P).
\end{CoqCode}

One of the main differences between our
library and Iris proof mode is a way of keeping track of
the assumptions. Instead of interpreting a sequent
$\varphi_1,\ldots,\varphi_n\vdash\psi$ directly, we treat it as
$k\models\psi$ with the standard Coq assumptions $k\models\varphi_1$,
\ldots, $k\models\varphi_n$. This approach is very convenient since it
allows for reusing a number of existing Coq tactics,
but it does not scale to e.g. linear logic like Iris.

Logical connectives including the later operator are functions on type
\verb+IProp+ with defined human readable notation.
The library provides lemmas and tactics representing the most
important inference rules. Tactics not only apply the corresponding
lemmas, but also hide the step index arithmetic from the user.
For instance, when proving the sequent $Q \vdash P \Rightarrow Q$
represented by the following Coq goal
\newpage
\begin{CoqCode}
P : IProp
Q : IProp
k : nat
H1 : k $\models$ Q
\coqGoal
k $\models$ P $\Rightarrow$ Q
\end{CoqCode}

\noindent
the introduction of implication tactic \verb+iintro H2+ behaves
exactly like introduction of implication rule, producing the goal

\begin{CoqCode}
P : IProp
Q : IProp
k : nat
H1 : k $\models$ Q
H2 : k $\models$ P
\coqGoal
k $\models$ Q
\end{CoqCode}

\noindent
even if the lemma corresponding to that rule 
requires quantification over all smaller indices:

\begin{CoqCode}
\coqKW{Lemma} I_arrow_intro\
\UseVerb{CBrOpn}n : nat\UseVerb{CBrCls}\
\UseVerb{CBrOpn}P Q : IProp\UseVerb{CBrCls} :
  ($\forall$ k, k $\leq$ n, (k $\models$ P) $\to$ (k $\models$ Q)) $\to$\
(n $\models$ P $\Rightarrow$ Q).
\end{CoqCode}

%%%%%%%%%%%%%%%%%%%%%%%%%
\subsection{Recursive predicates}
\label{subsec:recPredicates}

The LSLR logic allows for recursive predicates and relations, provided
all recursive occurrences are guarded by the later operator.  Such a
syntactic requirement is not compatible with structural recursion in
Coq, so we rely on the notion of
\emph{contractiveness}\cite{Appel-POPL07}.  Informally, a function is
contractive if it maps approximately equal arguments to more equal
results. This intuition can be expressed using the later modality:

\begin{CoqCode}
\coqKW{Definition} contractive (l : list Type) (f : IRel l $\to$ IRel l)
  : Prop := $\forall$ R$@1$ R$@2$, $\models$ $\later$(R$@1$ $\approx@i$ R$@2$)\
$\Rightarrow$ f R$@1$ $\approx@i$ f R$@2$.
\end{CoqCode}

\noindent
where \verb+IRel l+ is a type of indexed relations
on types described by \verb+l+,
and $\approx_i$ is an indexed version of relation equivalence.
The library provides a general method of constructing recursive relations 
as a fixed point of a contractive function:

\begin{CoqCode}
\coqKW{Definition} I_fix (l : list Type) (f : IRel l $\to$ IRel l) :
  contractive l f $\to$ IRel l.
\end{CoqCode}

\noindent
If all occurrences of the function argument are guarded by the later
operator, then the function can be proven to be contractive, and the proof
can be (mostly) automatized by the \verb+auto_contr+ tactic.

%%%%%%%%%%%%%%%%%%%%%%%%%%%%%%%%%%%%%%%%%%%%%%%%%
\section{Conclusion}
\label{sec:conclusion}

We have shown that the technique of logical relations can be used for
establishing the coherence of subtyping, when it is phrased in terms
of contextual equivalence in the target of the coercion
translation. In particular, we have demonstrated that a combination of
heterogeneity, biorthogonality and step-indexing provides a
sufficiently powerful tool for establishing coherence of effect
subtyping in a calculus of delimited control with the coercion
semantics given by a type-directed selective CPS
translation. Moreover, we have successfully applied the presented
approach also to other calculi with subtyping, e.g., as demonstrated
in this article for the simply-typed $\lambda$-calculus with
recursion. The Coq development accompanying this paper is based on a
new embedding of Dreyer et al.'s logic for reasoning about
step-indexing~\cite{Dreyer-al:LMCS11} that, we believe, considerably
improves the presentation and formalization of the logical relations.

Regarding logical relations for type-and-effect systems, there has
been work on proving correctness of a partial evaluator for
$\mathsf{shift}$ and $\mathsf{reset}$ by Asai~\cite{Asai:TFP05}, and
on termination of evaluation of the $\lambda$-calculi with
delimited-control operators by Biernacka et
al.~\cite{Biernacka-Biernacki:PPDP09,Biernacka-al:PPDP11} and by
Materzok and the first
author~\cite{Materzok-Biernacki:ICFP11}. Unsurprisingly, all these
results, like ours, are built on the notion of biorthogonality, even
if not mentioned explicitly. The distinctive feature of our
construction is a combination of heterogeneity and step-indexing that
supports reasoning about the observational equivalence of terms of
different types whose structure is very distant from each other, e.g.,
about direct-style and continuation-passing-style terms.

Logical relations presented in
Section~\ref{sec:coherenceEffectSubtyping} require step-indexing in
order to ensure well-formedness of the definition in the presence of
quantification over types. A similar problem occurs in polymorphic
$\lambda$-calculi and is usually resolved using quantification over
relations that describe semantic types. Adapting this approach to our
calculus is not straightforward, because we need quantification over a
single type, whereas the semantic types are defined for pairs of
types. An interesting question is if the step-indexing in the
relations of Section~\ref{sec:coherenceEffectSubtyping} can be avoided
by using quantification over relations.

The type systems considered in this work are monomorphic. It remains
to be investigated how the ideas presented in this article would carry
over to system F$_\leq$ and its extensions. In particular, we find it
worthwhile to develop a polymorphic type-and-effect system for
$\mathsf{shift_0}$, perhaps by marrying the type system of
Section~\ref{sec:coherenceEffectSubtyping} with Asai and Kameyama's
polymorphic type system for delimited
continuations~\cite{Asai-Kameyama:APLAS07}, along with a type-directed
selective CPS translation to system F with explicit
coercions. Establishing the coherence of the translation would again
be a crucial step in such a development.

%%%%%%%%%%%%%%%%%%%%%%%%%%%%%%%%%%%%%%%%%%%%%%%%%

\section*{Acknowledgments}
We thank Andr{\'e}s A. Aristiz{\'a}bal, Ma{\l}gorzata Biernacka, Klara
Zieli{\'n}ska, and the anonymous reviewers of TLCA 2015 and LMCS for
helpful comments on the presentation of this work.

%%%%%%%%%%%%%%%%%%%%%%%%%%%%%%%%%%%%%%%%%%%%%%%%%
%% Bibliography

\bibliographystyle{plain}

%%%%%%%%%%%%%%%%%%%%%%%%%%%%%%%%%%%%%%%%%%%%%%%%%

\end{document}